\documentclass[dvipsnames,a4paper]{article}

\usepackage[a4paper, margin=1.1in]{geometry}
\usepackage[T1]{fontenc}
\usepackage{lmodern}
\usepackage{microtype}
\usepackage{amsmath,amssymb,amsthm,mathtools,mathrsfs,bm}
\usepackage{enumitem}
\usepackage{xcolor}
\usepackage{hyperref}
\usepackage{pdflscape}
\usepackage{tikz}
\usetikzlibrary{arrows.meta,positioning,fit,calc,fadings}
\usepackage{tikz-cd}
\usepackage{tabularx}
\usepackage{booktabs}
\usepackage{authblk}
\usepackage{braket}
\usepackage{subcaption}
\numberwithin{equation}{section}
\usepackage[en-MT]{datetime2}
\DTMsetstyle{en-MT-numeric}
\usepackage[
backend=biber,
style=phys,
maxbibnames=99,
doi=true,
eprint=true,
giveninits=true
]{biblatex}
\definecolor{myblue}{HTML}{0251D9}
\definecolor{mygreen}{HTML}{10C902}
\definecolor{myred}{HTML}{D90251}
\hypersetup{
	colorlinks=true,
	linkcolor=NavyBlue,
	citecolor=Green,
	urlcolor=OrangeRed
}

\newtheorem{theorem}{Theorem}[section]
\newtheorem{prop}[theorem]{Proposition}

\newtheorem{cor}[theorem]{Corollary}
\newtheorem{corollary}[theorem]{Corollary}
\theoremstyle{definition}
\newtheorem{definition}[theorem]{Definition}

\theoremstyle{remark}
\newtheorem{remark}[theorem]{Remark}

\title{An exact spacetime polymer gas for finite-temperature $\mathbb Z_N$ homological quantum code}

\author[1,2]{Nafiz Ishtiaque\thanks{\texttt{nafiz@simis.cn}}}
\author[3]{Shanto Chakroborty}

\affil[1]{Center for Mathematics and Interdisciplinary Sciences, Fudan University, Shanghai, China}
\affil[2]{Shanghai Institute for Mathematics and Interdisciplinary Sciences (SIMIS), Shanghai, China}
\affil[3]{Department of Theoretical Physics, University of Dhaka, Dhaka, Bangladesh}

\date{}

\newcommand{\Hom}{\mathrm{Hom}}
\newcommand{\Lk}{\mathrm{Lk}}
\newcommand{\tr}{\mathrm{tr}}
\newcommand{\Supp}{\mathrm{Supp}}
\newcommand{\matrixel}[3]{\left\langle #1 \middle| #2 \middle| #3 \right\rangle}
\newcommand{\wt}{\mathrm{wt}}
\newcommand{\Sus}{\operatorname{Sus}}

\begin{document}
	\maketitle

\begin{abstract}
	We study finite-temperature $P$-form $\mathbb Z_N$ homological codes via an exact finite-Trotter quantum-to-classical map to a $(d+1)$-dimensional spacetime model with electric and magnetic topological background charges. The resulting background-resolved partition functions admit an exact reformulation in terms of closed magnetic and electric defect polymers, with opposite-species interactions governed by linking phases. By bounding this complex polymer gas by positive same-species hard-core majorant gases, we obtain an explicit low-activity criterion under which all background-dependent partition functions are uniformly controlled and homologically nontrivial polymers are exponentially suppressed on the scale of the spacetime systole. We also derive an exact higher-form Kramers-Wannier duality exchanging electric and magnetic backgrounds, Wilson and 't Hooft operators, and $P$-form and $(d-P)$-form theories. Finally, for prime $N$, we identify an exact source-free gauge-theory specialization coupled to the plaquette random-cluster model, which imports sharp phase-transition results on special geometries into the spacetime framework.
\end{abstract}
	
	\tableofcontents

\section{Introduction}
\label{sec:introduction}

Homological quantum codes are quantum many-body systems whose ground-state
degeneracy is organized by topology. Physical degrees of freedom are assigned to
cells of a lattice, stabilizers are constructed from cellular boundary and
coboundary operators, and logical operators are represented by nontrivial
homology classes
\cite{Kitaev:1997wr,freedman2002,DennisKitaevLandahlPreskill2002,
	BombinMartinDelgado2007,Terhal:2013vbm}. At zero temperature this gives a
topological encoding of quantum memory. At finite temperature, however, the same
topological structure becomes intertwined with statistical mechanics: thermal
excitations form extended defects, and defect histories that are homologically
nontrivial can alter the encoded data. This interplay between topological
quantum memory, lattice gauge theory, duality, and defect expansions has long
been central to the study of finite-temperature stability
\cite{Wegner:1971app,Savit1980,DennisKitaevLandahlPreskill2002,
	WangHarringtonPreskill2003,Nussinov:2007srh,Castelnovo:2007byc,
	Castelnovo:2008jyy,Bravyi:2008rvr,Bravyi:2009ert,Hastings:2011iig,
	Landon-Cardinal:2012nio,Hastings:2013jxa,Brown:2014idi,Alicki:2008nak,Chesi:2009woi}.

In this paper we develop an exact finite-temperature spacetime framework for
\(P\)-form \(\mathbb Z_N\) homological codes on a finite cell decomposition
\(\Lambda\) of a closed oriented \(d\)-manifold. We allow general local \(X\)- and \(Z\)-type source terms and study the thermal
ensemble itself, rather than the effective disordered statistical-mechanical
models used in decoding-threshold analyses and quenched disorder averages
\cite{DennisKitaevLandahlPreskill2002,WangHarringtonPreskill2003,Nishimori1981}. Our central objects are
background-resolved spacetime partition functions obtained from the thermal
trace by inserting Wilson and 't Hooft operators together with Euclidean twists
around the thermal circle. At fixed Trotter number \(M\), this decorated trace
admits an exact quantum-to-classical reformulation
(Prop.~\ref{prop:ZtwistedSpatial}) as a classical model on the spacetime lattice
\begin{equation}
	\underline\Lambda=\Lambda\times (\mathbb Z/M\mathbb Z),
\end{equation}
with electric and magnetic spacetime backgrounds determined by the chosen
insertions. The finite-temperature problem is therefore organized not by a
single partition function but by a family of background-dependent spacetime
partition functions indexed by spacetime homology classes, simultaneously
encoding ordinary spatial logical operators and twists winding around Euclidean
time. This part of the construction holds for arbitrary \(N\ge 2\).

The main structural development of the paper is an exact reformulation of these
spacetime partition functions in terms of extended defects. Expanding the local weights
around the flat-field regime produces local magnetic and electric defect
variables, and summing over the spacetime field forces them to assemble into
closed defects whose total homology classes are fixed by the chosen electric and
magnetic backgrounds. This closed-defect expansion can be
expressed in terms of a generalized electric-magnetic linking pairing, and for prime $N$,
decomposing the defects into support-connected components yields an exact
two-species polymer gas of connected closed electric and magnetic polymers
(Prop.~\ref{prop:closed-defect-gas-fixed-charge},
Prop.~\ref{prop:fourier-resolved-polymer-gas}). This polymer description is the
main framework developed here. It resolves the finite-temperature ensemble into
spacetime topological sectors, rewrites thermal fluctuations in terms of closed
extended objects, and provides a common language for perturbative analysis and
exact duality. In this picture, the thermally dangerous excitations are exactly
the homologically nontrivial connected polymers.

A first application is perturbative control near the flat-field, or code, limit
(Thm.~\ref{thm:low-activity-region}). The exact two-species polymer gas is
generally complex: opposite-species polymers interact through pure linking
phases rather than positive Boltzmann weights. We control it by comparing its
absolute value to a product of two positive same-species hard-core polymer gases
and applying the criteria of Koteck\'y-Preiss,
Fern\'andez-Procacci, and Bissacot-Fern\'andez-Procacci
\cite{KoteckyPreiss1986,FernandezProcacci2007,
	BissacotFernandezProcacci2010,Ueltschi2004}. This yields an explicit
low-activity region in which all background-dependent partition functions are
uniformly controlled. In that regime, large connected polymers are
exponentially suppressed, and homologically nontrivial polymers are suppressed
on the scale of the corresponding spacetime systole
(Cor.~\ref{cor:physical-occupation-tails}). The weakest point in the resulting
region boundary is a purely combinatorial counting problem for rooted connected closed
polymers, so improvements in that counting problem would directly sharpen the
rigorous low-activity region.

A second general output is an exact higher-form Kramers-Wannier duality
(Prop.~\ref{prop:KW-fixed-charge-duality},
Thm.~\ref{thm:KW-Trotter-duality}). The derivation is again exact and local: one
Fourier-transforms the cell weights and reinterprets the Fourier variables as
fields on the dual spacetime lattice. The result exchanges the \(P\)-form
theory on \(\underline\Lambda\) with a \((d-P)\)-form theory on the dual lattice
\(\underline\Lambda^\vee\), exchanges electric and magnetic backgrounds, and at
the level of the quantum code interchanges \(X\)- and \(Z\)-type stabilizers,
local Weyl operators, and Wilson and 't Hooft logical operators. On self-dual
lattices this becomes an exact \(\mathbb Z_2\)-symmetry of the full spacetime
theory-space. This places the construction in the
higher-form Kramers-Wannier/Wegner duality tradition
\cite{KramersWannier1941,Wegner:1971app,Savit1980,
	ElitzurPearsonShigemitsu1979,UkawaWindeyGuth1980,Cardy1982,
	CardyRabinovici1982,CobaneraOrtizNussinov2011,FreedTeleman2022,
	DelcampIshtiaque2024}. Unlike the linking-based polymer reformulation, this duality holds for
arbitrary \(N\ge 2\).

The general spacetime framework is broader than a positive lattice model: it
allows arbitrary local weights, complex phases, and fixed electric and magnetic
backgrounds. It nevertheless contains natural exact positive slices. The most
important is obtained by removing the \(P\)-cell weights and retaining flatness
weights on \((P+1)\)-cells, which yields \(P\)-form \(N\)-state Potts lattice
gauge theory \cite{Wegner:1971app,FradkinShenker1979,BanksRabinovici1979}. For
prime \(N\), this gauge theory is exactly coupled to the plaquette random-cluster
model by a higher-form Fortuin-Kasteleyn/Edwards-Sokal construction
\cite{FortuinKasteleyn1972,EdwardsSokal1988,HiraokaShirai2016,
	DuncanSchweinhartTopological,DuncanSchweinhartDeconfinement}. This
gauge/PRCM slice provides a sharp application of the general framework. In
particular, for the middle-dimensional self-dual toric geometry
\(\underline\Lambda=T_L^{\,2(P+1)}\), Duncan-Schweinhart proved for odd prime
\(N\) that the associated \((P+1)\)-dimensional PRCM undergoes a sharp
homological phase transition at the self-dual probability
\(p_{\mathrm{sd}}(N)=\frac{\sqrt N}{1+\sqrt N}\)
\cite{DuncanSchweinhartTopological}. Through the exact gauge/PRCM coupling, this
gives a corresponding benchmark on the gauge-theory slice and, in particular, a
gauge-theoretic interpretation of the critical coupling in terms of macroscopic
toric sectors and value-locking of suitably separated toric observables
(Prop.~\ref{prop:toric-gauge-consequences}).

Taken together, these results provide an exact finite-temperature sector
decomposition for homological codes and related lattice models. The thermal
problem is reorganized in terms of background-resolved spacetime partition functions,
closed defect gases, polymer expansions, and exact duality. We expect this
framework to be useful beyond the specific gauge/PRCM application developed
here, both for sharper perturbative questions and for further nonperturbative
specializations.

The paper is organized as follows. In \S\ref{sec:quantum-classical} we define
the quantum model, the Wilson/'t Hooft and Euclidean twist insertions, and the
exact decorated quantum-to-classical map. In
\S\ref{sec:fixed-charge-polymer-gas} we derive the local defect expansion, the
closed-defect gas, and the connected polymer representation. In
\S\ref{sec:low-activity} we identify the low-activity region and prove the
exponential suppression of large and homologically nontrivial polymers. In
\S\ref{sec:KW-duality} we derive the exact higher-form Kramers-Wannier duality
and its action on the quantum code. Finally, \S\ref{sec:specializations}
identifies the exact gauge/PRCM specialization and develops the
middle-dimensional toric application. The appendices collect the finite Fourier
conventions, lattice geometry, generalized linking pairing, and Trotter-kernel
computations used in the main text.

\paragraph{Acknowledgment.}
We thank the ICTP Physics Without Frontiers Bangladesh Program for initiating the collaboration that resulted in this paper. The work of NI was supported by the Research Start-up Fund of the Shanghai Institute for Mathematics and Interdisciplinary Sciences (SIMIS).

\section{Quantum model and decorated quantum-to-classical map}
\label{sec:quantum-classical}

In this section we pass from the quantum \(\mathbb Z_N\) homological code on the spatial
lattice \(\Lambda\) to an exact finite-\(M\) spacetime model on
\begin{equation}
\underline\Lambda=\Lambda\times S^1_M \qquad \text{where,} \qquad S_M^1 := \mathbb Z/M\mathbb Z.
\end{equation}
We define the quantum model, the spatial
Wilson and 't Hooft operators, the Euclidean twist operators, and the resulting spacetime
background partition functions.

Throughout, \(\Lambda\) is a finite cell decomposition of a closed
oriented \(d\)-manifold, and the code degrees of freedom live on the \(P\)-cells of
\(\Lambda\). For a cell complex \(X\), we write \(C_p(X)\) for its oriented
\(p\)-cells, \(\mathcal C_p(X)\) for its \(p\)-chains, and \(\Omega^p(X)\) for its
\(p\)-cochains, as reviewed in Appendix~\ref{app:geometryBasics}. All computations of this section, culminating in Prop.~\ref{prop:ZtwistedSpatial}, work for any integer $N \ge 2$.

\subsection{Code Hamiltonian with generic sources}

To each \(P\)-cell \(b\in C_P(\Lambda)\) we assign an \(N\)-dimensional Hilbert space
\(\mathcal H_b\cong\mathbb C^N\), and define
\begin{equation}
	\mathcal H
	:=
	\bigotimes_{b\in C_P(\Lambda)}\mathcal H_b .
	\label{eq:H-total}
\end{equation}
Fix an orthonormal basis \(\{\ket{j}\}_{j\in\mathbb Z_N}\) of \(\mathbb C^N\), and set
\begin{equation}
\omega:=e^{2\pi i/N}.
\end{equation}
On \(\mathbb C^N\), define the shift and clock operators by
\begin{equation}
	\hat X\ket{j}=\ket{j+1},
	\qquad
	\hat Z\ket{j}=\omega^j\ket{j},
	\qquad
	j\in\mathbb Z_N.
	\label{eq:XZ}
\end{equation}
They obey
\begin{equation}
	\hat Z^r\hat X^s=\omega^{rs}\hat X^s\hat Z^r,
	\qquad
	\hat X^N=\hat Z^N=\hat 1,
	\qquad
	\hat X^\dagger=\hat X^{-1},
	\qquad
	\hat Z^\dagger=\hat Z^{-1}.
	\label{eq:Weyl}
\end{equation}
For any operator \(\hat O\) on \(\mathbb C^N\), we write \(\hat O_b\) for the operator on
\(\mathcal H\) acting as \(\hat O\) on \(\mathcal H_b\) and as the identity on all other
tensor factors. We also define the Kronecker delta and the fundamental character of $\mathbb Z_N$:
\begin{equation}
	\delta_N(x)
	:=
	\begin{cases}
		1,& x=0\in\mathbb Z_N,\\
		0,& x\neq0\in\mathbb Z_N,
	\end{cases}
	\qquad
	\chi(x):=\omega^x .
	\label{eq:delta-and-character}
\end{equation}

For \(a\in C_{P-1}(\Lambda)\) and \(c\in C_{P+1}(\Lambda)\), define the stabilizer operators
\begin{equation}
	\hat A_a
	:=
	\prod_{b\supset a}\hat X_b^{\,(-1)^P\varepsilon(b,a)},
	\qquad
	\hat B_c
	:=
	\prod_{b\subset c}\hat Z_b^{\,\varepsilon(c,b)} .
	\label{eq:AB}
\end{equation}
The sign \((-1)^P\) in \(\hat A_a\) is the convention compatible with the vertical
component of the spacetime coboundary \(D\) on \(\underline\Lambda=\Lambda\times S^1_M\) (see \eqref{eq:D-components}). The incidence identity \eqref{eq:incidence-identity} (equivalent to \(\partial^2=0\)) implies
that all \(\hat A_a\) commute with all \(\hat B_c\), and the two families are separately
commuting.

Define the stabilizer projectors
\begin{equation}
	\hat{\mathcal A}_a
	:=
	\frac1N\sum_{m\in\mathbb Z_N}\hat A_a^m,
	\qquad
	\hat{\mathcal B}_c
	:=
	\frac1N\sum_{m\in\mathbb Z_N}\hat B_c^m.
	\label{eq:projectors}
\end{equation}
Let \(J_a>0\) and \(K_c>0\) be couplings on \((P-1)\)- and \((P+1)\)-cells, and let
\(g_b^{(n)}\), \(h_b^{(n)}\) be local \(X\)- and \(Z\)-source coefficients on \(P\)-cells.
We define
\begin{equation}
	\hat H_0
	=
	-\sum_{a\in C_{P-1}(\Lambda)}J_a\hat{\mathcal A}_a
	-\sum_{c\in C_{P+1}(\Lambda)}K_c\hat{\mathcal B}_c,
	\label{eq:H0}
\end{equation}
\begin{equation}
	\hat H_1
	=
	-\sum_{b\in C_P(\Lambda)}
	\sum_{n\in\mathbb Z_N}
	\Bigl(
	g_b^{(n)}\hat X_b^n+h_b^{(n)}\hat Z_b^n
	\Bigr),
	\label{eq:H1}
\end{equation}
and the total Hamiltonian
\begin{equation}
	\hat H:=\hat H_0+\hat H_1 .
	\label{eq:H}
\end{equation}
Another separation of the Hamiltonian is into its $X$- and $Z$-dependent parts: 
\begin{equation}
	\hat H=\hat H_x+\hat H_z,
\end{equation}
with
\begin{equation}
	\hat H_x
	=
	-\sum_{a\in C_{P-1}(\Lambda)}J_a\hat{\mathcal A}_a
	-\sum_{b\in C_P(\Lambda)}
	\sum_{n\in\mathbb Z_N}g_b^{(n)}\hat X_b^n,
	\label{eq:Hx}
\end{equation}
and
\begin{equation}
	\hat H_z
	=
	-\sum_{c\in C_{P+1}(\Lambda)}K_c\hat{\mathcal B}_c
	-\sum_{b\in C_P(\Lambda)}
	\sum_{n\in\mathbb Z_N}h_b^{(n)}\hat Z_b^n.
	\label{eq:Hz}
\end{equation}
All terms within \(\hat H_x\) commute with one another, and all terms within \(\hat H_z\)
commute with one another. The formulas below make sense for complex source parameters. If
one wants \(\hat H\) Hermitian, one should impose
\begin{equation}
g_b^{(n)}=\overline{g_b^{(-n)}},
\qquad
h_b^{(n)}=\overline{h_b^{(-n)}}.
\end{equation}

\subsection{Spatial Wilson and 't Hooft operators}
\label{sec:spatial-topological-operators}

In the absence of sources, i.e., with $g=h=0$, the code subspace is
\begin{equation}
	\mathcal H_{\mathrm{code}}
	:=
	\bigcap_{a\in C_{P-1}(\Lambda)}\ker(\hat A_a-\hat 1)
	\cap
	\bigcap_{c\in C_{P+1}(\Lambda)}\ker(\hat B_c-\hat 1).
	\label{eq:code-subspace}
\end{equation}

Let $\nu=\sum_{b\in C_P(\Lambda)}\nu_b\,b\in Z_P(\Lambda)$ be a spatial \(P\)-cycle and $\mu^\vee\in Z_{d-P}(\Lambda^\vee)$ be a $(d-P)$-cycle of the dual lattice. Then Wilson and 't Hooft operators supported on $\nu$ and $\mu^\vee$ respectively can be defined as\footnote{The sign in the 't Hooft operator was chosen so that it will result in a clean cocycle shift in the classical spacetime model based on our conventions for lattice differentials (see \eqref{eq:Zsp}).}
\begin{align}
	\hat{\mathcal W}_\nu
	:=
	\prod_{b\in C_P(\Lambda)}\hat Z_b^{\,\nu_b},
	\qquad
	\hat{\mathcal T}_{\mu^\vee}
	:=
	\prod_{b\in C_P(\Lambda)}
	\hat X_b^{\,(-1)^P I_\Lambda(\mu^\vee,b)}.
	\label{eq:quantumWT}
\end{align}
Some well-known algebraic properties of these operators are
\begin{prop}
	\label{prop:spatial-WT}
	Let \(\nu\in Z_P(\Lambda)\) and
	\(\mu^\vee\in Z_{d-P}(\Lambda^\vee)\).
	\begin{enumerate}[label=(\roman*)]
		\item\label{prop:WTpreserves} The operators \(\hat{\mathcal W}_\nu\) and
		\(\hat{\mathcal T}_{\mu^\vee}\) preserve the code subspace
		\(\mathcal H_{\mathrm{code}}\).
		
		\item\label{prop:Wtrivial} If \(\nu\in B_P(\Lambda)\), then \(\hat{\mathcal W}_\nu\) is a product of
		\(\hat B_c\)-stabilizers and hence acts trivially on
		\(\mathcal H_{\mathrm{code}}\). Therefore \(\hat{\mathcal W}_\nu\) depends only on
		\([\nu]\in H_P(\Lambda)\) when restricted to the code subspace.
		
		\item\label{prop:Ttrivial} If \(\mu^\vee\in B_{d-P}(\Lambda^\vee)\), then
		\(\hat{\mathcal T}_{\mu^\vee}\) is a product of \(\hat A_a\)-stabilizers and hence
		acts trivially on \(\mathcal H_{\mathrm{code}}\). Therefore
		\(\hat{\mathcal T}_{\mu^\vee}\) depends only on
		\([\mu^\vee]\in H_{d-P}(\Lambda^\vee)\) when restricted to the code subspace.
		
		\item\label{prop:WTcommutator} The mixed commutator is
		\begin{equation}
			\hat{\mathcal T}_{\mu^\vee}\hat{\mathcal W}_\nu
			=
			\omega^{-(-1)^P I_\Lambda(\mu^\vee,\nu)}
			\hat{\mathcal W}_\nu\hat{\mathcal T}_{\mu^\vee}.
		\end{equation}
	\end{enumerate}
\end{prop}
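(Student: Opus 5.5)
The plan is to reduce every item to the single Weyl relation \eqref{eq:Weyl}, applied cell by cell, together with the chain-level identities $\partial^2=0$ (the incidence identity \eqref{eq:incidence-identity}) and the duality between $\Lambda$ and $\Lambda^\vee$ encoded in the intersection pairing $I_\Lambda$. The basic tool is one commutation formula. For Weyl monomials $\hat Z^{\alpha}:=\prod_b \hat Z_b^{\alpha_b}$ and $\hat X^{\beta}:=\prod_b\hat X_b^{\beta_b}$, with $\alpha,\beta$ arbitrary $\mathbb Z_N$-valued functions on $C_P(\Lambda)$, we have
\begin{equation}
\hat Z^{\alpha}\hat X^{\beta}=\omega^{\sum_b\alpha_b\beta_b}\,\hat X^{\beta}\hat Z^{\alpha}.
\end{equation}
This follows because operators on distinct tensor factors commute. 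Every claim then becomes a statement about a pairing $\sum_b\alpha_b\beta_b$ vanishing mod $N$, or about a product of stabilizers.

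For \ref{prop:WTpreserves}, it is enough to show that $\hat{\mathcal W}_\nu$ and $\hat{\mathcal T}_{\mu^\vee}$ commute with every $\hat A_a$ and $\hat B_c$; they then preserve each joint eigenspace, and in particular $\mathcal H_{\rm code}$.
\begin{itemize}
\item $\hat{\mathcal W}_\nu$ commutes with all $\hat B_c$ trivially, since both are $Z$-type.
\item For $\hat{\mathcal W}_\nu$ against $\hat A_a$, the phase exponent is $\pm\sum_{b\supset a}\nu_b\,\varepsilon(b,a)$. This is the coefficient of $a$ in $\partial\nu$, which vanishes because $\nu\in Z_P(\Lambda)$.
\item $\hat{\mathcal T}_{\mu^\vee}$ commutes with all $\hat A_a$ trivially, since both are $X$-type.
\item For $\hat{\mathcal T}_{\mu^\vee}$ against $\hat B_c$, the exponent is $\pm\sum_{b\subset c}\varepsilon(c,b)\,I_\Lambda(\mu^\vee,b)=\pm I_\Lambda(\mu^\vee,\partial c)$. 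Using the adjunction of the intersection pairing under $\partial$ (from Appendix~\ref{app:geometryBasics}), this equals $\pm I_\Lambda(\partial\mu^\vee,c)$ up to sign, and it vanishes because $\mu^\vee$ is a cycle.
\end{itemize}

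For \ref{prop:Wtrivial}, write $\nu=\partial\sigma$ with $\sigma=\sum_c\sigma_c c$. Then $\prod_c\hat B_c^{\sigma_c}=\prod_b\hat Z_b^{\sum_c\sigma_c\varepsilon(c,b)}=\hat{\mathcal W}_{\partial\sigma}$, where we reorder freely since all factors are $Z$-type. On $\mathcal H_{\rm code}$ each $\hat B_c$ acts as $\hat 1$, so $\hat{\mathcal W}_\nu$ acts trivially there. Multiplicativity $\hat{\mathcal W}_{\nu+\nu'}=\hat{\mathcal W}_\nu\hat{\mathcal W}_{\nu'}$ then gives dependence only on $[\nu]$.

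Item \ref{prop:Ttrivial} is the dual argument, with $\mu^\vee=\partial\tau^\vee$ and $\tau^\vee$ a $(d-P+1)$-chain on $\Lambda^\vee$. Dual $(d-P+1)$-cells correspond to $(P-1)$-cells $a$ of $\Lambda$. The needed identity is
\begin{equation}
(-1)^P I_\Lambda(\partial a^\vee,b)=\pm(-1)^P\varepsilon(b,a),
\end{equation}
that is, the boundary of a dual cell $a^\vee$ consists of the duals of the $P$-cells containing $a$, with the incidence signs matched. Granting this, $\prod_a\hat A_a^{\tau_a}=\hat{\mathcal T}_{\partial\tau^\vee}$, possibly up to a global sign convention that can be absorbed by replacing $\tau\to-\tau$.

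Item \ref{prop:WTcommutator} is immediate from the displayed Weyl formula. Setting $\alpha=\nu$ and $\beta_b=(-1)^P I_\Lambda(\mu^\vee,b)$ gives
\begin{equation}
\hat{\mathcal W}_\nu\hat{\mathcal T}_{\mu^\vee}=\omega^{(-1)^P I_\Lambda(\mu^\vee,\nu)}\hat{\mathcal T}_{\mu^\vee}\hat{\mathcal W}_\nu,
\end{equation}
by bilinearity of $I_\Lambda$. Inverting this yields the stated form.

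The main obstacle is the sign bookkeeping in \ref{prop:Ttrivial} (and the adjunction used in \ref{prop:WTpreserves}): checking that $I_\Lambda(\partial a^\vee,\cdot)$ reproduces exactly $\varepsilon(\cdot,a)$ with the orientation conventions of Appendix~\ref{app:geometryBasics}. I would take the dual-cell orientation convention defining $I_\Lambda$ and verify the relation $\partial^\vee\leftrightarrow\pm\delta$ directly from that convention. Any residual global sign is harmless, since a product of stabilizers with all exponents negated is still a product of stabilizers.
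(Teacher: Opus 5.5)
Your proposal is correct and follows the paper's proof essentially step for step. The common route is: commutation with all $\hat A_a,\hat B_c$ via the Weyl relation, exhibiting $\hat{\mathcal W}_{\partial\sigma}$ and $\hat{\mathcal T}_{\partial\tau^\vee}$ as explicit stabilizer products, and the cellwise Weyl phase for item (iv). The sign bookkeeping you flag as the main obstacle is already fixed by the dual-incidence convention \eqref{eq:dual-incidence}. That convention gives $d\vartheta_\Lambda=\vartheta_\Lambda\partial$ (Prop.~\ref{prop:intertwine}), and hence \eqref{eq:intersection-Stokes} and $I_\Lambda(\partial a^\vee,b)=\varepsilon(b,\theta_\Lambda(a^\vee))$ exactly, with no residual sign, so the paper obtains $\hat{\mathcal T}_{\partial\tau^\vee}=\prod_a\hat A_a^{\tau_a}$ directly.
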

These follow from elementary algebra, we reproduce the proof in Appendix~\ref{app:WTalgebra} for completion.

\subsection{Trotter weights and neutral spacetime model}
\label{sec:trotter-weights}

The thermal partition function at inverse temperature \(\beta\) is
\begin{equation}
	Z
	:=
	\tr_{\mathcal H}\left(e^{-\beta\hat H}\right).
	\label{eq:Z-quantum-def}
\end{equation}
For \(M\ge1\), define the Trotterized partition function
\begin{equation}
	Z_M
	:=
	\tr_{\mathcal H}
	\left[
	\left(
	e^{-\beta\hat H_z/M}e^{-\beta\hat H_x/M}
	\right)^M
	\right],
	\qquad
	Z=\lim_{M\to\infty}Z_M.
	\label{eq:ZM-def}
\end{equation}
The limit equation is an application of the Lie/Trotter product formula. All identities below are exact at fixed \(M\).

We define local weight functions which we shall use to write traces such as \eqref{eq:ZM-def}. For horizontal
\((P+1)\)-cells \(c(i)=c\times\{i\}\) and horizontal \(P\)-cells \(b(i)=b\times\{i\}\), set
\begin{subequations}
	\begin{align}
		W_{c(i)}^{\parallel}:\mathbb Z_N \to&\; \mathbb C,
		&\qquad
		W_{c(i)}^{\parallel}(x)
		&:=
		\exp\left(\frac{\beta K_c}{M}\delta_N(x)\right)
		=
		1+\left(e^{\beta K_c/M}-1\right)\delta_N(x),
		\label{eq:W-parallel}
		\\
		V_{b(i)}^{\parallel}:\mathbb Z_N \to&\; \mathbb C,
		&
		V_{b(i)}^{\parallel}(x)
		&:=
		\exp\left(
		\frac{\beta}{M}
		\sum_{n\in\mathbb Z_N}h_b^{(n)}\omega^{nx}
		\right).
		\label{eq:V-parallel}
	\end{align}
	\label{eq:parallel-weights}
\end{subequations}
For vertical \(P\)-cells \(\underline a(i)=a\times[i,i+1]\) and vertical
\((P+1)\)-cells \(\underline b(i)=b\times[i,i+1]\), set
\begin{subequations}
	\begin{align}
		V_{\underline a(i)}^{\perp}:\mathbb Z_N \to&\; \mathbb C,
		&\qquad
		V_{\underline a(i)}^{\perp}(x)
		&:=
		\delta_N(x)+\frac{e^{\beta J_a/M}-1}{N},
		\label{eq:V-perp}
		\\
		W_{\underline b(i)}^{\perp}:\mathbb Z_N \to&\; \mathbb C,
		&
		W_{\underline b(i)}^{\perp}(x)
		&:=
		\sum_{j\in\mathbb Z_N}
		\exp\left(
		\frac{\beta}{M}
		\sum_{n\in\mathbb Z_N}g_b^{(n)}\omega^{nj}
		\right)
		\frac{\omega^{(-1)^Pjx}}{N}.
		\label{eq:W-perp}
	\end{align}
	\label{eq:perp-weights}
\end{subequations}
When no confusion is possible, we write simply \(W_c\) for either \(W_c^\parallel\) or
\(W_c^\perp\), depending on whether \(c\in C_{P+1}(\underline\Lambda)\) is horizontal or vertical,
and similarly \(V_u\) for \(u\in C_P(\underline\Lambda)\). Fig.~\ref{fig:space2spacetime} shows how cells of the spatial lattice transform into horizontal and vertical cells of the spacetime lattice.

\begin{prop}[Neutral thermal partition function] The finite $M$ partition function $Z_M$ defined in \eqref{eq:ZM-def} is given by
	\begin{equation}
		Z_M
		=
		\sum_{\phi\in\Omega^P(\underline\Lambda)}
		\prod_{c\in C_{P+1}(\underline\Lambda)}
		W_c\left((D\phi)_c\right)
		\prod_{u\in C_P(\underline\Lambda)}
		V_u(\phi_u).
		\label{eq:Z-0}
	\end{equation}
\end{prop}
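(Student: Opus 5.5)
We will evaluate the Trotterized trace \eqref{eq:ZM-def} by inserting resolutions of the identity in the clock basis $\ket{\phi(i)}$, $\phi(i)\in\Omega^P(\Lambda)$, at each time slice $i\in\mathbb Z/M\mathbb Z$. The spatial configurations $\phi(i)$ will become the horizontal components $\phi_{b(i)}$ of a spacetime cochain. We then need to express each Trotter factor as a product of local weights of the spacetime coboundary $D\phi$ and of $\phi$ itself. Since all terms in $\hat H_z$ commute and are diagonal in the clock basis, $e^{-\beta\hat H_z/M}$ acts on $\ket{\phi(i)}$ as multiplication by a number. Using $\hat{\mathcal B}_c\ket{\phi}=\delta_N((d\phi)_c)\ket{\phi}$ and $\hat Z_b^n\ket\phi=\omega^{n\phi_b}\ket\phi$, this number is $\prod_c W^\parallel_{c(i)}((d\phi(i))_c)\prod_b V^\parallel_{b(i)}(\phi(i)_b)$. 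Here we use that the horizontal component of $D\phi$ on $c(i)$ is the spatial coboundary $(d\phi(i))_c$, which we will read off from \eqref{eq:D-components}.

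The off-diagonal factor $e^{-\beta\hat H_x/M}$ is the heart of the computation. Its matrix element $\bra{\phi(i+1)}e^{-\beta\hat H_x/M}\ket{\phi(i)}$ will be computed by factorizing into commuting pieces. For each $a$ we write
\[
e^{\beta J_a\hat{\mathcal A}_a/M}=1+(e^{\beta J_a/M}-1)\hat{\mathcal A}_a=\sum_{m_a\in\mathbb Z_N}\Bigl(\delta_N(m_a)+\tfrac{e^{\beta J_a/M}-1}{N}\Bigr)\hat A_a^{m_a},
\]
using that $\hat{\mathcal A}_a$ is a projector; the bracket is exactly $V^\perp_{\underline a(i)}(m_a)$. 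The product over $a$ of $\hat A_a^{m_a}$ is a shift $\prod_b\hat X_b^{(-1)^P(\delta m)_b}$, with $\delta m$ the spatial coboundary of the $(P-1)$-cochain $m$. For each $b$ the source factor $\exp(\frac\beta M\sum_n g_b^{(n)}\hat X_b^n)$ is diagonalized by discrete Fourier transform. Its matrix element between $\ket{x}$ and $\ket{y}$ is a function of $y-x$ alone, namely $\frac1N\sum_j \exp(\frac\beta M\sum_n g_b^{(n)}\omega^{nj})\,\omega^{\pm j(y-x)}$, which matches $W^\perp_{\underline b(i)}$ up to the sign convention. Composing these, the matrix element becomes a sum over $m\in\Omega^{P-1}(\Lambda)$ of
\[
\prod_a V^\perp(m_a)\prod_b W^\perp_{\underline b(i)}\bigl(\phi(i+1)_b-\phi(i)_b-(-1)^P(\delta m)_b\bigr)
\]
with signs to be fixed. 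We then identify $m_a=\phi_{\underline a(i)}$ as the vertical components of $\phi$. The argument of $W^\perp$ should then coincide with $(D\phi)_{\underline b(i)}$ from \eqref{eq:D-components}; this is why the factor $(-1)^P$ appears in \eqref{eq:AB} and \eqref{eq:W-perp}.

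Finally, we take the product over all $i$ with periodic identification $\phi(M)=\phi(0)$ coming from the trace. The sum over all $(\phi(i),m(i))$ is precisely the sum over $\phi\in\Omega^P(\underline\Lambda)$, since $C_P(\underline\Lambda)$ consists of the horizontal cells $b(i)$ and the vertical cells $\underline a(i)$. Likewise, $C_{P+1}(\underline\Lambda)$ consists of the horizontal cells $c(i)$ and the vertical cells $\underline b(i)$, so every weight in \eqref{eq:Z-0} is accounted for exactly once. We expect the main obstacle to be sign bookkeeping. The sign of the time difference $\phi(i+1)-\phi(i)$ relative to the orientation of $\underline b(i)$, the $(-1)^P$ from the product-cell coboundary, and the exponent sign $\omega^{(-1)^Pjx}$ in $W^\perp$ must all agree. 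We would fix these by writing $D$ on $\Lambda\times S^1_M$ explicitly, and if necessary absorb residual signs by the relabeling $j\mapsto -j$ or $m\mapsto -m$, both of which are bijections of the summation range.
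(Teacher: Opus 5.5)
Your route is the same as the paper's (Appendix~\ref{app:vanillaTr}): clock-basis resolutions of the identity, the diagonal $Z$-kernel giving $W^\parallel$ and $V^\parallel$, the projector expansion of $e^{\beta J_a\hat{\mathcal A}_a/M}$, Fourier diagonalization of the $X$-source, convolution, and packaging into $\phi$. The gap is in the sign step you defer. It matters because $W^\perp$ is not an even function of its argument. For example, take $N=3$ and the Hermitian source $-i\gamma(\hat X_b-\hat X_b^\dagger)$, i.e.\ $g_b^{(1)}=i\gamma$, $g_b^{(2)}=-i\gamma$, with $\gamma\neq 0$; then $W^\perp_{\underline b(i)}(1)\neq W^\perp_{\underline b(i)}(-1)$.

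With your ordering $\matrixel{\phi(i+1)}{e^{-\beta\hat H_x/M}}{\phi(i)}$, and writing $d$ for the spatial coboundary (your $\delta$), the convolution actually gives
\[
\prod_{b} W^\perp_{\underline b(i)}\Bigl((dm)_b-(-1)^P\bigl(\phi(i+1)_b-\phi(i)_b\bigr)\Bigr).
\]
So the two terms of $(D\phi)_{\underline b(i)}$ appear with the wrong relative sign. The substitution $m\mapsto -m$ is legitimate, since $V^\perp$ is even, but it only turns the argument into $-(D\phi)_{\underline b(i)}$. This global sign cannot be absorbed by $j\mapsto -j$. The index $j$ is a dummy variable inside the definition of $W^\perp$, so the substitution does not change the kernel's value. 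It merely rewrites $W^\perp(-x)$ as the same kernel with $g_b^{(n)}$ replaced by $g_b^{(-n)}$, which is a different weight. Nor can you flip $\phi\mapsto-\phi$ globally, because $V^\parallel$ is not even either.

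What fixes the sign is the time orientation of the slice labels, the freedom that flips only the time-difference term. Insert the resolutions of the identity from left to right in $\tr[(e^{-\beta\hat H_z/M}e^{-\beta\hat H_x/M})^M]$. The factors are then $\matrixel{s(i)}{e^{-\beta\hat H_x/M}}{s(i+1)}$, and the same convolution gives exactly $W^\perp_{\underline b(i)}\bigl((dn(i))_b+(-1)^P(s(i+1)_b-s(i)_b)\bigr)=W^\perp_{\underline b(i)}\bigl((D\phi)_{\underline b(i)}\bigr)$. Here $\phi_{b(i)}=s(i)_b$ and $\phi_{\underline a(i)}=n(i)_a$, and no relabeling is needed at all; this is \eqref{eq:Hx-kernel}. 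Equivalently, you may keep your ordering but place your slices on $\underline\Lambda$ in the reverse time direction. Once this choice is made explicitly, the rest of your argument goes through as written.
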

\begin{proof}
	This is \eqref{eq:untwisted-spacetime}, derived via the kernel computations in Appendix~\ref{app:vanillaTr}.
\end{proof}
Forms and differential on the spacetime lattice are defined in Appendix~\ref{app:geometryBasics}.

\begin{figure}[h]
	\centering
	
	\begin{subfigure}[t]{0.48\textwidth}
		\centering
		\begin{tikzpicture}[
			x={(3.2em,-.55em)},
			y={(1.4em,1.7em)},
			z={(0em,3.1em)},
			line cap=round,
			line join=round,
			dot/.style={circle,fill,inner sep=.5pt},
			spatedge/.style={opacity=.5}
			]
			
			\def\N{1}
			
			\pgfmathtruncatemacro{\Vmax}{\N}
			\pgfmathtruncatemacro{\Vmaxx}{\N+1}
			\pgfmathtruncatemacro{\Emax}{\N-1}
			
			\fill[gray, opacity=0.2] (0,0,0) -- (\Vmaxx,0,0) -- (\Vmaxx,\Vmaxx,0) -- (0,\Vmaxx,0) -- cycle;
			
			
			\foreach \j in {0,...,\Vmax}{
				\foreach \i in {0,...,\numexpr\Emax+1\relax}{
					\draw[spatedge] (\i,\j,0) -- ++(1,0,0);
				}
			}
			
			\foreach \i in {0,...,\Vmax}{
				\foreach \j in {0,...,\numexpr\Emax+1\relax}{
					\draw[spatedge] (\i,\j,0) -- ++(0,1,0);
				}
			}
			
			\foreach \i in {0,...,\Vmax}{
				\foreach \j in {0,...,\Vmax}{
					\node[dot] at (\i,\j,0) {};
				}
			}
			
			\coordinate (a) at (0,0,0);
			\coordinate (c) at (.5,.5,0);
			\coordinate (b) at (0,.6,0);
			
			\fill[Green, opacity=0.2] (0,0,0) -- (1,0,0) -- (1,1,0) -- (0,1,0) -- cycle;
			\draw[Green, very thick] (0,0,0) -- (0,1,0);
			\node[circle,fill=Green!75!black,inner sep=1pt] at (0,0,0) {};
			
			\node (name_a) at ($(a)+(-1,0,0)$) {{\scriptsize $a$}};
			\draw[-Stealth] (name_a) to (a);
			\node (name_b) at ($(b)+(-1,0,0)$) {{\scriptsize $b$}};
			\draw[-Stealth] (name_b) to (b);
			\node (name_c) at ($(c)+(0,0,-.75)$) {{\scriptsize $c$}};
			\draw[-Stealth] (name_c) to (c);
		\end{tikzpicture}
		\caption{The spatial lattice \(\Lambda\) in the case \(d=2\), with a distinguished vertex \(a\in C_0(\Lambda)\), edge \(b\in C_1(\Lambda)\), and plaquette \(c\in C_2(\Lambda)\).}
		\label{fig:spatialL}
	\end{subfigure}
	\hfill
	\begin{subfigure}[t]{0.48\textwidth}
		\centering
		\begin{tikzpicture}[
			x={(3.2em,-.55em)},
			y={(1.4em,1.7em)},
			z={(0em,3.1em)},
			line cap=round,
			line join=round,
			dot/.style={circle,fill,inner sep=.5pt},
			spatedge/.style={opacity=.5}
			]
			
			\def\N{1}
			
			\pgfmathtruncatemacro{\Vmax}{\N}
			\pgfmathtruncatemacro{\Vmaxx}{\N+1}
			\pgfmathtruncatemacro{\Emax}{\N-1}
			\pgfmathtruncatemacro{\Lmax}{\N+1}
			
			\foreach \k in {0,...,\Lmax}{
				\fill[gray, opacity=0.2] (0,0,\k) -- (\Vmaxx,0,\k) -- (\Vmaxx,\Vmaxx,\k) -- (0,\Vmaxx,\k) -- cycle;
				
				
				\foreach \j in {0,...,\Vmax}{
					\foreach \i in {0,...,\numexpr\Emax+1\relax}{
						\draw[spatedge] (\i,\j,\k) -- ++(1,0,0);
					}
				}
				
				\foreach \i in {0,...,\Vmax}{
					\foreach \j in {0,...,\numexpr\Emax+1\relax}{
						\draw[spatedge] (\i,\j,\k) -- ++(0,1,0);
					}
				}
				
				\ifnum\k<\numexpr\Lmax+1\relax
				\foreach \i in {0,...,\Vmax}{
					\foreach \j in {0,...,\Vmax}{
						\draw[thin,opacity=.3] (\i,\j,\k) -- ++(0,0,1);
					}
				}
				\fi
				
				\foreach \i in {0,...,\Vmax}{
					\foreach \j in {0,...,\Vmax}{
						\node[dot] at (\i,\j,\k) {};
					}
				}
				
				\ifnum\k=0
				\coordinate (a) at (0,0,0);
				\coordinate (c) at (.5,.5,0);
				\coordinate (b0) at (0,.6,0);
				\coordinate (avert) at (0,0,.6);
				
				\fill[Green, opacity=0.2] (0,0,0) -- (1,0,0) -- (1,1,0) -- (0,1,0) -- cycle;
				\draw[Green, very thick] (0,0,0) -- (0,0,1);
				\draw[Green, very thick] (0,0,0) -- (0,1,0);
				\node[circle,fill=Green!75!black,inner sep=1pt] at (0,0,0) {};
				\fi
				\ifnum\k=1
				\coordinate (b) at (0,.6,1);
				\coordinate (bvert) at (0,.6,1.4);
				
				\fill[Green, opacity=0.2] (0,0,1) -- (0,1,1) -- (0,1,2) -- (0,0,2) -- cycle;
				\draw[Green, very thick] (0,0,1) -- (0,1,1);
				\fi
			}
			\node (name_a) at ($(a)+(-1,0,0)$) {{\scriptsize $a(0)$}};
			\draw[-Stealth] (name_a) to (a);
			\node (name_b) at ($(b)+(-1,0,0)$) {{\scriptsize $b(1)$}};
			\draw[-Stealth] (name_b) to (b);
			\node (name_b0) at ($(b0)+(-1,0,0)$) {{\scriptsize $b(0)$}};
			\draw[-Stealth] (name_b0) to (b0);
			\node (name_c) at ($(c)+(0,0,-.75)$) {{\scriptsize $c(0)$}};
			\draw[-Stealth] (name_c) to (c);
			\node (name_bvert) at ($(bvert)+(-1,0,.5)$) {{\scriptsize $\underline b(1)$}};
			\draw[-Stealth] (name_bvert) to[out=-40,in=160] (bvert);
			\node (name_avert) at ($(avert)+(-1,0,.2)$) {{\scriptsize $\underline a(0)$}};
			\draw[-Stealth] (name_avert) to[out=-5,in=160] (avert);
			
			\draw[-Stealth]
			([xshift=2em, yshift=3em]current bounding box.south east) --
			([yshift=-3em]current bounding box.north east)
			node[pos=0, right=.5em, rotate=90] {\scriptsize Euclidean time};
		\end{tikzpicture}
		\caption{The spacetime lattice \(\underline{\Lambda}\) obtained by stacking copies of \(\Lambda\) along the Euclidean time direction. The spatial cells \(a\), \(b\), and \(c\) give rise to horizontal cells \(a(i)\), \(b(i)\), and \(c(i)\) in the \(i\)-th time slice, and to vertical cells \(\underline a(i)\), \(\underline b(i)\), and \(\underline c(i)\) connecting adjacent slices.}
		\label{fig:spacetimeL}
	\end{subfigure}
	
	\caption{Passage from the spatial square lattice \(\Lambda\) to the spacetime lattice \(\underline{\Lambda}\) in the example \(d=2\), \(P=1\). All bold/green-shaded cells are emphasized only for visual clarity.}
	\label{fig:space2spacetime}
\end{figure}

\subsection{Spatial background}
\label{sec:trotter-with-insertions}

We first decorate the Trotterized trace by a spatial Wilson and 't Hooft insertion. Given
\begin{equation}
\nu\in Z_P(\Lambda),
\qquad
\mu^\vee\in Z_{d-P}(\Lambda^\vee),
\end{equation}
define
\begin{equation}
	Z_M^{\mathrm{sp}}(\mu^\vee,\nu)
	:=
	\tr_{\mathcal H}
	\left[
	e^{-\beta\hat H_z/M}\,
	\hat{\mathcal W}_\nu\,
	e^{-\beta\hat H_x/M}\,
	\hat{\mathcal T}_{\mu^\vee}\,
	\left(e^{-\beta\hat H_z/M}e^{-\beta\hat H_x/M}\right)^{M-1}
	\right].
	\label{eq:WTtrotter}
\end{equation}
The Wilson operator is placed on the initial time slice, while the 't Hooft operator acts
on the first slab.

Let \(\nu(0)\in Z_P(\underline\Lambda)\) denote the horizontal lift of \(\nu\) to the
initial time slice. Let \(\mu^\vee(0)\in Z_{d-P}(\underline\Lambda^\vee)\) denote the dual
cycle supported on the first dual slab and characterized by
\begin{equation}
	\left(\vartheta_{\underline\Lambda}(\mu^\vee(0))\right)_{\underline b(0)}
	=
	(\vartheta_\Lambda(\mu^\vee))_b,
	\qquad
	\left(\vartheta_{\underline\Lambda}(\mu^\vee(0))\right)_{c(i)}=0,
	\qquad
	\left(\vartheta_{\underline\Lambda}(\mu^\vee(0))\right)_{\underline b(i)}=0
	\quad(i\neq0).
	\label{eq:eta-mu0}
\end{equation}
Fig.~\ref{fig:WT} illustrates how $\nu$ and $\mu^\vee$ are represented on the spacetime lattice.

\begin{prop}[Spatially decorated thermal partition function]
	\label{prop:spatially-decorated-spacetime}
	For every \(M\), \(\nu\in Z_P(\Lambda)\), and
	\(\mu^\vee\in Z_{d-P}(\Lambda^\vee)\),
	\begin{equation}
		Z_M^{\mathrm{sp}}(\mu^\vee,\nu)
		=
		\sum_{\phi\in\Omega^P(\underline\Lambda)}
		\chi\left(\int_{\nu(0)}\phi\right)
		\prod_{c\in C_{P+1}(\underline\Lambda)}
		W_c\left((D\phi+\vartheta_{\underline\Lambda}(\mu^\vee(0)))_c\right)
		\prod_{u\in C_P(\underline\Lambda)}
		V_u(\phi_u).
		\label{eq:Zsp}
	\end{equation}
\end{prop}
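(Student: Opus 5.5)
We follow the same route as for the neutral partition function \eqref{eq:Z-0}: insert resolutions of the identity between the Trotter factors and evaluate each factor as a local kernel. We insert the clock eigenbasis $\ket{\phi(i)}=\bigotimes_b\ket{\phi_{b(i)}}$ before each slice $i\in\mathbb Z/M\mathbb Z$, which defines the horizontal field $\phi_{b(i)}$. The factor $e^{-\beta\hat H_z/M}$ is diagonal in this basis. It contributes $\prod_c W^\parallel_{c(i)}((d\phi(i))_c)\prod_b V^\parallel_{b(i)}(\phi_{b(i)})$, because $\hat{\mathcal B}_c$ acts as $\delta_N(\sum_{b\subset c}\varepsilon(c,b)\phi_b)$. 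The factor $e^{-\beta\hat H_x/M}$ gives the off-diagonal kernel $\langle\phi(i)|e^{-\beta\hat H_x/M}|\phi(i+1)\rangle$. Following Appendix~\ref{app:vanillaTr}, we evaluate it by expanding each $\hat{\mathcal A}_a$ projector in powers $\hat A_a^{m}$ and introducing the vertical field $\phi_{\underline a(i)}:=m$. Fourier transforming the $\hat X$-source exponential then produces $V^\perp_{\underline a(i)}$ and $W^\perp_{\underline b(i)}$ evaluated on the vertical component $(D\phi)_{\underline b(i)}$. This component combines the time difference $\phi_{b(i+1)}-\phi_{b(i)}$ with the $(-1)^P$-weighted coboundary of the vertical $(P-1)$-cell variables, as in \eqref{eq:D-components}. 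We take this neutral computation as given and only track how the two insertions modify it.

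\textbf{Wilson insertion.} The operator $\hat{\mathcal W}_\nu$ sits immediately after $e^{-\beta\hat H_z/M}$ on slice $0$ and is diagonal in the clock basis. Acting on $\ket{\phi(0)}$, it multiplies the amplitude by $\prod_b\omega^{\nu_b\phi_{b(0)}}=\chi\bigl(\int_{\nu(0)}\phi\bigr)$. No kernel changes, which accounts for the character prefactor in \eqref{eq:Zsp}.

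\textbf{'t Hooft insertion.} The operator $\hat{\mathcal T}_{\mu^\vee}$ sits right after $e^{-\beta\hat H_x/M}$ in the first slab, so it acts on the ket $\ket{\phi(1)}$ before the next resolution. Since $\hat X_b^{s}\ket{j}=\ket{j+s}$, inserting it amounts to replacing $\phi_{b(1)}$ by $\phi_{b(1)}+(-1)^P I_\Lambda(\mu^\vee,b)$ inside the slab-$0$ kernel only. All other kernels still see the unshifted $\phi(1)$. Everywhere else the dependence on $\phi(1)$ is unchanged, so the shift enters only through the vertical differences of slab $0$. There it adds $(-1)^P I_\Lambda(\mu^\vee,b)$ to $(D\phi)_{\underline b(0)}$.

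\textbf{Main obstacle: matching the shift with $\vartheta$.} We must show that this additive shift equals $(\vartheta_{\underline\Lambda}(\mu^\vee(0)))_{\underline b(0)}=(\vartheta_\Lambda(\mu^\vee))_b$ from \eqref{eq:eta-mu0}. This requires care with three things: the relative sign between $\phi_{b(i+1)}-\phi_{b(i)}$ in $D$ and the $(-1)^P$ in $W^\perp$ via $\omega^{(-1)^Pjx}$; the definition of $\vartheta_\Lambda$ as the Poincaré-dual cochain $b\mapsto I_\Lambda(\mu^\vee,b)$, possibly up to the convention sign; and the footnote's sign choice in $\hat{\mathcal T}$, which was made precisely so that the two signs cancel. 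We also need the shift to appear in the argument of $W^\perp$ rather than as a phase. This follows because $W^\perp$ depends on the horizontal field only through $(D\phi)_{\underline b(0)}$, and the shifted kernel is $\langle\phi(0)|e^{-\beta\hat H_x/M}|\phi(1)+s\rangle$. Horizontal $W^\parallel$ and $V^\parallel$ weights are untouched, since they are evaluated on unshifted bra variables. This matches $\vartheta$ vanishing on the $c(i)$ and on slabs $i\neq0$. Finally, $\mu^\vee$ is a cycle, so $d$ of the shift vanishes. This is consistent with the $(P-1)$-cell vertical weights being unaffected and with $D\phi+\vartheta$ still satisfying the spacetime closedness expected in \eqref{eq:Zsp}. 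Collecting all kernels then gives \eqref{eq:Zsp}.
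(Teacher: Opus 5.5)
Your proposal is correct and is essentially the paper's own argument in Appendix~\ref{app:TrSp}: the diagonal Wilson operator yields $\chi(\int_{\nu(0)}\phi)$, and $\hat{\mathcal T}_{\mu^\vee}$ acting on $\ket{s(1)}$ shifts $s(1)$ by $(-1)^P\vartheta_\Lambda(\mu^\vee)$ inside the slab-$0$ $X$-kernel only. The sign issue you call the main obstacle is immediate, but note one slip: $I_\Lambda(\mu^\vee,b)=\vartheta_\Lambda(\mu^\vee)_b$ holds exactly by \eqref{eq:intersection-basis}, with no convention sign, and because $(D\phi)_{\underline b(0)}$ contains $(-1)^P\phi_{b(1)}$, the shift adds $(-1)^P(-1)^P\vartheta_\Lambda(\mu^\vee)_b=\vartheta_\Lambda(\mu^\vee)_b$ to the argument of $W^\perp_{\underline b(0)}$, not $(-1)^P I_\Lambda(\mu^\vee,b)$ as you wrote.
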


\begin{proof}
	This is \eqref{eq:spatial-insertions-spacetime}. The Wilson insertion gives the character
	\(\chi(\int_{\nu(0)}\phi)\), while the 't Hooft insertion produces the vertical
	cocycle shift \(\vartheta_{\underline\Lambda}(\mu^\vee(0))\).
\end{proof}

\begin{figure}[h]
	\centering
	
	\begin{subfigure}[t]{0.48\textwidth}
		\centering
		\begin{tikzpicture}[
			x={(3.2em,-.55em)},
			y={(1.4em,1.7em)},
			z={(0em,3.1em)},
			line cap=round,
			line join=round,
			dot/.style={circle,fill,inner sep=.5pt},
			spatedge/.style={opacity=.5}
			]
			
			\def\N{2}
			
			\pgfmathtruncatemacro{\Vmax}{\N}
			\pgfmathtruncatemacro{\Vmaxx}{\N+1}
			\pgfmathtruncatemacro{\Emax}{\N-1}
			
			\fill[gray, opacity=0.2] (0,0,0) -- (\Vmaxx,0,0) -- (\Vmaxx,\Vmaxx,0) -- (0,\Vmaxx,0) -- cycle;
			
			
			\foreach \j in {0,...,\Vmax}{
				\foreach \i in {0,...,\numexpr\Emax+1\relax}{
					\draw[spatedge] (\i,\j,0) -- ++(1,0,0);
				}
			}
			
			\foreach \i in {0,...,\Vmax}{
				\foreach \j in {0,...,\numexpr\Emax+1\relax}{
					\draw[spatedge] (\i,\j,0) -- ++(0,1,0);
				}
			}
			
			\foreach \i in {0,...,\Vmax}{
				\foreach \j in {0,...,\Vmax}{
					\node[dot] at (\i,\j,0) {};
				}
			}
			
			\draw[OrangeRed, very thick] (0,1,0) -- (1,1,0) -- (1,2,0) -- (2,2,0) -- (2,1,0) -- (3,1,0);
			\draw[NavyBlue, very thick] (1.5,0,0) -- (1.5,.5,0) -- (0.5,.5,0) -- (0.5,2.5,0) -- (1.5,2.5,0) -- (1.5,3,0);
			
			\node[
			draw=black,
			fill=white,
			rounded corners=2pt,
			inner sep=2pt,
			anchor=south
			] at ($(current bounding box.east)+(-.5,0,1.5)$) {%
				\begin{tikzpicture}[baseline={(0,0)},x={(1em,0em)},y={(0em,1em)},]
					\draw[OrangeRed, very thick] (0,0) -- (2,0);
					\node[right] at (2,0) {\scriptsize$\nu$};
					
					\draw[NavyBlue, very thick] (0,1) -- (2,1);
					\node[right] at (2,1) {\scriptsize$\mu^\vee$};
				\end{tikzpicture}%
			};
		\end{tikzpicture}
		\caption{The spatial lattice \(\Lambda\) for the case \(d=2\), \(P=1\), with the supports
			\(\nu\subset \Lambda\) and \(\mu^\vee\subset \Lambda^\vee\) of the Wilson and 't~Hooft
			operators inserted in the decorated spatial trace. The support \(\nu\) is drawn on the
			primal lattice \(\Lambda\), while the support \(\mu^\vee\) consists of 1-cells of the dual lattice
			\(\Lambda^\vee\).}
		\label{fig:WTspatial}
	\end{subfigure}
	\hfill
	\begin{subfigure}[t]{0.48\textwidth}
		\centering
		\newcommand{\fadediskxz}[4]{%
			\foreach \rr/\op in {#4/0.08,0.85*#4/0.12,0.70*#4/0.18,0.55*#4/0.28,0.40*#4/0.42,0.25*#4/0.65,0.12*#4/1.0}{
				\fill[NavyBlue!80!black, opacity=\op]
				plot[domain=0:360, samples=50]
				({#1 + \rr*cos(\x)}, {#2}, {#3 + \rr*sin(\x)}) -- cycle;
			}%
		}
		
		\newcommand{\fadediskyz}[4]{%
			\foreach \rr/\op in {#4/0.08,0.85*#4/0.12,0.70*#4/0.18,0.55*#4/0.28,0.40*#4/0.42,0.25*#4/0.65,0.12*#4/1.0}{
				\fill[NavyBlue!70!black, opacity=\op]
				plot[domain=0:360, samples=50]
				({#1}, {#2 + \rr*cos(\x)}, {#3 + \rr*sin(\x)}) -- cycle;
			}%
		}
		\begin{tikzpicture}[
			x={(3.2em,-.55em)},
			y={(1.4em,1.7em)},
			z={(0em,3.1em)},
			line cap=round,
			line join=round,
			dot/.style={circle,fill,inner sep=.5pt},
			spatedge/.style={opacity=.5}
			]
			
			\def\N{2}
			
			\pgfmathtruncatemacro{\Vmax}{\N}
			\pgfmathtruncatemacro{\Vmaxx}{\N+1}
			\pgfmathtruncatemacro{\Emax}{\N-1}
			\pgfmathtruncatemacro{\Lmax}{\N-1}
			
			\foreach \k in {0,...,\Lmax}{
				\fill[gray, opacity=0.2] (0,0,\k) -- (\Vmaxx,0,\k) -- (\Vmaxx,\Vmaxx,\k) -- (0,\Vmaxx,\k) -- cycle;
				
				
				\foreach \j in {0,...,\Vmax}{
					\foreach \i in {0,...,\numexpr\Emax+1\relax}{
						\draw[spatedge] (\i,\j,\k) -- ++(1,0,0);
					}
				}
				
				\foreach \i in {0,...,\Vmax}{
					\foreach \j in {0,...,\numexpr\Emax+1\relax}{
						\draw[spatedge] (\i,\j,\k) -- ++(0,1,0);
					}
				}
				
				\ifnum\k<\numexpr\Lmax+1\relax
				\foreach \i in {0,...,\Vmax}{
					\foreach \j in {0,...,\Vmax}{
						\draw[thin,opacity=.3] (\i,\j,\k) -- ++(0,0,1);
					}
				}
				\fi
				
				\foreach \i in {0,...,\Vmax}{
					\foreach \j in {0,...,\Vmax}{
						\node[dot] at (\i,\j,\k) {};
					}
				}
				
				\ifnum\k=0
					\draw[OrangeRed, very thick] (0,1,0) -- (1,1,0) -- (1,2,0) -- (2,2,0) -- (2,1,0) -- (3,1,0);
					\draw[NavyBlue, very thick] (1.5,0,.5) -- (1.5,.5,.5) -- (0.5,.5,.5) -- (0.5,2.5,.5) -- (1.5,2.5,.5) -- (1.5,3,.5);
					
					
					\fadediskxz{1.5}{0}{0.5}{0.07}
					\fadediskyz{1}{0.5}{0.5}{0.08}
					\fadediskxz{0.5}{1}{0.5}{0.07}
					\fadediskxz{0.5}{2}{0.5}{0.07}
					\fadediskyz{1}{2.5}{0.5}{0.08}
					
					\fill[NavyBlue, opacity=.3] (1,0,0) -- (2,0,0) -- (2,0,1) -- (1,0,1) -- cycle;
					\fill[NavyBlue, opacity=.3] (1,0,0) -- (1,1,0) -- (1,1,1) -- (1,0,1) -- cycle;
					\fill[NavyBlue, opacity=.3] (1,1,0) -- (0,1,0) -- (0,1,1) -- (1,1,1) -- cycle;
					\fill[NavyBlue, opacity=.3] (1,2,0) -- (0,2,0) -- (0,2,1) -- (1,2,1) -- cycle;
					\fill[NavyBlue, opacity=.3] (1,2,0) -- (1,3,0) -- (1,3,1) -- (1,2,1) -- cycle;
				\fi
			}
			
			\node[
			draw=black,
			fill=white,
			rounded corners=2pt,
			inner sep=2pt,
			anchor=south
			] at ($(current bounding box.east)+(0,0,1.2)$) {%
				\begin{tikzpicture}[baseline={(0,0)},x={(1em,0em)},y={(0em,1em)},]
					\draw[OrangeRed, very thick] (0,0) -- (2,0);
					\node[right] at (2,0) {\scriptsize$\nu(0)$};
					
					\draw[NavyBlue, very thick] (0,1) -- (2,1);
					\node[right] at (2,1) {\scriptsize$\mu^\vee(0)$};
					
					\fill[NavyBlue, opacity=.5] (0,1.75) rectangle (2,3);
					\node[right] at (2,2.25) {\scriptsize$\theta_{\underline\Lambda}(\mu^\vee(0))$};
				\end{tikzpicture}%
			};
		\end{tikzpicture}
		\caption{The corresponding spacetime supports in \(\underline{\Lambda}\). The support
			\(\nu(0)\subset \underline{\Lambda}\) of the Wilson insertion remains on the \(0\)-th time
			slice, whereas the support \(\mu^\vee(0)\subset \underline{\Lambda}^\vee\) of the
			't~Hooft insertion lies between the \(0\)-th and \(1\)-st slices. The blue-shaded plaquettes form the support of \(\theta_{\underline{\Lambda}}(\mu^\vee(0))\), a \(2\)-chain on the primal spacetime lattice defined by \eqref{eq:theta}; equivalently, it consists of the primal plaquettes dual to the edges of \(\mu^\vee(0)\). The 2-chain $\theta_{\underline\Lambda}(\mu^\vee(0))$ is the support of the 2-form $\vartheta_{\underline\Lambda}(\mu^\vee(0))$.}
		\label{fig:WTspacetime}
	\end{subfigure}
	
	\caption{From spatial insertions to their spacetime representatives in the case
		\(d=2\), \(P=1\). The dual lattices \(\Lambda^\vee\) and
		\(\underline{\Lambda}^\vee\) are not shown.}
	\label{fig:WT}
\end{figure}

\subsection{Euclidean twists and generic background}
\label{sec:euclidean-twists}

We now introduce twists inserted uniformly around the Euclidean time circle. These generate
spacetime backgrounds wrapping the thermal direction.

\subsubsection{Electric twist.}
Let
\begin{equation}
\alpha=\sum_{a\in C_{P-1}(\Lambda)}\alpha_a\,a
\in Z_{P-1}(\Lambda).
\end{equation}
For each \(a\in C_{P-1}(\Lambda)\), define the spectral projectors of \(\hat A_a\) by
\begin{equation}
	\hat\Pi_{a,j}^{(A)}
	:=
	\frac1N\sum_{m=0}^{N-1}\omega^{-jm}\hat A_a^{\,m},
	\qquad
	j\in\mathbb Z_N.
\end{equation}
They satisfy
\begin{equation}
\hat A_a\hat\Pi_{a,j}^{(A)}
=
\omega^j\hat\Pi_{a,j}^{(A)},
\qquad
\sum_{j\in\mathbb Z_N}\hat\Pi_{a,j}^{(A)}=\hat 1,
\qquad
\hat{\mathcal A}_a=\hat\Pi_{a,0}^{(A)}.
\label{eq:Aproj-relations}
\end{equation}
Set
\begin{equation}
	\lambda_{a,j}
	:=
	1+\left(e^{\beta J_a/M}-1\right)\delta_N(j).
\end{equation}
The electric twist operator is
\begin{equation}
	\hat U_e(\alpha)
	:=
	\prod_{a\in C_{P-1}(\Lambda)}\hat U_{e,a}(\alpha_a),
	\qquad
	\hat U_{e,a}(\alpha_a)
	:=
	\sum_{j\in\mathbb Z_N}
	\frac{\lambda_{a,j+\alpha_a}}{\lambda_{a,j}}\,
	\hat\Pi_{a,j}^{(A)} .
	\label{eq:electric-twist-operator}
\end{equation}
Its effect on the Trotter kernel is computed in Appendix~\ref{app:electric-twist}. It
multiplies the \(i\)-th slab by
\begin{equation}
\prod_{a\in C_{P-1}(\Lambda)}\omega^{\alpha_a n(i)_a}.
\end{equation}
Equivalently, after packaging into \(\phi\in\Omega^P(\underline\Lambda)\), the product over
all slabs is the spacetime Wilson factor
\begin{equation}
	\chi\left(\int_{\Sus(\alpha)}\phi\right),
	\qquad
	\Sus(\alpha)
	:=
	\sum_{i=0}^{M-1}\underline\alpha(i)\in Z_P(\underline\Lambda),
	\qquad
	\underline\alpha(i)=\alpha\times[i,i+1].
	\label{eq:electric-suspension}
\end{equation}
The total electric spacetime cycle is therefore
\begin{equation}
	q_e(\nu,\alpha)
	:=
	\nu(0)+\Sus(\alpha)
	\in Z_P(\underline\Lambda).
	\label{eq:qe}
\end{equation}

\subsubsection{Magnetic twist.}
Let
\begin{equation}
\beta^\vee\in Z_{d-P-1}(\Lambda^\vee),
\qquad
\beta_c:=(\vartheta_\Lambda(\beta^\vee))_c,
\qquad
c\in C_{P+1}(\Lambda).
\end{equation}
For each \(c\in C_{P+1}(\Lambda)\), define the spectral projectors of \(\hat B_c\) by
\begin{equation}
	\hat\Pi_{c,j}^{(B)}
	:=
	\frac1N\sum_{m=0}^{N-1}\omega^{-jm}\hat B_c^{\,m},
	\qquad
	j\in\mathbb Z_N.
\end{equation}
They satisfy
\begin{equation}
\hat B_c\hat\Pi_{c,j}^{(B)}
=
\omega^j\hat\Pi_{c,j}^{(B)},
\qquad
\sum_{j\in\mathbb Z_N}\hat\Pi_{c,j}^{(B)}=\hat 1,
\qquad
\hat{\mathcal B}_c=\hat\Pi_{c,0}^{(B)}.
\end{equation}
Set
\begin{equation}
	\rho_{c,j}
	:=
	1+\left(e^{\beta K_c/M}-1\right)\delta_N(j).
\end{equation}
The magnetic twist operator is
\begin{equation}
	\hat U_m(\beta^\vee)
	:=
	\prod_{c\in C_{P+1}(\Lambda)}\hat U_{m,c}(\beta_c),
	\qquad
	\hat U_{m,c}(\beta_c)
	:=
	\sum_{j\in\mathbb Z_N}
	\frac{\rho_{c,j+\beta_c}}{\rho_{c,j}}\,
	\hat\Pi_{c,j}^{(B)} .
	\label{eq:magnetic-twist-operator}
\end{equation}
As shown in Appendix~\ref{app:magnetic-twist}, this shifts the horizontal curvature
argument by \(\vartheta_\Lambda(\beta^\vee)\). In spacetime terms, it produces the cocycle
\(\vartheta_{\underline\Lambda}(\Sus(\beta^\vee))\), where
\begin{equation}
	\Sus(\beta^\vee)
	\in Z_{d-P}(\underline\Lambda^\vee)
	\label{eq:magnetic-suspension}
\end{equation}
is characterized by
\begin{equation}
	\left(\vartheta_{\underline\Lambda}(\Sus(\beta^\vee))\right)_{c(i)}
	=
	(\vartheta_\Lambda(\beta^\vee))_c,
	\qquad
	\left(\vartheta_{\underline\Lambda}(\Sus(\beta^\vee))\right)_{\underline b(i)}
	=
	0.
	\label{eq:eta-hor-beta-components}
\end{equation}
Thus the total magnetic spacetime cycle is
\begin{equation}
	q_m(\mu^\vee,\beta^\vee)
	:=
	\mu^\vee(0)+\Sus(\beta^\vee)
	\in Z_{d-P}(\underline\Lambda^\vee).
	\label{eq:qm}
\end{equation}

\subsubsection{General background partition function}
\label{sec:general-background-partition}

Define the fully decorated Trotterized trace by
\begin{equation}
	\begin{aligned}
		Z_M(\mu^\vee,\nu;\beta^\vee,\alpha)
		:=
		\tr_{\mathcal H}\biggl[
		&\hat{\mathcal W}_\nu\,
		\hat U_m(\beta^\vee)e^{-\beta\hat H_z/M}\,
		\hat{\mathcal T}_{\mu^\vee}\hat U_e(\alpha)e^{-\beta\hat H_x/M}
		\\
		&\qquad\times
		\left(
		\hat U_m(\beta^\vee)e^{-\beta\hat H_z/M}\,
		\hat U_e(\alpha)e^{-\beta\hat H_x/M}
		\right)^{M-1}
		\biggr].
	\end{aligned}
	\label{eq:decoratedTr}
\end{equation}

\begin{prop}[General background partition function]
	\label{prop:ZtwistedSpatial}
	With \(q_e(\nu,\alpha)\) and \(q_m(\mu^\vee,\beta^\vee)\) defined by
	\eqref{eq:qe} and \eqref{eq:qm}, the decorated
	trace \eqref{eq:decoratedTr} is exactly
	\begin{equation}
		Z_M(\mu^\vee,\nu;\beta^\vee,\alpha)
		=
		\sum_{\phi\in\Omega^P(\underline\Lambda)}
		\chi\left(\int_{q_e(\nu,\alpha)}\phi\right)
		\prod_{c\in C_{P+1}(\underline\Lambda)}
		W_c\left(
		(D\phi+\vartheta_{\underline\Lambda}(q_m(\mu^\vee,\beta^\vee)))_c
		\right)
		\prod_{u\in C_P(\underline\Lambda)}
		V_u(\phi_u).
		\label{eq:Z-background}
	\end{equation}
\end{prop}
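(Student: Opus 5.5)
The plan is to reduce the fully decorated trace to the computations already behind Prop.~\ref{prop:spatially-decorated-spacetime} and the neutral formula \eqref{eq:Z-0}, and to treat the twist operators one slab at a time. First I insert resolutions of the identity in the clock basis $\ket{\phi(i)}$, $\phi(i)\in\Omega^P(\Lambda)$, on each time slice $i$. Each $e^{-\beta\hat H_z/M}$ is diagonal there. Each $e^{-\beta\hat H_x/M}$ I Fourier expand in the shift variables $n(i)\in\Omega^P(\Lambda)$, and the $\hat{\mathcal A}_a$-projector factors in $(P-1)$-cell variables. As in Appendix~\ref{app:vanillaTr}, this yields the weights $W^\parallel$, $V^\parallel$ on horizontal cells and $V^\perp$, $W^\perp$ on vertical cells. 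The vertical component of $\phi$ on $\underline a(i)$ is the Fourier dual variable of the $\hat A_a$ projector, and the difference $\phi(i+1)-\phi(i)$ sits inside the vertical curvature $(D\phi)_{\underline b(i)}$. I take this neutral kernel as given and track only the extra factors produced by each insertion.

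Next I treat the commuting structure. $\hat U_m(\beta^\vee)$ is a function of the $\hat B_c$, so it is diagonal in the clock basis and commutes with $e^{-\beta\hat H_z/M}$. $\hat U_e(\alpha)$ is a function of the $\hat A_a$, so it commutes with $e^{-\beta\hat H_x/M}$. By Appendix~\ref{app:magnetic-twist}, in each slab $\hat U_m e^{-\beta\hat H_z/M}$ has diagonal element $\prod_c \rho_{c,(d\phi(i))_c+\beta_c}\prod_b V^\parallel$. This is exactly $W^\parallel_{c(i)}$ evaluated at $(D\phi)_{c(i)}+(\vartheta_\Lambda\beta^\vee)_c$, which by \eqref{eq:eta-hor-beta-components} equals $(D\phi+\vartheta_{\underline\Lambda}(\Sus\beta^\vee))_{c(i)}$.

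By Appendix~\ref{app:electric-twist}, $\hat U_e(\alpha)e^{-\beta\hat H_x/M}$ differs from the untwisted kernel by the phase $\prod_a\omega^{\alpha_a n(i)_a}$ on the $i$-th slab. Here $n(i)_a$ is the vertical field component $\phi_{\underline a(i)}$. Summing over slabs, this gives $\chi(\int_{\Sus(\alpha)}\phi)$, as in \eqref{eq:electric-suspension}.

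The spatial insertions in the first factor are handled exactly as in Prop.~\ref{prop:spatially-decorated-spacetime}. $\hat{\mathcal W}_\nu$ acting on slice $0$ gives $\chi(\int_{\nu(0)}\phi)$. $\hat{\mathcal T}_{\mu^\vee}$ is a product of shift operators, placed between the $\hat H_z$ and $\hat H_x$ factors of slab $0$. It shifts the matrix element $\langle\phi(1)|\cdots|\phi(0)\rangle$ by $(-1)^P I_\Lambda(\mu^\vee,b)$ in the $\hat X_b$ exponent. That shifts the argument of $W^\perp_{\underline b(0)}$ by $(\vartheta_\Lambda\mu^\vee)_b$, i.e.\ by $\vartheta_{\underline\Lambda}(\mu^\vee(0))$ as in \eqref{eq:eta-mu0}.

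The point needing care is that these insertions do not interfere with one another or with the twists in the same slab. $\hat{\mathcal T}_{\mu^\vee}$ commutes with $\hat U_e$ and $e^{-\beta\hat H_x/M}$, since all are built from $\hat X$'s. So it can be moved next to $e^{-\beta\hat H_x/M}$ in slab $0$ without creating phases. $\hat{\mathcal W}_\nu$ is diagonal and sits on slice $0$, so it only contributes its eigenvalue. $\hat U_m$ is also diagonal and sits on the same slice, so their factors simply multiply.

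Combining the factors, the magnetic shifts add to $\vartheta_{\underline\Lambda}(\mu^\vee(0))+\vartheta_{\underline\Lambda}(\Sus\beta^\vee)=\vartheta_{\underline\Lambda}(q_m)$ by linearity of $\vartheta_{\underline\Lambda}$. The electric characters multiply to $\chi(\int_{q_e}\phi)$. This is \eqref{eq:Z-background}.

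The main obstacle I expect is sign and orientation bookkeeping. I must check that the $(-1)^P$ in $\hat{\mathcal T}_{\mu^\vee}$ and in $W^\perp$ combine to a $+\vartheta$ shift with the same orientation convention as the vertical part of $D$ in \eqref{eq:D-components}. I must also check that the electric phase attaches to $\underline\alpha(i)$ with the orientation used in $\Sus(\alpha)$. I would verify both on a single vertical cell using the incidence identity \eqref{eq:incidence-identity}.
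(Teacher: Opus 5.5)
Your proposal is correct and follows the paper's own argument (Appendix~\ref{app:TrTwists}--\ref{app:TrFull}). You compute the decorated diagonal $Z$-kernel and off-diagonal $X$-kernel slab by slab. The phases $\omega^{\alpha_a n(i)_a}$ together with $\chi(\int_{\nu(0)}\phi)$ give the Wilson factor on $q_e$. The horizontal shift by $\vartheta_\Lambda(\beta^\vee)$ and the slab-$0$ vertical shift by $\vartheta_\Lambda(\mu^\vee)$ give $\vartheta_{\underline\Lambda}(q_m)$.

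Two small slips to fix. First, the projector multipliers $n(i)$ live in $\Omega^{P-1}(\Lambda)$, not $\Omega^P(\Lambda)$. Second, the slab-$0$ kernel is $\langle\phi(0)|\hat{\mathcal T}_{\mu^\vee}\hat U_e(\alpha)e^{-\beta\hat H_x/M}|\phi(1)\rangle$, not $\langle\phi(1)|\cdots|\phi(0)\rangle$. With the correct ordering, $\langle\phi(0)|\hat{\mathcal T}_{\mu^\vee}=\langle\phi(0)-(-1)^P\vartheta_\Lambda(\mu^\vee)|$ produces exactly the $+\vartheta_\Lambda(\mu^\vee)$ shift you need, which settles the sign check you flagged.
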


\begin{proof}
	The spatial Wilson insertion and the electric Euclidean twist combine into the single
	character $\chi\left(\int_{q_e(\nu,\alpha)}\phi\right)$.
	The spatial 't Hooft insertion and the magnetic Euclidean twist combine into the single
	cocycle shift $\vartheta_{\underline\Lambda}(q_m(\mu^\vee,\beta^\vee))$.
	The detailed verification of these two statements is the content of
	Appendix~\ref{app:TrTwists}, \ref{app:TrFull}. Substituting the resulting electric factor and
	magnetic cocycle shift into the neutral spacetime model
	\eqref{eq:Z-0} gives \eqref{eq:Z-background}.
\end{proof}

Proposition~\ref{prop:ZtwistedSpatial} is exact at fixed Trotter number \(M\). The physical thermal partition function of the quantum Hamiltonian is recovered only after taking the Trotter limit \(M\to\infty\).

\begin{remark}[The trace in a different basis]
	Instead of computing the Trotterized trace in the $Z$-basis as we do, one can also compute it in the $X$-basis, obtaining a dual description of the classical theory with interaction defined by the transposed differential $D^\mathsf{T}$. For the special case of the $\mathbb Z_2$ toric code on a regular 2d square lattice with spins on edges, and without any decorations, this route was taken in \cite{Timmerman:2022uin}.
\end{remark}

\section{Polymer gas in topological backgrounds}
\label{sec:fixed-charge-polymer-gas}

In this section we rewrite the exact spacetime partition function
\eqref{eq:Z-background} as a gas of extended magnetic and electric defects. The resulting
representation is exact in finite volume and is naturally viewed as an expansion around the
flat-field vacuum. The section has three main outputs.

First, we recast the decorated spacetime partition function in terms of arbitrary magnetic
and electric spacetime cycles. Second, after expanding the local weights, we show that the
sum over the spacetime field forces the local defects to assemble into closed magnetic and
electric defects whose total homology classes are fixed by the chosen background. Third, we
resolve the resulting homology constraints by finite Fourier transform, obtaining a
sectorwise polymer gas indexed by topological charges.

In \S\ref{sec:quantum-classical}, the electric and magnetic backgrounds were produced
from a combination of spatial cycles and cycles wrapping the Euclidean-time direction. Since topologically
\(\underline\Lambda=\Lambda\times S^1\) and
\(\underline\Lambda^\vee=\Lambda^\vee\times S^1\), the K\"unneth decomposition gives
\begin{equation}
	H_P(\underline\Lambda)
	\cong
	H_P(\Lambda)\oplus H_{P-1}(\Lambda),
	\qquad
	H_{d-P}(\underline\Lambda^\vee)
	\cong
	H_{d-P}(\Lambda^\vee)\oplus H_{d-P-1}(\Lambda^\vee).
	\label{eq:kunneth}
\end{equation}
The cycles $q_e(\nu,\alpha)$ \eqref{eq:qe} and $q_m(\mu^\vee,\beta^\vee)$ \eqref{eq:qm} are specific representatives of the homology classes of the primal and the dual lattice compatible with the above decomposition. While working with the classical theory, it will be more convenient to work with arbitrary spacetime representatives
\begin{equation}
	q_m\in Z_{d-P}(\underline\Lambda^\vee),
	\qquad
	q_e\in Z_P(\underline\Lambda).
\end{equation}
We therefore define the fixed-charge partition function by
\begin{equation}
	Z_M(q_m,q_e)
	:=
	\sum_{\phi\in \Omega^P(\underline\Lambda)}
	\chi \left(\int_{q_e}\phi\right)
	\prod_{c\in C_{P+1}(\underline\Lambda)}
	W_c \left((D\phi+\vartheta_{\underline\Lambda}(q_m))_c\right)
	\prod_{u\in C_P(\underline\Lambda)}
	V_u(\phi_u),
	\label{eq:Zme}
\end{equation}
where \(q_e\) is a primal \(P\)-cycle, \(q_m\) is a dual \((d-P)\)-cycle, and
\(\vartheta_{\underline\Lambda}(q_m)\in Z^{P+1}(\underline\Lambda)\) (see \eqref{eq:vartheta}) is the corresponding
primal cocycle.

\subsection{Code limit and the flat vacuum}
\label{sec:fixed-charge-code-limit}

For a perturbative expansion of Trotter weights \eqref{eq:parallel-weights}, \eqref{eq:perp-weights}, it is convenient at fixed
\(M\) to introduce
\begin{equation}
	\epsilon_g
	:=
	\max_{b\in C_P(\Lambda)}
	\sum_{n\in \mathbb Z_N}\left|\frac{\beta g_b^{(n)}}{M}\right|,
	\qquad
	\epsilon_h
	:=
	\max_{b\in C_P(\Lambda)}
	\sum_{n\in \mathbb Z_N}\left|\frac{\beta h_b^{(n)}}{M}\right|,
\end{equation}
and
\begin{equation}
	\kappa_J
	:=
	\min_{a\in C_{P-1}(\Lambda)}\frac{\beta J_a}{M},
	\qquad
	\kappa_K
	:=
	\min_{c\in C_{P+1}(\Lambda)}\frac{\beta K_c}{M}.
\end{equation}
The \emph{code limit} is the regime
\begin{equation}
	\epsilon_g\to 0,
	\qquad
	\epsilon_h\to 0,
	\qquad
	\kappa_J\to \infty,
	\qquad
	\kappa_K\to \infty.
	\label{eq:codeLimit}
\end{equation}

At the level of the explicit local weights \eqref{eq:parallel-weights}, \eqref{eq:perp-weights},
this means that, after dividing out the trivial local zero-mode factors,
\begin{equation}
	\frac{W_c(x)}{W_c(0)}\rightarrow \delta_N(x),
	\qquad
	\frac{V_u(x)}{V_u(0)}\rightarrow 1,
	\label{eq:codeLimitWV}
\end{equation}
uniformly in \(x\in\mathbb Z_N\). Accordingly, define
\begin{equation}
	\widetilde Z_M(q_m,q_e)
	:=
	\frac{Z_M(q_m,q_e)}
	{\prod_{c\in C_{P+1}(\underline\Lambda)}W_c(0)\,
		\prod_{u\in C_P(\underline\Lambda)}V_u(0)}.
	\label{eq:ZmeNormal}
\end{equation}
Then in the code limit
\begin{equation}
	\widetilde Z_M(q_m,q_e)
	\rightarrow
	\sum_{\phi\in \Omega^P(\underline\Lambda)}
	\chi \left(\int_{q_e}\phi\right)
	\prod_{c\in C_{P+1}(\underline\Lambda)}
	\delta_N \left((D\phi+\vartheta_{\underline\Lambda}(q_m))_c\right).
	\label{eq:codeLimitZme}
\end{equation}

In the neutral magnetic sector \(q_m=0\), the limiting theory is therefore supported on the
flat sector
\begin{equation}
	D\phi=0.
\end{equation}
This is the \emph{flat vacuum} around which the polymer expansion will be organized. The defect variables
introduced below measure departures from this flat-field configuration, and in the neutral
sector the dominant defect configuration is the empty one.

\subsection{Local defect expansion}

We now expand \eqref{eq:Zme} around the flat vacuum by separating the
local zero modes from the nontrivial local excitations.

For each \((P+1)\)-cell \(c\in C_{P+1}(\underline\Lambda)\), define
\begin{equation}
	\varpi_c(x)
	:=
	\frac{W_c(x)}{W_c(0)},
	\qquad
	x\in \mathbb Z_N.
	\label{eq:magnetic-local-amplitude}
\end{equation}
Then \(\varpi_c(0)=1\), and
\begin{equation}
	W_c(x)
	=
	W_c(0)\sum_{m\in \mathbb Z_N}\varpi_c(m)\,\delta_N(x-m).
	\label{eq:W-cell-expansion}
\end{equation}
A nonzero \(m\) will be interpreted as a local magnetic defect on the cell \(c\).

For each \(P\)-cell \(u\in C_P(\underline\Lambda)\), let
\begin{equation}
	\widehat V_u(e)
	:=
	\frac{1}{\sqrt N}
	\sum_{x\in \mathbb Z_N}
	V_u(x)\,\chi_e(-x),
	\qquad
	e\in \widehat{\mathbb Z}_N,
	\label{eq:Vu-fourier-transform}
\end{equation}
be the discrete Fourier transform in the convention of \eqref{eq:invFT}. Define the normalized electric Fourier
amplitudes by
\begin{equation}
	\upsilon_u(e)
	:=
	\frac{\widehat V_u(-e)}{\widehat V_u(0)},
	\qquad
	e\in \mathbb Z_N.
	\label{eq:electric-local-amplitude}
\end{equation}
Then \(\upsilon_u(0)=1\), and Fourier inversion gives
\begin{equation}
	V_u(x)
	=
	\frac{\widehat V_u(0)}{\sqrt N}
	\sum_{e\in \mathbb Z_N}
	\upsilon_u(e)\,\chi_e(-x).
	\label{eq:V-cell-expansion}
\end{equation}
A nonzero \(e\) will be interpreted as a local electric defect label on the cell \(u\).

In the code limit \eqref{eq:codeLimit}, the nontrivial amplitudes are
small:
\begin{equation}
	\varpi_c(m)\rightarrow 0 \quad (m\neq 0),
	\qquad
	\upsilon_u(e)\rightarrow 0 \quad (e\neq 0).
	\label{eq:nontrivial-local-amplitudes-small}
\end{equation}
This is the precise sense in which the polymer gas below is a perturbation series around
the flat vacuum.

Introduce magnetic and electric defect chains
\begin{equation}
	\mathfrak m^\vee
	=
	\sum_{c^\vee\in C_{d-P}(\underline\Lambda^\vee)}
	\mathfrak m_{c^\vee}^\vee\,c^\vee
	\in \mathcal C_{d-P}(\underline\Lambda^\vee),
	\qquad
	\mathfrak e
	=
	\sum_{u\in C_P(\underline\Lambda)}
	\mathfrak e_u\,u
	\in \mathcal C_P(\underline\Lambda).
\end{equation}
Let
\begin{equation}
	\mathfrak m
	:=
	\vartheta_{\underline\Lambda}(\mathfrak m^\vee)\in \Omega^{P+1}(\underline\Lambda),
	\qquad
	\mathfrak m_c
	=
	\mathfrak m^\vee_{\theta_{\underline\Lambda}^{-1}(c)}
	\quad
	(c\in C_{P+1}(\underline\Lambda)),
\end{equation}
and define the corresponding defect weights by
\begin{equation}
	\wt_m(\mathfrak m^\vee)
	:=
	\prod_{c\in C_{P+1}(\underline\Lambda)}\varpi_c(\mathfrak m_c),
	\qquad
	\wt_e(\mathfrak e)
	:=
	\prod_{u\in C_P(\underline\Lambda)}\upsilon_u(\mathfrak e_u).
	\label{eq:local-defect-weights}
\end{equation}
Finally, set
\begin{equation}
	\mathcal N_M
	:=
	\prod_{c\in C_{P+1}(\underline\Lambda)}W_c(0)\,
	\prod_{u\in C_P(\underline\Lambda)}\frac{\widehat V_u(0)}{\sqrt N}.
	\label{eq:NM-fixed-charge}
\end{equation}

\begin{prop}[Exact local-defect expansion]
	\label{prop:exact-local-defect-expansion}
	For every magnetic background \(q_m\in Z_{d-P}(\underline\Lambda^\vee)\) and electric
	background \(q_e\in Z_P(\underline\Lambda)\), the partition function \eqref{eq:Zme} is given by
	\begin{equation}
		Z_M(q_m,q_e)
		=
		\mathcal N_M
		\sum_{\mathfrak m^\vee\in \mathcal C_{d-P}(\underline\Lambda^\vee)}
		\sum_{\mathfrak e\in \mathcal C_P(\underline\Lambda)}
		\wt_m(\mathfrak m^\vee)\,\wt_e(\mathfrak e)\,
		\Xi(\mathfrak m^\vee,\mathfrak e;q_m,q_e),
		\label{eq:exact-local-defect-expansion}
	\end{equation}
	where
	\begin{equation}
		\Xi(\mathfrak m^\vee,\mathfrak e;q_m,q_e)
		:=
		\sum_{\phi\in \Omega^P(\underline\Lambda)}
		\chi \left(\int_{q_e-\mathfrak e}\phi\right)
		\prod_{c\in C_{P+1}(\underline\Lambda)}
		\delta_N \left((D\phi+\vartheta_{\underline\Lambda}(q_m-\mathfrak m^\vee))_c\right).
		\label{eq:Xi-fixed-charge}
	\end{equation}
\end{prop}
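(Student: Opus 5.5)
The plan is to substitute the two local expansions \eqref{eq:W-cell-expansion} and \eqref{eq:V-cell-expansion} directly into the definition \eqref{eq:Zme} and then exchange the finite sums. Every sum involved is finite, so this is purely algebraic and needs no convergence argument.

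\emph{Magnetic factors.} For each $c\in C_{P+1}(\underline\Lambda)$ write
\[
W_c\bigl((D\phi+\vartheta_{\underline\Lambda}(q_m))_c\bigr)=W_c(0)\sum_{m_c\in\mathbb Z_N}\varpi_c(m_c)\,\delta_N\bigl((D\phi+\vartheta_{\underline\Lambda}(q_m))_c-m_c\bigr).
\]
Multiplying over all $c$ and expanding the product of sums gives a sum over functions $c\mapsto m_c$, i.e.\ over cochains $\mathfrak m\in\Omega^{P+1}(\underline\Lambda)$. Since $\vartheta_{\underline\Lambda}$ is a bijection $\mathcal C_{d-P}(\underline\Lambda^\vee)\to\Omega^{P+1}(\underline\Lambda)$ given cellwise by $\mathfrak m_c=\mathfrak m^\vee_{\theta^{-1}_{\underline\Lambda}(c)}$, this is the same as a sum over $\mathfrak m^\vee$. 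The product $\prod_c\varpi_c(\mathfrak m_c)$ is then $\wt_m(\mathfrak m^\vee)$. Because $\vartheta_{\underline\Lambda}$ is linear, $\vartheta_{\underline\Lambda}(q_m)-\mathfrak m=\vartheta_{\underline\Lambda}(q_m-\mathfrak m^\vee)$, so the delta factors become $\prod_c\delta_N\bigl((D\phi+\vartheta_{\underline\Lambda}(q_m-\mathfrak m^\vee))_c\bigr)$.

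\emph{Electric factors.} For each $u\in C_P(\underline\Lambda)$ write
\[
V_u(\phi_u)=\frac{\widehat V_u(0)}{\sqrt N}\sum_{e_u}\upsilon_u(e_u)\,\chi_{e_u}(-\phi_u),
\]
and expand the product over $u$ into a sum over $\mathfrak e=\sum_u \mathfrak e_u u\in\mathcal C_P(\underline\Lambda)$. The coefficient is $\wt_e(\mathfrak e)$, and the phase is $\prod_u\chi_{\mathfrak e_u}(-\phi_u)=\chi\bigl(-\int_{\mathfrak e}\phi\bigr)$. Here I use the convention of \eqref{eq:invFT} that $\chi_e(x)=\omega^{ex}=\chi(ex)$, together with the pairing $\int_{\mathfrak e}\phi=\sum_u\mathfrak e_u\phi_u$. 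Since integration is linear in the chain, this phase combines with the Wilson factor to give $\chi\bigl(\int_{q_e-\mathfrak e}\phi\bigr)$.

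Collecting the prefactors $\prod_c W_c(0)\prod_u\widehat V_u(0)/\sqrt N$ gives $\mathcal N_M$. Moving the $\phi$-sum inside the finite $(\mathfrak m^\vee,\mathfrak e)$-sums then leaves exactly $\Xi(\mathfrak m^\vee,\mathfrak e;q_m,q_e)$, which yields \eqref{eq:exact-local-defect-expansion}.

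The only delicate step is sign bookkeeping. First, the sign in \eqref{eq:electric-local-amplitude} must match \eqref{eq:V-cell-expansion}; I would verify this by checking Fourier inversion in the convention \eqref{eq:invFT}. Second, the identification $\vartheta_{\underline\Lambda}(\mathfrak m^\vee)_c=\mathfrak m^\vee_{\theta^{-1}(c)}$ must hold with no orientation sign; I would check this against the definition \eqref{eq:vartheta}. A further point is that expanding the product requires $\widehat V_u(0)\neq0$ and $W_c(0)\neq0$. The second holds because $W_c(0)$ is an exponential for horizontal cells and, for vertical cells, is an average of exponentials, which I would treat as an implicit nondegeneracy assumption, as the definitions of $\varpi$ and $\upsilon$ already do.
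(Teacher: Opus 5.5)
Your proposal is correct and is essentially the paper's proof: substitute \eqref{eq:W-cell-expansion} and \eqref{eq:V-cell-expansion} into \eqref{eq:Zme}, collect the zero-mode factors into $\mathcal N_M$, and reorder the finite sums to isolate $\Xi$. Your explicit checks spell out details the paper leaves tacit, namely the Fourier-inversion sign behind $\upsilon_u$, the sign-free cellwise form of $\vartheta_{\underline\Lambda}$, and the nonvanishing of $W_c(0)$ and $\widehat V_u(0)$.
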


\begin{proof}
	Insert \eqref{eq:W-cell-expansion} and \eqref{eq:V-cell-expansion} into
	\eqref{eq:Zme}. The factors \(W_c(0)\) and
	\(\widehat V_u(0)/\sqrt N\) assemble into \(\mathcal N_M\), while the remaining
	cellwise products produce the defect weights \eqref{eq:local-defect-weights}. The
	magnetic defect variables are naturally assembled into the dual chain
	\(\mathfrak m^\vee\), whose primal image is the cochain \(\mathfrak m\). The electric
	Fourier labels assemble into the primal chain \(\mathfrak e\). Reordering the finite sums
	then gives \eqref{eq:exact-local-defect-expansion} with compatibility factor
	\eqref{eq:Xi-fixed-charge}.
\end{proof}

Equation \eqref{eq:Xi-fixed-charge} exhibits the basic structure of the expansion. The
chains \(\mathfrak m^\vee\) and \(\mathfrak e\) are local magnetic and electric defect
fields, while \(\Xi(\mathfrak m^\vee,\mathfrak e;q_m,q_e)\) is the compatibility factor
obtained by summing over all spacetime \(P\)-cochains consistent with those local defects in
the prescribed background.

\subsection{Closed defect expansion and homology constraints}
\label{sec:closedDefectGas}

We now determine when the compatibility factor
\(\Xi(\mathfrak m^\vee,\mathfrak e;q_m,q_e)\) is nonzero.

Suppose first that the delta constraints in \eqref{eq:Xi-fixed-charge} are solvable. Then
there exists \(\phi\in \Omega^P(\underline\Lambda)\) such that
\begin{equation}
	D\phi=\vartheta_{\underline\Lambda}(\mathfrak m^\vee-q_m).
	\label{eq:Dphi-background-m}
\end{equation}
Since the left-hand side is a coboundary and $\vartheta_{\underline\Lambda}$ is an isomorphism (Corr.~\ref{cor:complexIso}), it follows that
\begin{equation}
	\mathfrak m^\vee\in Z_{d-P}(\underline\Lambda^\vee),
	\qquad
	[\mathfrak m^\vee]=[q_m]\in H_{d-P}(\underline\Lambda^\vee),
	\label{eq:magnetic-homology-class}
\end{equation}
because \eqref{eq:Dphi-background-m} is precisely the statement that
\(\mathfrak m^\vee-q_m\in B_{d-P}(\underline\Lambda^\vee)\).

Next fix one solution \(\lambda\) of \eqref{eq:Dphi-background-m}. Every other solution is
of the form \(\lambda+\psi\) with \(\psi\in Z^P(\underline\Lambda)\). Therefore
\begin{equation}
	\Xi(\mathfrak m^\vee,\mathfrak e;q_m,q_e)
	=
	\chi \left(\int_{q_e-\mathfrak e}\lambda\right)
	\sum_{\psi\in Z^P(\underline\Lambda)}
	\chi \left(\int_{q_e-\mathfrak e}\psi\right).
	\label{eq:Xi-affine-flat-sum}
\end{equation}
By character orthogonality on the finite group of cocycles, this vanishes unless
\(q_e-\mathfrak e\) pairs trivially with every cocycle, that is, unless
\begin{equation}
	q_e-\mathfrak e\in B_P(\underline\Lambda).
\end{equation}
Equivalently,
\begin{equation}
	\mathfrak e\in Z_P(\underline\Lambda),
	\qquad
	[\mathfrak e]=[q_e]\in H_P(\underline\Lambda).
	\label{eq:electric-homology-class}
\end{equation}

Thus only closed magnetic and electric defects survive, and their total homology classes are
forced to agree with those of the prescribed background. Since $q_e-\mathfrak e$ is a boundary, the character $\chi\left(\int_{q_e-\mathfrak e}\psi\right)$ with closed $\psi$ evaluates to $1$ and the sum over $\psi$ contributes the factor \(|Z^P(\underline\Lambda)|\). It is therefore convenient to define the normalized partition function
\begin{equation}
	\mathcal Z_M(q_m,q_e)
	:=
	\frac{Z_M(q_m,q_e)}
	{\mathcal N_M\,|Z^P(\underline\Lambda)|}.
	\label{eq:normalized-closed-defect-gas}
\end{equation}

The remaining phase is determined by the two boundary differences
\begin{equation}
\delta_m:=\mathfrak m^\vee-q_m\in B_{d-P}(\underline\Lambda^\vee),
\qquad
\delta_e:=q_e-\mathfrak e\in B_P(\underline\Lambda).
\end{equation}
Let \(\Sigma\in \mathcal C_{P+1}(\underline\Lambda)\)
such that \(\partial\Sigma=\delta_e\), then the Stokes formula gives\footnote{In fact, we define the lattice differential to satisfy this condition, see \eqref{eq:d-def} for the spatial lattice, applies more generally.}
\begin{equation}
\int_{\delta_e}\lambda
=
\int_{\partial\Sigma}\lambda
=
\int_\Sigma D\lambda
=
I_{\underline\Lambda}(\delta_m,\Sigma).
\end{equation}
This is the first term in the definition of the linking number between $\delta_m$ and $\delta_e$ \eqref{eq:generalized-linking}, and since both of these are boundaries, the remaining two terms in \eqref{eq:generalized-linking} vanish. Hence
\begin{equation}
	\chi \left(\int_{q_e-\mathfrak e}\lambda\right)
	=
	\omega^{
		\Lk_{\underline\Lambda}
		(\mathfrak m^\vee-q_m,\;q_e-\mathfrak e)
	}.
	\label{eq:Xi-linking-phase}
\end{equation}

\begin{prop}[Closed-defect gas at fixed background]
	\label{prop:closed-defect-gas-fixed-charge}
	The normalized partition function \eqref{eq:normalized-closed-defect-gas} is given by
	\begin{equation}
		\mathcal Z_M(q_m,q_e)
		=
		\sum_{\substack{\mathfrak m^\vee\in Z_{d-P}(\underline\Lambda^\vee)\\
				[\mathfrak m^\vee]=[q_m]}}
		\sum_{\substack{\mathfrak e\in Z_P(\underline\Lambda)\\
				[\mathfrak e]=[q_e]}}
		\wt_m(\mathfrak m^\vee)\,
		\wt_e(\mathfrak e)\,
		\omega^{-
			\Lk_{\underline\Lambda}
			(\mathfrak m^\vee-q_m,\mathfrak e-q_e)
		}.
		\label{eq:closed-defect-gas-fixed-charge}
	\end{equation}
\end{prop}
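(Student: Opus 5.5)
The plan is to start from the exact local-defect expansion of Prop.~\ref{prop:exact-local-defect-expansion} and evaluate the compatibility factor $\Xi(\mathfrak m^\vee,\mathfrak e;q_m,q_e)$ term by term, using the analysis carried out just before the statement. Dividing \eqref{eq:exact-local-defect-expansion} by $\mathcal N_M|Z^P(\underline\Lambda)|$, it suffices to show
\begin{equation*}
\Xi(\mathfrak m^\vee,\mathfrak e;q_m,q_e)=|Z^P(\underline\Lambda)|\,\mathbf 1\bigl[\mathfrak m^\vee\in Z_{d-P},\,[\mathfrak m^\vee]=[q_m]\bigr]\,\mathbf 1\bigl[\mathfrak e\in Z_P,\,[\mathfrak e]=[q_e]\bigr]\,\omega^{-\Lk_{\underline\Lambda}(\mathfrak m^\vee-q_m,\mathfrak e-q_e)} .
\end{equation*}

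\textbf{Magnetic constraint.} The delta factors in \eqref{eq:Xi-fixed-charge} have a solution iff $\vartheta_{\underline\Lambda}(\mathfrak m^\vee-q_m)\in B^{P+1}(\underline\Lambda)$. By Cor.~\ref{cor:complexIso}, $\vartheta_{\underline\Lambda}$ is a chain isomorphism intertwining $D$ with $\partial$ on the dual complex. Hence this condition is equivalent to $\mathfrak m^\vee-q_m\in B_{d-P}(\underline\Lambda^\vee)$, which gives \eqref{eq:magnetic-homology-class}. If the condition fails, $\Xi=0$ and the term drops out.

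\textbf{Electric constraint.} Otherwise, fix a solution $\lambda$ and parametrize all solutions as $\lambda+\psi$ with $\psi\in Z^P$, which gives \eqref{eq:Xi-affine-flat-sum}. The map $\psi\mapsto\chi(\int_{q_e-\mathfrak e}\psi)$ is a character of the finite group $Z^P(\underline\Lambda)$. Its sum equals $|Z^P|$ if the character is trivial and $0$ otherwise. The point to check is that triviality on all cocycles is equivalent to $q_e-\mathfrak e\in B_P(\underline\Lambda)$. This is Pontryagin/universal-coefficient duality over $\mathbb Z_N$. One direction is Stokes. For the converse, the pairing $\Omega^P\times\mathcal C_P\to\mathbb Z_N$ is perfect, so the annihilator of $Z^P=\ker D$ is $\operatorname{im}\partial=B_P$, because $D$ is the transpose of $\partial$. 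Annihilating $Z^P\supseteq B^P$ also forces $q_e-\mathfrak e$ to be a cycle, so $\mathfrak e\in Z_P$ automatically. This yields \eqref{eq:electric-homology-class}.

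\textbf{Phase.} Finally, identify the surviving phase $\chi(\int_{q_e-\mathfrak e}\lambda)$. Choose $\Sigma$ with $\partial\Sigma=q_e-\mathfrak e$. Stokes gives $\int_{\partial\Sigma}\lambda=\int_\Sigma D\lambda=\int_\Sigma\vartheta_{\underline\Lambda}(\mathfrak m^\vee-q_m)=I_{\underline\Lambda}(\mathfrak m^\vee-q_m,\Sigma)$. This matches the first term of the generalized linking pairing \eqref{eq:generalized-linking}, and the remaining correction terms vanish for boundaries. One should also check independence of choices. Changing $\lambda$ by a cocycle changes nothing, by the electric constraint. Changing $\Sigma$ by a cycle $z$ changes the result by $I(\mathfrak m^\vee-q_m,z)$, which vanishes mod $N$ because $\mathfrak m^\vee-q_m$ is a boundary. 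This gives \eqref{eq:Xi-linking-phase}.

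Rewriting $\Lk(\mathfrak m^\vee-q_m,q_e-\mathfrak e)=-\Lk(\mathfrak m^\vee-q_m,\mathfrak e-q_e)$ by bilinearity gives the sign in \eqref{eq:closed-defect-gas-fixed-charge}. Restricting the sums to the surviving classes completes the argument.

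The main obstacle I expect is the phase step: reconciling the Stokes computation with the exact definition and sign conventions of $\Lk_{\underline\Lambda}$ in the appendix. This includes confirming that its extra terms vanish identically on boundary pairs, and tracking the orientation conventions of $\vartheta$ and $I_{\underline\Lambda}$. I would handle this by unwinding \eqref{eq:generalized-linking} directly for $\delta_m\in B_{d-P}$, $\delta_e\in B_P$.
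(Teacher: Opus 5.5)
Your proposal is correct and follows the paper's argument essentially step for step. The steps are: solvability of the delta constraints through the chain isomorphism $\vartheta_{\underline\Lambda}$; character orthogonality over $Z^P(\underline\Lambda)$, where your double-annihilator argument supplies the justification the paper leaves implicit for why trivial pairing with all cocycles forces $q_e-\mathfrak e\in B_P(\underline\Lambda)$, valid for every $N\ge 2$; and Stokes together with the boundary reduction of $\Lk_{\underline\Lambda}$ for the phase, where your filling-independence check is exactly Prop.~\ref{prop:choice-dependence}\ref{propItem:bdryLink}, so the sign-convention obstacle you anticipate does not arise.
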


\begin{proof}
	The preceding discussion shows that
	\(\Xi(\mathfrak m^\vee,\mathfrak e;q_m,q_e)\) vanishes unless $[\mathfrak m^\vee]=[q_m]$ and $[\mathfrak e]=[q_e]$.
	In the nonvanishing case, the cocycle sum in
	\eqref{eq:Xi-affine-flat-sum} contributes
	\(|Z^P(\underline\Lambda)|\), and the remaining character is \eqref{eq:Xi-linking-phase}. Substituting this character and $|Z^P(\underline\Lambda)|$ into \eqref{eq:exact-local-defect-expansion} and normalizing as in
	\eqref{eq:normalized-closed-defect-gas} gives \eqref{eq:closed-defect-gas-fixed-charge}.
\end{proof}
The above proposition holds for every integer \(N\ge 2\) since the phase in
\eqref{eq:closed-defect-gas-fixed-charge} involves only the boundary
differences so that it is the ordinary canonical linking number of two boundaries and does not require \(N\) to be prime (Prop.~\ref{prop:choice-dependence}\ref{propItem:bdryLink}). In the rest of \S\ref{sec:fixed-charge-polymer-gas} we will use a chosen bilinear extension of
that boundary-linking to arbitrary cycles, first in the bilinear expansion
\eqref{eq:LbiExpansion} and then in the Fourier resolution of the homology
constraints. For those steps we assume \(N\) is prime, so that
\(\mathbb Z_N=\mathbb F_N\) and the homology groups are finite-dimensional
\(\mathbb F_N\)-vector spaces.

\subsection{Connected polymer expansion}
\label{sec:fixed-charge-connected-polymers}

A nonzero closed defect need not be connected. We therefore decompose each surviving
magnetic and electric defect into its support-connected components:
\begin{equation}
	\mathfrak m^\vee=M_1+\cdots+M_r,
	\qquad
	\mathfrak e=E_1+\cdots+E_s,
	\label{eq:connected-defect-decomposition}
\end{equation}
where the supports of the \(M_i\) are pairwise disjoint, the supports of the \(E_j\) are
pairwise disjoint, and
\begin{equation}
	M_i\in Z_{d-P}(\underline\Lambda^\vee),
	\qquad
	E_j\in Z_P(\underline\Lambda).
\end{equation}

Define
\begin{equation}
\begin{aligned}
	\mathbb M 
	:=&\;
	\{\text{connected }(d-P)\text{-cycles in }\underline\Lambda^\vee\},
	\\
	\mathbb E 
	:=&\;
	\{\text{connected }P\text{-cycles in }\underline\Lambda\}.
\end{aligned}
	\label{eq:polymer-sets-fixed-charge}
\end{equation}
We refer to elements of \(\mathbb M \) as magnetic polymers and to elements of
\(\mathbb E \) as electric polymers.

Because the defect weights factor over cells, they factor over support-disjoint unions.
Thus there exist one-polymer activities
\begin{equation}
	\rho_m:\mathbb M \to \mathbb C,
	\qquad
	\rho_e:\mathbb E \to \mathbb C
\end{equation}
such that
\begin{equation}
	\wt_m(\mathfrak m^\vee)=\prod_{i=1}^r \rho_m(M_i),
	\qquad
	\wt_e(\mathfrak e)=\prod_{j=1}^s \rho_e(E_j).
	\label{eq:one-polymer-factorization}
\end{equation}
Explicitly,
\begin{equation}
	\rho_m(M)
	=
	\prod_{c\in C_{P+1}(\underline\Lambda)}
	\varpi_c \left((\vartheta_{\underline\Lambda}(M))_c\right),
	\qquad
	\rho_e(E)
	=
	\prod_{u\in C_P(\underline\Lambda)}\upsilon_u(E_u).
	\label{eq:one-polymer-activities-fixed-charge}
\end{equation}

Introduce the homology groups
\begin{equation}
	H_m:=H_{d-P}(\underline\Lambda^\vee),
	\qquad
	H_e:=H_P(\underline\Lambda).
	\label{eq:fixed-charge-homology-groups}
\end{equation}
The groups $H_m$ and $H_e$ label the topological classes of the magnetic and electric background charges respectively, i.e., they are the groups in which the homology classes \([q_m]\) and \([q_e]\) take values.

Let $\delta_H(h)$ be the Kronecker delta on any abelian group $H$. Substituting \eqref{eq:connected-defect-decomposition} and
\eqref{eq:one-polymer-factorization} into
\eqref{eq:closed-defect-gas-fixed-charge} gives the fixed-background polymer gas:
\begin{equation}
\begin{aligned}
	\mathcal Z_M(q_m,q_e)
	&=
	\sum_{r,s\ge 0}
	\frac{1}{r!\,s!}
	\sum_{\substack{M_1,\dots,M_r\in \mathbb M \\
			\text{pairwise disjoint}}}
	\sum_{\substack{E_1,\dots,E_s\in \mathbb E \\
			\text{pairwise disjoint}}}
	\Bigl(\prod_{i=1}^r \rho_m(M_i)\Bigr)
	\Bigl(\prod_{j=1}^s \rho_e(E_j)\Bigr)
	\\
	&\quad\times
	\delta_{H_m} \Bigl(\sum_{i=1}^r [M_i]-[q_m]\Bigr)\,
	\delta_{H_e} \Bigl(\sum_{j=1}^s [E_j]-[q_e]\Bigr)\,
	\omega^{-
		\Lk_{\underline\Lambda}
		\left(
		\sum_{i=1}^r M_i-q_m,\;
		\sum_{j=1}^s E_j-q_e
		\right)
	}.
\end{aligned}
\label{eq:fixed-charge-polymer-gas}
\end{equation}

This is an exact perturbation series around the code limit. The local defect amplitudes
\(\varpi_c(m)\) and \(\upsilon_u(e)\) measure departures from the flat vacuum; the
\(\phi\)-sum forces these local excitations to assemble into closed extended objects; and
their connected components are the polymers. The background data enters in two distinct
ways. First, it imposes the homology constraints
\begin{equation}
\sum_i[M_i]=[q_m],
\qquad
\sum_j[E_j]=[q_e].
\end{equation}
Second, after these constraints hold, the surviving character is the linking phase of the
boundary differences
\begin{equation}
\sum_i M_i-q_m\in B_{d-P}(\underline\Lambda^\vee),
\qquad
\sum_j E_j-q_e\in B_P(\underline\Lambda).
\end{equation}
Equivalently, by bilinearity,
\begin{equation}
\begin{aligned}
	&-\Lk_{\underline\Lambda}
	\left(
	\sum_i M_i-q_m,\;
	\sum_j E_j-q_e
	\right)
	\\
	&\qquad
	=
	-\sum_{i,j} \Lk_{\underline\Lambda}
	\left(
	M_i,E_j
	\right)
	+
	\sum_i \Lk_{\underline\Lambda}
	\left(
	M_i,q_e
	\right)
	+
	\sum_j \Lk_{\underline\Lambda}
	\left(
	q_m,E_j
	\right)
	-
	\Lk_{\underline\Lambda}(q_m,q_e).
\end{aligned}
\label{eq:LbiExpansion}
\end{equation}
Thus the exact phase contains the polymer-polymer interaction, the interaction of polymers
with the chosen background representatives, and a background-background constant.\footnote{Note that the individual terms on the right-hand side of
	\eqref{eq:LbiExpansion} involve the chosen generalized linking pairing on
	arbitrary cycles and therefore depend on the auxiliary conventions fixed in
	Appendix~\ref{app:linking}. Only the total combination, which is the linking of
	the two boundary differences appearing on the left-hand side, is canonical.} In the
neutral sector, the empty polymer configuration is the vacuum term. In a nontrivial
fixed-background sector, the leading contribution comes from the smallest polymer
configurations carrying the required homology.

\subsection{Fourier resolution of the homology constraints}
\label{sec:fixed-charge-fourier-resolution}
We can resolve the delta constraint in the partition function \eqref{eq:fixed-charge-polymer-gas} and write it as a sum over two-species polymer gas partition functions. Each two-species partition function will be labeled by the \emph{topological sector} data $(\ell_m,\ell_e) \in \widehat H_m \times \widehat H_e$ where $\widehat H_m$ and $\widehat H_e$ are the linear duals of the $\mathbb F_N$-vector spaces $H_m$ and $H_e$ respectively. Given the background charges $(q_m,q_e) \in Z_{d-P}(\underline\Lambda^\vee) \times Z_P(\underline\Lambda)$ and for any $(\ell_m,\ell_e) \in \widehat H_m \times \widehat H_e$ define the 1-polymer activity
\begin{equation}
	\rho^{\ell_m,\ell_e;q_m,q_e}(A)
	:=
	\begin{cases}
		\rho_m(A)\,
		\omega^{\ell_m([A])+\Lk_{\underline\Lambda}(M,q_e)},
		& A\in \mathbb M ,
		\\
		\rho_e(A)\,
		\omega^{\ell_e([A])+\Lk_{\underline\Lambda}(q_m,E)},
		& A\in \mathbb E .
	\end{cases}
	\label{eq:fixed-background-unified-polymer-activity}
\end{equation}
This is a modification of the activity in \eqref{eq:fixed-charge-polymer-gas} by the $(\ell_m,\ell_e)$-dependent phase.

Define an interaction between electric and magnetic polymer pairs as follows. Same-species polymers have hard-core exclusion:
two magnetic polymers are compatible only when their supports are disjoint, and similarly
for two electric polymers. Opposite-species polymers have no hard-core exclusion; instead,
a magnetic polymer \(M\) and an electric polymer \(E\) interact through the linking phase
\(\omega^{-\Lk_{\underline\Lambda}(M,E)}\). Equivalently, set
\begin{equation}
	\zeta(A,B)
	:=
	\begin{cases}
		-1,
		& A,B \text{ same species with overlapping supports},
		\\
		0,
		& A,B \text{ same species with disjoint supports},
		\\
		\omega^{-\Lk_{\underline\Lambda}(A,B)}-1,
		& A\in \mathbb M ,\ B\in \mathbb E ,
		\\
		\omega^{-\Lk_{\underline\Lambda}(B,A)}-1,
		& A\in \mathbb E ,\ B\in \mathbb M .
	\end{cases}
	\label{eq:zeta}
\end{equation}
Then
\begin{equation}
	1+\zeta(M,E)=\omega^{-\Lk_{\underline\Lambda}(M,E)}
	\label{eq:MEphase}
\end{equation}
for opposite-species pairs, while \(1+\zeta(A,B)\) enforces hard-core exclusion for
overlapping same-species pairs.

\begin{prop}[Two-species polymer gas]
	\label{prop:fourier-resolved-polymer-gas}
	The normalized fixed-background partition function \eqref{eq:fixed-charge-polymer-gas} decomposes as a sum over magnetic and electric topological sectors
	\begin{equation}
		\mathcal Z_M(q_m,q_e)
		=
		\frac{1}{|\widehat H_m|\,|\widehat H_e|}
		\sum_{\ell_m\in \widehat H_m}
		\sum_{\ell_e\in \widehat H_e}
		\omega^{-\ell_m([q_m])-\ell_e([q_e])}\,
		\widehat{\mathcal Z}_M(\ell_m,\ell_e;q_m,q_e).
		\label{eq:Zme-deltaResolved}
	\end{equation}
	For each fixed \((\ell_m,\ell_e;q_m,q_e)\), the sector amplitude is the abstract
	two-species polymer partition function
	\begin{equation}
		\widehat{\mathcal Z}_M(\ell_m,\ell_e;q_m,q_e)
		=
		\omega^{-\Lk_{\underline\Lambda}(q_m,q_e)}
		\sum_{n\ge 0}
		\frac{1}{n!}
		\sum_{A_1,\dots,A_n\in\mathbb A}
		\prod_{k=1}^n \rho^{\ell_m,\ell_e;q_m,q_e}(A_k)
		\prod_{1\le k<l\le n}\left(1+\zeta(A_k,A_l)\right).
		\label{eq:Zlmle}
	\end{equation}
\end{prop}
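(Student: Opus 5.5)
We start from the fixed-background polymer gas \eqref{eq:fixed-charge-polymer-gas} and first resolve the two homology Kronecker deltas by finite Fourier analysis on the $\mathbb F_N$-vector spaces $H_m$ and $H_e$ ($N$ prime). Concretely, for $h\in H_m$ we write
\begin{equation}
	\delta_{H_m}(h)=\frac{1}{|\widehat H_m|}\sum_{\ell_m\in\widehat H_m}\omega^{\ell_m(h)},
\end{equation}
and similarly for $H_e$. Here $\widehat H_m$ is identified with linear functionals $H_m\to\mathbb F_N$, so that $\ell\mapsto\omega^{\ell(h)}$ runs over all characters. We apply this with $h=\sum_i[M_i]-[q_m]$ and use linearity of $\ell_m$. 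The factor $\omega^{-\ell_m([q_m])}$ comes out of the polymer sums, and the remaining factor splits as $\prod_i\omega^{\ell_m([M_i])}$; the electric side is treated identically. Interchanging the finite sums over $\ell_m,\ell_e$ with the polymer sums then produces the prefactor $|\widehat H_m|^{-1}|\widehat H_e|^{-1}\omega^{-\ell_m([q_m])-\ell_e([q_e])}$ in \eqref{eq:Zme-deltaResolved}.

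Next, we expand the linking phase using the bilinear decomposition \eqref{eq:LbiExpansion}. We take it as given that the chosen generalized pairing of Appendix~\ref{app:linking} is bilinear on arbitrary cycles, and that $\Lk(\delta_m,\delta_e)$ for boundaries agrees with the bilinear extension. Together these make \eqref{eq:LbiExpansion} an identity once the homology constraints hold; since the Fourier sum imposes them only after summation, we need the identity termwise for all configurations. This is precisely why we keep the bilinear extension on the right-hand side throughout. The resulting phase has four pieces:
\begin{itemize}
	\item the constant $\omega^{-\Lk(q_m,q_e)}$, which is pulled out;
	\item the single-polymer factors $\omega^{\Lk(M_i,q_e)}$ and $\omega^{\Lk(q_m,E_j)}$, which, combined with $\omega^{\ell_m([M_i])}$, $\omega^{\ell_e([E_j])}$ and $\rho_m,\rho_e$, give exactly the activities \eqref{eq:fixed-background-unified-polymer-activity};
	\item the pair terms $\omega^{-\Lk(M_i,E_j)}=1+\zeta(M_i,E_j)$, by \eqref{eq:MEphase}.
\end{itemize}

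Finally, we repackage the ordered two-species sum into a single-species sum over $\mathbb A=\mathbb M\sqcup\mathbb E$. The constraint that the $M_i$ are pairwise disjoint is encoded as $\prod_{i<i'}(1+\zeta(M_i,M_{i'}))$, since this product equals $1$ for disjoint supports and $0$ otherwise; the electric case is the same. The prefactor $\frac1{r!s!}$ over ordered tuples then becomes $\frac1{n!}$ over ordered $n$-tuples from $\mathbb A$ with $n=r+s$. This works because each unordered choice of which $r$ of the $n$ slots are magnetic contributes $\binom{n}{r}$, and $\binom nr/n!=1/(r!s!)$. The integrand is symmetric under permutations, because $\zeta$ is symmetric by \eqref{eq:zeta}, so the product over $k<l$ covers all pairs regardless of order. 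Summing over $r+s=n$ gives \eqref{eq:Zlmle}.

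The main obstacle is the bookkeeping of the linking phase: showing that the bilinear expansion is legitimate termwise, including sign and orientation conventions such as $\Lk(M,q_e)$ versus $\Lk(q_m,E)$. The combinatorial symmetrization is routine by comparison. Should bilinearity of the extension on cycles fail as stated, the fallback is to apply the Fourier resolution first and argue that only constraint-satisfying configurations survive, at which point \eqref{eq:LbiExpansion} holds.
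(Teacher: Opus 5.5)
Your proposal is correct and follows essentially the same route as the paper. You Fourier-resolve the two homology deltas, then use the bilinear expansion \eqref{eq:LbiExpansion} to split the phase into three parts: the background--background constant, the polymer--background phases absorbed into \eqref{eq:fixed-background-unified-polymer-activity}, and the pair factors \eqref{eq:MEphase}. Finally you repackage the sum using $\binom{n}{r}/n! = 1/(r!\,s!)$ together with the symmetry of $\zeta$. Your concern that the expansion must hold termwise off the constraint surface is settled by construction: $\Lk_{\underline\Lambda}$ in \eqref{eq:generalized-linking} is bilinear on all cycles and agrees with the canonical boundary linking (Prop.~\ref{prop:choice-dependence}). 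So your fallback is unnecessary, and it could not work anyway, because each sector amplitude \eqref{eq:Zlmle} also sums over configurations that violate the homology constraints.
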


\begin{proof}
	The delta functions in \eqref{eq:fixed-charge-polymer-gas} can be resolved by using
	\begin{equation}
		\delta_H(h) = \frac{1}{|\widehat H|} \sum_{\ell \in \widehat H} \omega^{\ell(h)}, \label{eq:delta-H-fourier}
	\end{equation}
	The resolution holds for any finite $\mathbb F_N$-vector space $H$, and in particular, for $H_m$ and $H_e$. Using \eqref{eq:delta-H-fourier} in
	\eqref{eq:fixed-charge-polymer-gas} produces the factor $|\widehat H_m|^{-1}|\widehat H_e|^{-1}
	\omega^{-\ell_m([q_m])-\ell_e([q_e])}$
	and inserts
	\[
	\prod_i\omega^{\ell_m([M_i])},
	\qquad
	\prod_j\omega^{\ell_e([E_j])}
	\]
	into the magnetic and electric polymer weights. These are the $(\ell_m,\ell_e)$-dependent phases in \eqref{eq:fixed-background-unified-polymer-activity}.
	
	Next expand the linking phase in \eqref{eq:fixed-charge-polymer-gas} by bilinearity, as in \eqref{eq:LbiExpansion}. The polymer-background terms are precisely those absorbed into \eqref{eq:fixed-background-unified-polymer-activity}. The background-background term gives
	the overall phase
	\(\omega^{-\Lk_{\underline\Lambda}(q_m,q_e)}\). The polymer-polymer terms
	give the opposite-species pair factors encoded by \eqref{eq:MEphase}.
	
	It remains only to pass from the separated magnetic/electric sum to the abstract
	one-set notation. Same-species overlap is exactly the hard-core exclusion encoded by the
	first two cases of \eqref{eq:zeta}, while opposite-species pairs contribute the linking phase
	\eqref{eq:MEphase}. Finally, summing over ordered \(n\)-tuples in
	\(\mathbb A=\mathbb M \sqcup \mathbb E\) with the factor
	\(1/n!\) is equivalent to summing over \(r\) magnetic and \(s\) electric polymers with
	the factor \(1/(r!\,s!)\), since
	\[
	\frac{1}{n!}\binom{n}{r}
	=
	\frac{1}{r!\,s!}.
	\]
	This proves \eqref{eq:Zme-deltaResolved} and \eqref{eq:Zlmle}.
\end{proof}

Thus the Fourier transform over homology separates the fixed-background polymer gas into
sector amplitudes labelled by \((\ell_m,\ell_e)\). In each topological sector, the dependence on the
background representatives is split into a constant background-background phase and
background-twisted one-polymer activities, while the genuine two-body interaction is the
same electric-magnetic linking phase as in the neutral theory.

\section{A rigorous low-activity region}
\label{sec:low-activity}

The purpose of this section is to obtain rigorous control of the background-dependent
fixed-charge partition functions
\(\mathcal Z_M(q_m,q_e)\) in the low-activity regime. The key idea is to compare the
absolute value of the complex polymer expansion to positive same-species hard-core polymer
gases and then apply abstract polymer criteria due to Koteck\'y-Preiss,
Fern\'andez-Procacci, Bissacot-Fern\'andez-Procacci, and related cluster-expansion
methods
\cite{KoteckyPreiss1986,FernandezProcacci2007,BissacotFernandezProcacci2010,Ueltschi2004}.

When the Fourier-resolved representation \eqref{eq:Zlmle} is available, the corresponding sector amplitudes \(\widehat{\mathcal Z}_M(\ell_m,\ell_e;q_m,q_e)\) have generally complex one-polymer activities \eqref{eq:fixed-background-unified-polymer-activity}, and opposite-species polymers interact through the pure phase \eqref{eq:MEphase}. In our present formulation this sector decomposition is available for prime \(N\), and in that case we first bound the sector amplitudes and then deduce bounds for the background-dependent partition functions \(\mathcal Z_M(q_m,q_e)\). However, as explained in Remark~\ref{rmk:directZmeBound}, the same fixed-background partition functions can also be bounded directly, without passing through the sector amplitudes. The outcome is a background-uniform low-activity region in which the partition functions are absolutely controlled uniformly in finite volume, large connected polymers are exponentially suppressed, and homologically nontrivial polymers are exponentially suppressed on the scale of the corresponding spacetime systole, i.e.\ the minimal support size of a nontrivial homology cycle. The main limitation in the resulting region comes not from the abstract polymer criterion itself, but from the available combinatorial bound on the number of rooted connected closed polymers.

The prime-\(N\) assumption enters this section only through the sector
decomposition \eqref{eq:Zlmle}. The positive majorant gases and the resulting low-activity
bounds for the fixed-background partition functions \(\mathcal Z_M(q_m,q_e)\) apply for
every integer \(N\ge 2\).

\subsection{Sectorwise gas and absolute majorization}
\label{sec:low-activity-abstract-polymer}

Fix background representatives
\begin{equation}
q_m\in Z_{d-P}(\underline\Lambda^\vee),
\qquad
q_e\in Z_P(\underline\Lambda),
\end{equation}
and topological sector charges
\begin{equation}
(\ell_m,\ell_e)\in \widehat H_m\times \widehat H_e.
\end{equation}
We use the two-species polymer representation
\eqref{eq:Zlmle}. Thus the polymer set is
$\mathbb A=\mathbb M \sqcup\mathbb E $,
the background-twisted one-polymer activity is
\(\rho^{\ell_m,\ell_e;q_m,q_e}\), and the pair interaction is \(\zeta\) from
\eqref{eq:zeta}.

The important observation is that all background and character dependences in the
one-polymer activities are phases. Hence
\begin{equation}
	\left|\rho^{\ell_m,\ell_e;q_m,q_e}(M)\right|
	=
	|\rho_m(M)|,
	\qquad
	\left|\rho^{\ell_m,\ell_e;q_m,q_e}(E)\right|
	=
	|\rho_e(E)|.
	\label{eq:background-activity-moduli}
\end{equation}
Likewise, the overall background phase
\(\omega^{-\Lk_{\underline\Lambda}(q_m,q_e)}\) has modulus one, and for
opposite-species pairs
\begin{equation}
	|1+\zeta(M,E)|
	=
	\left|\omega^{-\Lk_{\underline\Lambda}(M,E)}\right|
	=
	1.
	\label{eq:opposite-phase-modulus}
\end{equation}
For same-species pairs, \(1+\zeta(A,B)\) is either \(0\) or \(1\), depending on whether the
two polymers overlap or are support-disjoint. Thus taking absolute values removes all
electric-magnetic phases and retains only same-species hard-core exclusion. This allows us to use the Fern\'andez-Procacci criterion \cite{FernandezProcacci2007} for a hard-core polymer gas. Instead of applying this criterion to the complex polymer gas, we use the fact that the absolute value of the two-species expansion is bounded by a product of two positive one-species hard-core gases.

For \(s\in\{m,e\}\), write
\begin{equation}
\mathbb A_m:=\mathbb M ,
\qquad
\mathbb A_e:=\mathbb E .
\end{equation}
The associated positive hard-core polymer gas of species \(s\) is
\begin{equation}
	\mathcal Z_s^{\mathrm{maj}}
	:=
	\sum_{n\ge 0}
	\frac{1}{n!}
	\sum_{A_1,\dots,A_n\in \mathbb A_s}
	\left(\prod_{k=1}^n |\rho_s(A_k)|\right)
	\prod_{1\le k<l\le n}\mathbf 1\{A_k\sim A_l\},
	\label{eq:majorant-gas}
\end{equation}
where same-species compatibility means support-disjointness.\footnote{Here $\mathbf 1\{A_k\sim A_l\}$ is the compatibility indicator, equal to 1 if $A_k$ and $A_l$ have disjoint supports and 0 otherwise.}

For every fixed \((\ell_m,\ell_e;q_m,q_e)\), the absolute value of the sectorwise
expansion satisfies
\begin{equation}
	\left|\widehat{\mathcal Z}_M(\ell_m,\ell_e;q_m,q_e)\right|
	\le
	\mathcal Z_m^{\mathrm{maj}}\,
	\mathcal Z_e^{\mathrm{maj}}.
	\label{eq:twisted-sector-majorant-bound}
\end{equation}
Indeed, after absolute values are taken, magnetic and electric polymers no longer interact
with one another, while same-species hard-core constraints remain. Consequently, absolute
convergence of the two positive gases \(\mathcal Z_m^{\mathrm{maj}}\) and
\(\mathcal Z_e^{\mathrm{maj}}\) implies absolute convergence, uniformly in
\((\ell_m,\ell_e;q_m,q_e)\), of the sectorwise amplitudes
\(\widehat{\mathcal Z}_M(\ell_m,\ell_e;q_m,q_e)\).

Since the fixed-background amplitudes are recovered by the finite sum \eqref{eq:Zme-deltaResolved}, the same majorant controls
\(\mathcal Z_M(q_m,q_e)\). Explicitly,
\begin{equation}
	|\mathcal Z_M(q_m,q_e)|
	\le
	\frac{1}{|\widehat H_m|\,|\widehat H_e|}
	\sum_{\ell_m,\ell_e}
	\left|\widehat{\mathcal Z}_M(\ell_m,\ell_e;q_m,q_e)\right|
	\le
	\mathcal Z_m^{\mathrm{maj}}\,
	\mathcal Z_e^{\mathrm{maj}}.
	\label{eq:fixed-charge-majorant-bound}
\end{equation}

Whenever \(\mathcal Z_s^{\mathrm{maj}}\) is finite and nonzero, define the pinned
one-polymer amplitude of the positive gas by
\begin{equation}
	\Pi_s^{\mathrm{maj}}(A)
	:=
	\frac{\partial}{\partial |\rho_s(A)|}
	\log \mathcal Z_s^{\mathrm{maj}},
	\qquad
	A\in\mathbb A_s.
	\label{eq:maj-pinned}
\end{equation}
Equivalently,
\begin{equation}
	|\rho_s(A)|\,\Pi_s^{\mathrm{maj}}(A)
	=
	\frac{|\rho_s(A)|}{\mathcal Z_s^{\mathrm{maj}}}
	\frac{\partial}{\partial |\rho_s(A)|} \mathcal Z_s^{\mathrm{maj}}.
	\label{eq:maj-pinned-identity}
\end{equation}
This is the normalized positive weight of configurations containing \(A\). In the original
complex sectorwise gas, the analogous quantity is not a probability. The majorant
quantity \eqref{eq:maj-pinned-identity} is the positive envelope that will control polymer
tails.

\subsection{Activity and counting bounds}
\label{sec:activity-counting-hypotheses}

We now formulate the two model-dependent inputs needed for convergence in large volume.

For magnetic polymers, define the carrier support on primal \((P+1)\)-cells by
\begin{equation}
	\operatorname{car}(M)
	:=
	\theta_{\underline\Lambda}(\Supp(M))
	\subset C_{P+1}(\underline\Lambda),
	\qquad
	|M|:=|\operatorname{car}(M)|,
	\qquad
	M\in \mathbb M .
	\label{eq:magnetic-carrier-size}
\end{equation}
For electric polymers, set
\begin{equation}
	\operatorname{car}(E):=\Supp(E)\subset C_P(\underline\Lambda),
	\qquad
	|E|:=|\Supp(E)|,
	\qquad
	E\in \mathbb E .
	\label{eq:electric-carrier-size}
\end{equation}

Define the carrier-cell sets
\begin{equation}
	\mathfrak C_m:=C_{P+1}(\underline\Lambda),
	\qquad
	\mathfrak C_e:=C_P(\underline\Lambda).
	\label{eq:carrier-cell-sets}
\end{equation}
Thus, for \(A\in\mathbb A_s\), the statement \(u\in\operatorname{car}(A)\) is meaningful
for \(u\in\mathfrak C_s\).

Define the uniform nontrivial local amplitudes by
\begin{equation}
	t_m
	:=
	\max_{\substack{c\in C_{P+1}(\underline\Lambda)\\ m\in \mathbb Z_N\setminus\{0\}}}
	|\varpi_c(m)|,
	\qquad
	t_e
	:=
	\max_{\substack{u\in C_P(\underline\Lambda)\\ e\in \mathbb Z_N\setminus\{0\}}}
	|\upsilon_u(e)|.
	\label{eq:tm-te}
\end{equation}
Since the one-polymer activities
factor over occupied carrier cells, see \eqref{eq:one-polymer-activities-fixed-charge},
we have
\begin{equation}
	|\rho_m(M)|\le t_m^{|M|},
	\qquad
	|\rho_e(E)|\le t_e^{|E|}.
	\label{eq:activity-hypothesis}
\end{equation}
The parameters \(t_m,t_e\) tend to zero in the code limit
\eqref{eq:codeLimit}.

For \(u\in\mathfrak C_s\) and \(n\ge 1\), define the number of polymers of size $n$ containing $u$
\begin{equation}
	\mathcal N_s(u;n)
	:=
	\#\left\{
	A\in\mathbb A_s:
	u\in\operatorname{car}(A),\ |A|=n
	\right\}.
	\label{eq:Ns-def}
\end{equation}
We assume that this number grows at most exponentially with increasing polymer size, i.e., there exists a constant \(C_s\ge 1\) such that
\begin{equation}
	\mathcal N_s(u;n)\le C_s^n
	\qquad
	\text{for every }u\in\mathfrak C_s,\ n\ge 1.
	\label{eq:counting-hypothesis}
\end{equation}

This counting hypothesis is automatic on families of carrier adjacency graphs of uniformly
bounded degree. Indeed, let \(\Delta_s\) be the maximal degree of the carrier graph for
species \(s\). A connected support of size \(n\) containing a fixed carrier cell \(u\) has a
spanning tree. A depth-first traversal of such a spanning tree produces a closed walk of
length \(2n-2\), and hence the number of possible connected supports is at most
\(\Delta_s^{2n-2}\). Since each occupied carrier cell carries a nonzero coefficient in
\(\mathbb Z_N\setminus\{0\}\), the number of coefficient assignments is at most \((N-1)^n\).
Therefore
\begin{equation}
	\mathcal N_s(u;n)
	\le
	(N-1)^n\Delta_s^{2n-2}
	\le
	\left((N-1)\max\{1,\Delta_s^2\}\right)^n.
	\label{eq:counting-hypothesis-crude}
\end{equation}
Thus one may always take
\begin{equation}
	C_s=(N-1)\max\{1,\Delta_s^2\}.
	\label{eq:Cs-crude}
\end{equation}

A sharper explicit choice follows from the rooted connected-set bound: in a graph of
maximum degree \(\Delta_s\), the number of connected vertex sets with \(k+1\) vertices containing
a fixed root is at most \((e(\Delta_s-1))^k\); see \cite[p.~2]{KangasEtAl2018}. Applying this to
the carrier graph gives
\begin{equation}
	\mathcal N_s(u;n)
	\le
	(N-1)^n \left(e(\Delta_s-1)\right)^{n-1}
	\le
	\left((N-1)\max\{1,e(\Delta_s-1)\}\right)^n.
	\label{eq:counting-hypothesis-sharp}
\end{equation}
Hence, whenever \(\Delta_s\ge 2\), one may use the improved explicit choice
\begin{equation}
	C_s=(N-1)e(\Delta_s-1).
	\label{eq:Cs-sharp}
\end{equation}

\begin{remark}[The combinatorial bottleneck]
	\label{rem:counting-bottleneck}
	The counting input \eqref{eq:counting-hypothesis} is the weakest point in the
	present derivation of the low-activity region. The bounds
	\eqref{eq:counting-hypothesis-crude} and \eqref{eq:counting-hypothesis-sharp}
	effectively count general connected labeled carrier supports, and so do not
	fully exploit the fact that the polymers are support-connected \emph{closed}
	cycles. Consequently, the constants \(C_s\) in
	\eqref{eq:main-low-activity-condition} and
	\eqref{eq:main-low-activity-region} are controlled by a purely combinatorial
	counting problem for rooted connected closed polymers. Any improved bound on
	\(\mathcal N_s(u;n)\) incorporating closedness would therefore directly
	sharpen the rigorous low-activity region, without changing the analytic part
	of the argument.
\end{remark}

\subsection{The low-activity criterion}
\label{sec:low-activity-criterion}

We now combine the activity and counting estimates with the
Fern\'andez-Procacci criterion for hard-core polymer gases
\cite[Theorem~1]{FernandezProcacci2007}.

Let $\mathbb A$ be a set of hard-core polymers with positive activities $z:\mathbb A \to \mathbb R_{\ge 0}$. Write \(B\nsim A\) to mean
that \(B\) is incompatible with \(A\) and assume $A \nsim A$ for all $A \in \mathbb A$. For any function $\mu:\mathbb A \to \mathbb R_{\ge 0}$ define
\begin{equation}
	N_A^*(\mu)
	:=
	1+
	\sum_{n\ge 1}
	\frac{1}{n!}
	\sum_{\substack{B_1,\dots,B_n \in \mathbb A\\
			B_i\nsim A,\;
			B_i\sim B_j}}
	\prod_{i=1}^n \mu(B_i).
	\label{eq:FP-neighborhood}
\end{equation}
The Fern\'andez-Procacci criterion says that if a positive activity function $\mu:\mathbb A \to \mathbb R_{\ge 0}$ exists
such that
\begin{equation}
	z(A)\,N_A^*(\mu)\le \mu(A)
	\qquad
	\text{for every polymer }A \in \mathbb A,
	\label{eq:FP-criterion}
\end{equation}
then the hard-core cluster expansion converges in the large volume limit and the pinned expansion with activity $z$ satisfies\footnote{Pinned expansion as in \eqref{eq:maj-pinned} but with activity $z$ in place of $\rho$ in \eqref{eq:majorant-gas}.}
\begin{equation}
	z(A)\,\Pi(A)\le \mu(A).
	\label{eq:FP-pinned}
\end{equation}
For subset polymers with incompatibility given by overlap, the
Fern\'andez-Procacci neighborhood partition function admits the sitewise upper bound
used below; compare also the polymer-on-a-graph specialization in
\cite[Section~3.2]{FernandezProcacci2007} and
\cite[Section~2.3 and Proposition~4.1]{BissacotFernandezProcacci2010}.

Fix \(a_s>0\) and set
\begin{equation}
	\mu_s(A)
	:=
	|\rho_s(A)|\,e^{a_s|A|},
	\qquad
	A\in\mathbb A_s.
	\label{eq:mu-choice}
\end{equation}
Let
\begin{equation}
	S_s(a_s)
	:=
	\sup_{u\in\mathfrak C_s}
	\sum_{\substack{A\in\mathbb A_s\\ u\in\operatorname{car}(A)}}
	\mu_s(A).
	\label{eq:sitewise-sum}
\end{equation}
If
\begin{equation}
	S_s(a_s)\le e^{a_s}-1,
	\label{eq:sitewise-FP}
\end{equation}
then the Fern\'andez-Procacci condition holds for the species-\(s\) majorant gas.

Indeed, fix \(A\in\mathbb A_s\). Every polymer \(B \in \mathbb A_s\) incompatible with \(A\) intersects
\(\operatorname{car}(A)\). Moreover, in \(N_A^*(\mu_s)\), the polymers
\(B_1,\dots,B_n\) are required to be mutually compatible, hence have disjoint carrier
supports. Therefore the contribution of all mutually compatible families
\(\{B_1,\dots,B_n\}\) incompatible with \(A\) is bounded by \emph{independently} choosing, for
each carrier cell of \(A\), at most one polymer containing that carrier cell. This gives
\begin{equation}
	N_A^*(\mu_s)
	\le
	\prod_{u\in\operatorname{car}(A)}
	\Biggl(
	1+
	\sum_{\substack{B\in\mathbb A_s\\ u\in\operatorname{car}(B)}}
	|\rho_s(B)|e^{a_s|B|}
	\Biggr)
	\le
	(1+S_s(a_s))^{|A|}.
	\label{eq:NA-star-site-bound}
\end{equation}
The first inequality is not an equality because in choosing, for each carrier cell $u$ of $A$, at most one polymer containing $u$ independently, we are free to choose polymers $B, B'$ containing distinct carrier cells $u,u'\in \mathrm{car}(A)$, $u\in\mathrm{car}(B)$, $u'\in\mathrm{car}(B')$, $u \ne u'$ that are nevertheless incompatible due to $\mathrm{car}(B) \cap \mathrm{car}(B') \ne \varnothing$. If \eqref{eq:sitewise-FP} holds, then
\begin{equation}
N_A^*(\mu_s)\le e^{a_s|A|}.
\end{equation}
Consequently,
\begin{equation}
|\rho_s(A)|\,N_A^*(\mu_s)
\le
|\rho_s(A)|e^{a_s|A|}
=
\mu_s(A),
\end{equation}
which is exactly \eqref{eq:FP-criterion}.

Using \eqref{eq:activity-hypothesis} and \eqref{eq:counting-hypothesis}, we estimate
\begin{align}
	S_s(a_s)
	\le
	\sup_{u\in\mathfrak C_s}
	\sum_{n\ge 1}
	\sum_{\substack{A\in\mathbb A_s\\
			u\in\operatorname{car}(A),\ |A|=n}}
	t_s^n e^{a_s n}
	\le
	\sum_{n\ge 1}
	C_s^n(t_s e^{a_s})^n
	=
	\frac{C_s t_s e^{a_s}}{1-C_s t_s e^{a_s}},
	\label{eq:sitewise-series}
\end{align}
provided \(C_s t_s e^{a_s}<1\). Thus a sufficient condition for
\eqref{eq:sitewise-FP} is
\begin{equation}
	\frac{C_s t_s e^{a_s}}{1-C_s t_s e^{a_s}}
	\le
	e^{a_s}-1
	\qquad\Leftrightarrow\qquad
	C_s t_s \le e^{-a_s}(1-e^{-a_s}).
	\label{eq:main-low-activity-condition}
\end{equation}

\begin{definition}[Rigorous low-activity region]
	\label{def:low-activity-region}
	We say that the model lies in the rigorous low-activity region if there exist
	\(a_m,a_e>0\) such that
	\begin{equation}
		C_m t_m \le e^{-a_m}(1-e^{-a_m}),
		\qquad
		C_e t_e \le e^{-a_e}(1-e^{-a_e}).
		\label{eq:main-low-activity-region}
	\end{equation}
\end{definition}

Since \(\sup_{a>0}e^{-a}(1-e^{-a})=1/4\), the region is nonempty whenever
\(C_s t_s<1/4\), although the parameterized form
\eqref{eq:main-low-activity-region} is more useful for tail bounds.

\begin{theorem}[Low-activity control]
	\label{thm:low-activity-region}
	Assume that the physical one-polymer activities satisfy the size bound
	\eqref{eq:activity-hypothesis}, that the polymer counting bound
	\eqref{eq:counting-hypothesis} holds, and that the corresponding low-activity
	condition \eqref{eq:main-low-activity-region} is satisfied. In what follows,
	``convergent'' means absolutely convergent uniformly in finite volume, and therefore
	stable under the large-volume limit.
	
	For \(s\in\{m,e\}\) and \(A\in\mathbb A_s\), define the unnormalized sector and
	fixed-background occupation expectations by
	\begin{equation}
		\left\langle \mathbf 1_A \right\rangle_{\ell_m,\ell_e;q_m,q_e}
		:=
		\rho^{\ell_m,\ell_e;q_m,q_e}(A)\,
		\frac{\partial}{\partial \rho^{\ell_m,\ell_e;q_m,q_e}(A)}
		\widehat{\mathcal Z}_M(\ell_m,\ell_e;q_m,q_e),
		\label{eq:physical-sector-occupancy}
	\end{equation}
	and
	\begin{equation}
		\left\langle \mathbf 1_A \right\rangle_{q_m,q_e}
		:=
		\rho_s(A)\,
		\frac{\partial}{\partial \rho_s(A)}
		\mathcal Z_M(q_m,q_e).
		\label{eq:physical-fixed-background-occupancy}
	\end{equation}
	Then:
	\begin{enumerate}[label=(\roman*)]
		\item \label{thmItem:Zfinite}
		For every \((\ell_m,\ell_e)\in\widehat H_m\times\widehat H_e\) and every
		background pair \((q_m,q_e)\), the complex sector amplitude
		\(\widehat{\mathcal Z}_M(\ell_m,\ell_e;q_m,q_e)\) of \eqref{eq:Zlmle} and the
		fixed-background partition function \(\mathcal Z_M(q_m,q_e)\), reconstructed from
		the sector amplitudes by \eqref{eq:Zme-deltaResolved}, converge.
		
		\item \label{thmItem:occFinite}
		For every \(s\in\{m,e\}\), every \(A\in\mathbb A_s\), every
		\((\ell_m,\ell_e)\in\widehat H_m\times\widehat H_e\), and every \((q_m,q_e)\), the
		unnormalized occupation expectations
		\(\langle \mathbf 1_A\rangle_{\ell_m,\ell_e;q_m,q_e}\) and
		\(\langle \mathbf 1_A\rangle_{q_m,q_e}\) converge. Moreover, there exists a finite
		constant \(C_{M,\underline\Lambda,s}^{\mathrm{la}}\), independent of \(A\),
		\((\ell_m,\ell_e)\), and \((q_m,q_e)\), such that
		\begin{align}
			\left|
			\left\langle \mathbf 1_A \right\rangle_{\ell_m,\ell_e;q_m,q_e}
			\right|
			\le&\;
			C_{M,\underline\Lambda,s}^{\mathrm{la}}\,(t_s e^{a_s})^{|A|},
			\label{eq:sector-occupancy-bound}
		\\
			\left|
			\left\langle \mathbf 1_A \right\rangle_{q_m,q_e}
			\right|
			\le&\;
			C_{M,\underline\Lambda,s}^{\mathrm{la}}\,(t_s e^{a_s})^{|A|}.
			\label{eq:fixed-background-occupancy-bound}
		\end{align}
	\end{enumerate}
\end{theorem}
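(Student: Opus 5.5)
The plan is to reduce every statement to the positive same-species majorant gases $\mathcal Z_s^{\mathrm{maj}}$ of \eqref{eq:majorant-gas}. Then we use the Fern\'andez-Procacci consequences already verified in \S\ref{sec:low-activity-criterion}: under \eqref{eq:main-low-activity-region}, the estimates \eqref{eq:sitewise-series}--\eqref{eq:main-low-activity-condition} give \eqref{eq:sitewise-FP}, and hence \eqref{eq:FP-criterion} with $\mu_s(A)=|\rho_s(A)|e^{a_s|A|}$.

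For \ref{thmItem:Zfinite}, expand $\widehat{\mathcal Z}_M$ in \eqref{eq:Zlmle} and take absolute values termwise. By \eqref{eq:background-activity-moduli}, \eqref{eq:opposite-phase-modulus}, and the $0/1$ nature of the same-species factors, each term is bounded by the corresponding term of $\mathcal Z_m^{\mathrm{maj}}\mathcal Z_e^{\mathrm{maj}}$. This gives absolute convergence together with \eqref{eq:twisted-sector-majorant-bound}, and \eqref{eq:fixed-charge-majorant-bound} then handles $\mathcal Z_M(q_m,q_e)$. Uniformity in volume is the usual FP conclusion. Concretely, $\log\mathcal Z_s^{\mathrm{maj}}\le\sum_{A}|\rho_s(A)|\Pi_s^{\mathrm{maj}}(A)\le\sum_A\mu_s(A)\le|\mathfrak C_s|\,S_s(a_s)$. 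So the free energy per carrier cell is bounded by $S_s(a_s)\le e^{a_s}-1$, independently of $\underline\Lambda$ and $M$.

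For \ref{thmItem:occFinite}, differentiate the absolutely convergent series term by term. The derivative $\rho(A)\partial_{\rho(A)}$ of \eqref{eq:Zlmle} selects configurations containing $A$, weighted by the multiplicity of $A$ (at most one, by hard core). Taking moduli as before bounds $|\langle\mathbf 1_A\rangle_{\ell_m,\ell_e;q_m,q_e}|$ by
\[
\Bigl(|\rho_s(A)|\,\partial_{|\rho_s(A)|}\mathcal Z_s^{\mathrm{maj}}\Bigr)\,\mathcal Z_{s'}^{\mathrm{maj}}
=|\rho_s(A)|\,\Pi_s^{\mathrm{maj}}(A)\,\mathcal Z_m^{\mathrm{maj}}\mathcal Z_e^{\mathrm{maj}},
\]
where we have used \eqref{eq:maj-pinned-identity}. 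Then \eqref{eq:FP-pinned} gives $|\rho_s(A)|\Pi_s^{\mathrm{maj}}(A)\le\mu_s(A)\le(t_se^{a_s})^{|A|}$ by \eqref{eq:activity-hypothesis}. Hence \eqref{eq:sector-occupancy-bound} holds with $C^{\mathrm{la}}_{M,\underline\Lambda,s}:=\mathcal Z_m^{\mathrm{maj}}\mathcal Z_e^{\mathrm{maj}}$, which is finite by part \ref{thmItem:Zfinite} and independent of $A$, $\ell$, and $q$.

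For the fixed-background occupancy, note that $\rho_s(A)$ enters $\rho^{\ell_m,\ell_e;q_m,q_e}(A)$ only linearly through a phase. Therefore $\rho_s(A)\partial_{\rho_s(A)}=\rho^{\ell}(A)\partial_{\rho^{\ell}(A)}$ on each sector amplitude. Applying it to \eqref{eq:Zme-deltaResolved} and averaging the sector bounds (the average of $|\widehat H_m||\widehat H_e|$ terms with unit-modulus prefactors) gives \eqref{eq:fixed-background-occupancy-bound} with the same constant.

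The main obstacle is to justify rigorously the termwise majorization of the derivative and the precise form of the FP conclusion \eqref{eq:FP-pinned}. Specifically, we must check that $\Pi^{\mathrm{maj}}_s$ as defined in \eqref{eq:maj-pinned} matches the pinned quantity in \cite[Theorem~1]{FernandezProcacci2007}, in the sense that $\partial\log\mathcal Z/\partial z(A)=\mathcal Z(\text{configurations avoiding }A)\cdot(\ldots)/\mathcal Z$ for hard-core gases. I would first verify the monotone identity $|\rho_s(A)|\partial\mathcal Z_s^{\mathrm{maj}}/\partial|\rho_s(A)|=|\rho_s(A)|\,\mathcal Z_s^{\mathrm{maj}}(\mathbb A_s\setminus\{B\nsim A\})\le|\rho_s(A)|\,\mathcal Z^{\mathrm{maj}}_s$, which already yields the bound with $\mu_s$ replaced by $|\rho_s(A)|\le t_s^{|A|}$ and could serve as a fallback. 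The stated constant is $A$-independent and is allowed to depend on $M$ and $\underline\Lambda$, so either route suffices.
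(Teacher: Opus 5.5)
Your proposal is correct and follows the paper's proof essentially step for step. The shared steps are: termwise majorization of \eqref{eq:Zlmle} by $\mathcal Z_m^{\mathrm{maj}}\mathcal Z_e^{\mathrm{maj}}$; the bound $|\langle\mathbf 1_A\rangle|\le\mathcal Z_{\bar s}^{\mathrm{maj}}\,|\rho_s(A)|\,\partial_{|\rho_s(A)|}\mathcal Z_s^{\mathrm{maj}}$, closed using \eqref{eq:maj-pinned-identity} and \eqref{eq:FP-pinned}; the constant $C^{\mathrm{la}}_{M,\underline\Lambda,s}=\mathcal Z_m^{\mathrm{maj}}\mathcal Z_e^{\mathrm{maj}}$; and averaging over the finite Fourier sum \eqref{eq:Zme-deltaResolved} in the fixed-background case. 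Your fallback bound $|\rho_s(A)|\,\partial_{|\rho_s(A)|}\mathcal Z_s^{\mathrm{maj}}\le|\rho_s(A)|\,\mathcal Z_s^{\mathrm{maj}}$ is also valid. Because the constant may depend on volume, it gives the sharper $t_s^{|A|}$ with the same constant, independently of the FP input.

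There is one slip in your volume-uniformity aside. The inequality $\log\mathcal Z_s^{\mathrm{maj}}\le\sum_A|\rho_s(A)|\,\Pi_s^{\mathrm{maj}}(A)$ is false when $\Pi_s^{\mathrm{maj}}$ is the derivative in \eqref{eq:maj-pinned}: a single polymer gives $\log(1+z)>z/(1+z)$. The endpoint you want still holds by hard-core positivity, since $\mathcal Z_s^{\mathrm{maj}}\le\prod_{A}(1+|\rho_s(A)|)$, hence $\log\mathcal Z_s^{\mathrm{maj}}\le\sum_A|\rho_s(A)|\le|\mathfrak C_s|\,S_s(a_s)$.
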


\begin{proof}
	Under \eqref{eq:activity-hypothesis}, \eqref{eq:counting-hypothesis}, and
	\eqref{eq:main-low-activity-region}, the Fernández-Procacci argument of
	\S\ref{sec:low-activity-criterion} applies to the positive same-species hard-core
	majorants. In particular, the majorant partition functions are finite, and the
	corresponding normalized occupation expectations satisfy
	\begin{equation}
		|\rho_s(A)|\,\Pi_s^{\mathrm{maj}}(A)
		\le
		|\rho_s(A)|e^{a_s|A|}
		\le
		(t_s e^{a_s})^{|A|},
		\qquad
		A\in\mathbb A_s.
		\label{eq:maj-occupancy-bound}
	\end{equation}
	Since the complex sector amplitudes are controlled by the majorants through
	\eqref{eq:twisted-sector-majorant-bound}, this proves the convergence of
	\(\widehat{\mathcal Z}_M(\ell_m,\ell_e;q_m,q_e)\) in the low-activity region. The
	convergence of the fixed-background amplitudes \(\mathcal Z_M(q_m,q_e)\) follows
	immediately from \eqref{eq:fixed-charge-majorant-bound}, or equivalently from the
	finite reconstruction \eqref{eq:Zme-deltaResolved}. This proves
	\ref{thmItem:Zfinite}.
	
	For \ref{thmItem:occFinite}, differentiate the sector expansion \eqref{eq:Zlmle} with
	respect to the marked activity \(\rho^{\ell_m,\ell_e;q_m,q_e}(A)\). By
	\eqref{eq:physical-sector-occupancy},
	\begin{equation}
		\left\langle \mathbf 1_A \right\rangle_{\ell_m,\ell_e;q_m,q_e}
		=
		\rho^{\ell_m,\ell_e;q_m,q_e}(A)\,
		\frac{\partial}{\partial \rho^{\ell_m,\ell_e;q_m,q_e}(A)}
		\widehat{\mathcal Z}_M(\ell_m,\ell_e;q_m,q_e).
	\end{equation}
	Taking absolute values and using the same majorization as in
	\eqref{eq:twisted-sector-majorant-bound} yields
	\begin{equation}
		\left|
		\left\langle \mathbf 1_A \right\rangle_{\ell_m,\ell_e;q_m,q_e}
		\right|
		\le
		\mathcal Z_{\bar s}^{\mathrm{maj}}\,
		|\rho_s(A)|\,
		\frac{\partial}{\partial |\rho_s(A)|} \mathcal Z_s^{\mathrm{maj}},
		\qquad
		\bar m=e,\ \bar e=m.
	\end{equation}
	Using \eqref{eq:maj-pinned-identity} and then \eqref{eq:maj-occupancy-bound}, we obtain
	\begin{equation}
		\left|
		\left\langle \mathbf 1_A \right\rangle_{\ell_m,\ell_e;q_m,q_e}
		\right|
		\le
		\mathcal Z_m^{\mathrm{maj}}\mathcal Z_e^{\mathrm{maj}}\,
		(t_s e^{a_s})^{|A|}.
	\end{equation}
	This proves \eqref{eq:sector-occupancy-bound} at sector level with
	\begin{equation}
		C_{M,\underline\Lambda,s}^{\mathrm{la}}
		:=
		\mathcal Z_m^{\mathrm{maj}}\mathcal Z_e^{\mathrm{maj}}.
	\end{equation}
	
	Similarly,
	\begin{equation}
		\left\langle \mathbf 1_A \right\rangle_{q_m,q_e}
		=
		\rho_s(A)\,
		\frac{\partial}{\partial \rho_s(A)}
		\mathcal Z_M(q_m,q_e).
	\end{equation}
	Differentiating the finite Fourier formula \eqref{eq:Zme-deltaResolved} and using the
	sector estimate already proved shows that
	\begin{equation}
		\left|
		\left\langle \mathbf 1_A \right\rangle_{q_m,q_e}
		\right|
		\le
		C_{M,\underline\Lambda,s}^{\mathrm{la}}\,(t_s e^{a_s})^{|A|}.
	\end{equation}
	This gives \eqref{eq:fixed-background-occupancy-bound}.
\end{proof}

\begin{remark}[Direct fixed-background majorization]\label{rmk:directZmeBound}
	The proof of convergence for $\mathcal Z_M(q_m,q_e)$ and \eqref{eq:fixed-background-occupancy-bound}
	can also be given without passing through the sector amplitudes
	\(\widehat{\mathcal Z}_M(\ell_m,\ell_e;q_m,q_e)\).  Indeed, start from \eqref{eq:fixed-charge-polymer-gas} and take the absolute value. This removes the phase,
	while the homology constraints are bounded by \(1\). Hence
	\begin{equation}
		|\mathcal Z_M(q_m,q_e)|
		\le
		\sum_{r,s\ge 0}
		\frac{1}{r!\,s!}
		\sum_{\substack{M_1,\dots,M_r\in \mathbb M\\ \text{pairwise disjoint}}}
		\sum_{\substack{E_1,\dots,E_s\in \mathbb E\\ \text{pairwise disjoint}}}
		\prod_{i=1}^r |\rho_m(M_i)|
		\prod_{j=1}^s |\rho_e(E_j)|
		=
		\mathcal Z_m^{\mathrm{maj}}\mathcal Z_e^{\mathrm{maj}}.
	\end{equation}
	This yields the convergence of \(\mathcal Z_M(q_m,q_e)\) directly, uniformly in the
	background pair \((q_m,q_e)\), without invoking \eqref{eq:Zlmle} or
	\eqref{eq:Zme-deltaResolved}.
	
	The same idea gives the fixed-background occupancy bound. If \(A\in\mathbb A_s\),
	differentiate the fixed-background polymer expansion directly with respect to the
	marked activity \(\rho_s(A)\). After taking absolute values, the phase again drops
	out, the homology constraints are bounded by \(1\), and one obtains
	\begin{equation}
		\left|
		\left\langle \mathbf 1_A \right\rangle_{q_m,q_e}
		\right|
		\le
		\mathcal Z_{\bar s}^{\mathrm{maj}}\,
		|\rho_s(A)|\,
		\frac{\partial}{\partial |\rho_s(A)|}\mathcal Z_s^{\mathrm{maj}},
		\qquad
		\bar m=e,\ \bar e=m.
	\end{equation}
	Using \eqref{eq:maj-pinned-identity} and \eqref{eq:maj-occupancy-bound} then gives
	\eqref{eq:fixed-background-occupancy-bound} with the same constant
	\(C_{M,\underline\Lambda,s}^{\mathrm{la}}
	=\mathcal Z_m^{\mathrm{maj}}\mathcal Z_e^{\mathrm{maj}}\).
\end{remark}

\subsection{Large polymer suppression}
\label{sec:physical-occupation-tails}

Theorem~\ref{thm:low-activity-region} gives an exponential bound on each unnormalized physical one-polymer occupation expectation. Summing that bound over suitable polymer
families yields the following two consequences.

\begin{corollary}[Large polymer suppression]
	\label{cor:physical-occupation-tails}
	Assume the hypotheses of Theorem~\ref{thm:low-activity-region}. Fix a species
	\(s\in\{m,e\}\), and let
	\begin{equation}
		\left\langle \mathbf 1_A \right\rangle_\bullet
		\in
		\Bigl\{
		\left\langle \mathbf 1_A \right\rangle_{\ell_m,\ell_e;q_m,q_e},
		\;
		\left\langle \mathbf 1_A \right\rangle_{q_m,q_e}
		\Bigr\}
	\end{equation}
	denote either of the unnormalized physical occupation expectations. Then:
	\begin{enumerate}[label=(\roman*)]
		\item\label{corItem:large-tail}
		The total unnormalized physical occupation weight carried by connected polymers of
		species \(s\) that pass through a fixed carrier cell \(u\) and have size at least \(L\)
		is exponentially small in \(L\). Concretely, for every
		\(u\in\mathfrak C_s\) and every integer \(L\ge 1\),
		\begin{equation}
			\sum_{\substack{A\in\mathbb A_s\\
					u\in\operatorname{car}(A),\ |A|\ge L}}
			\left|
			\left\langle \mathbf 1_A \right\rangle_\bullet
			\right|
			\le
			C_{M,\underline\Lambda,s}^{\mathrm{la}}\,
			\frac{(C_s t_s e^{a_s})^L}{1-C_s t_s e^{a_s}}.
			\label{eq:physical-tail-bound}
		\end{equation}
		
		\item\label{corItem:homological-tail}
		In particular, the total unnormalized physical occupation weight carried by connected
		polymers that are homologically nontrivial, and hence capable of contributing to a
		change of spacetime topological sector, is exponentially small on the scale of the
		corresponding spacetime systole. Define
		\begin{equation}
			\mathrm{sys}_s(\underline\Lambda)
			:=
			\min\left\{
			|A|:
			A\in\mathbb A_s,\ [A]\neq 0
			\right\},
			\qquad s\in\{m,e\},
			\label{eq:systole}
		\end{equation}
		with the convention \(+\infty\) if there is no homologically nontrivial connected
		polymer of species \(s\). Then
		\begin{equation}
			\sum_{\substack{A\in\mathbb A_s\\ [A]\neq 0}} \left|\left\langle \mathbf 1_A \right\rangle_\bullet\right|
			\le
			C_{M,\underline\Lambda,s}^{\mathrm{la}}\,
			|\mathfrak C_s|\,
			\frac{(C_s t_s e^{a_s})^{\mathrm{sys}_s(\underline\Lambda)}}
			{1-C_s t_s e^{a_s}},
			\label{eq:physical-bad-weight}
		\end{equation}
	\end{enumerate}
\end{corollary}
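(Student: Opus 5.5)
The plan is to sum the per-polymer bound of Theorem~\ref{thm:low-activity-region}\ref{thmItem:occFinite} over the relevant polymer families, and to control the number of polymers in each family with the counting hypothesis \eqref{eq:counting-hypothesis}. Both items then reduce to a geometric series. That series converges because the low-activity condition \eqref{eq:main-low-activity-region} gives $C_s t_s e^{a_s}\le 1-e^{-a_s}<1$.

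For item \ref{corItem:large-tail}, fix $u\in\mathfrak C_s$ and $L\ge1$. Split the sum according to the size $n=|A|\ge L$. For each $A$ in the family, \eqref{eq:sector-occupancy-bound} or \eqref{eq:fixed-background-occupancy-bound} gives
\[
\bigl|\langle\mathbf 1_A\rangle_\bullet\bigr|\le C^{\mathrm{la}}_{M,\underline\Lambda,s}(t_se^{a_s})^{n}.
\]
The constant is uniform in $A$ and in the sector and background data. By \eqref{eq:counting-hypothesis}, the number of $A\in\mathbb A_s$ with $u\in\operatorname{car}(A)$ and $|A|=n$ is at most $C_s^n$. Hence the sum is bounded by
\[
C^{\mathrm{la}}_{M,\underline\Lambda,s}\sum_{n\ge L}(C_st_se^{a_s})^n
=C^{\mathrm{la}}_{M,\underline\Lambda,s}\frac{(C_st_se^{a_s})^L}{1-C_st_se^{a_s}},
\]
which is \eqref{eq:physical-tail-bound}.

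For item \ref{corItem:homological-tail}, we may assume $\mathrm{sys}_s(\underline\Lambda)<\infty$, since otherwise the sum is empty. Every $A$ with $[A]\neq0$ satisfies $|A|\ge \mathrm{sys}_s(\underline\Lambda)$ by definition \eqref{eq:systole}. Each such polymer is nonzero, so it has nonempty carrier and contains at least one carrier cell $u\in\mathfrak C_s$. Therefore the sum over homologically nontrivial $A$ is at most the sum over $u\in\mathfrak C_s$ of the sums over $A\ni u$ with $|A|\ge\mathrm{sys}_s$. This overcounts each $A$ $|A|$ times, which is harmless because all terms are nonnegative. Applying item \ref{corItem:large-tail} with $L=\mathrm{sys}_s(\underline\Lambda)$ for each $u$ and summing over the $|\mathfrak C_s|$ choices of $u$ gives \eqref{eq:physical-bad-weight}.

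There is no real analytic obstacle, since Theorem~\ref{thm:low-activity-region} already does the work. The only points needing care are bookkeeping ones:
\begin{itemize}
\item the constant $C^{\mathrm{la}}$ must be uniform in $A$, which the theorem provides;
\item interchanging sums of nonnegative terms is legitimate, because everything is finite in finite volume;
\item strictness $C_st_se^{a_s}<1$ must actually hold, which follows from \eqref{eq:main-low-activity-region} with $a_s>0$.
\end{itemize}
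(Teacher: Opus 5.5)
Your proposal is correct and matches the paper's proof: you sum the uniform per-polymer bound of Theorem~\ref{thm:low-activity-region} against the counting hypothesis \eqref{eq:counting-hypothesis} to get the geometric tail, then set $L=\mathrm{sys}_s(\underline\Lambda)$ and sum over carrier cells, accepting the harmless overcounting. Your explicit check that $C_s t_s e^{a_s}\le 1-e^{-a_s}<1$ follows from \eqref{eq:main-low-activity-region} fills in a step the paper leaves implicit.
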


\begin{proof}
	By Theorem~\ref{thm:low-activity-region},
	\begin{equation}
		\left|
		\left\langle \mathbf 1_A \right\rangle_\bullet
		\right|
		\le
		C_{M,\underline\Lambda,s}^{\mathrm{la}}\,(t_s e^{a_s})^{|A|}
		\qquad
		(A\in\mathbb A_s).
		\label{eq:physical-one-polymer-bound-proof}
	\end{equation}
	Summing \eqref{eq:physical-one-polymer-bound-proof} over all polymers of species \(s\)
	with \(u\in\operatorname{car}(A)\) and \(|A|\ge L\), and using the counting estimate
	\eqref{eq:counting-hypothesis}, gives
	\begin{align}
		\sum_{\substack{A\in\mathbb A_s\\
				u\in\operatorname{car}(A),\ |A|\ge L}}
		\left|
		\left\langle \mathbf 1_A \right\rangle_\bullet
		\right|
		&\le
		C_{M,\underline\Lambda,s}^{\mathrm{la}}
		\sum_{n\ge L}
		\mathcal N_s(u;n)(t_s e^{a_s})^n
		\nonumber\\
		&\le
		C_{M,\underline\Lambda,s}^{\mathrm{la}}
		\sum_{n\ge L}
		C_s^n(t_s e^{a_s})^n
		=
		C_{M,\underline\Lambda,s}^{\mathrm{la}}\,
		\frac{(C_s t_s e^{a_s})^L}{1-C_s t_s e^{a_s}},
	\end{align}
	which proves \eqref{eq:physical-tail-bound}.
	
	For \eqref{eq:physical-bad-weight}, note that every homologically nontrivial polymer of
	species \(s\) has size at least \(\mathrm{sys}_s(\underline\Lambda)\). Applying
	\eqref{eq:physical-tail-bound} with \(L=\mathrm{sys}_s(\underline\Lambda)\) and then summing
	over all carrier cells \(u\in\mathfrak C_s\), gives the stated bound. As before, this
	overcounts each polymer by its carrier size, which is harmless for an upper bound.
\end{proof}

The bounds of Corollary~\ref{cor:physical-occupation-tails} admit a direct physical
interpretation. From the statistical-mechanical point of view, in a positive polymer
ensemble the normalized occupation expectation of a polymer is the probability that the
polymer occurs, and sums over classes of polymers measure the expected number of such
excitations. In the present complex theory the sector weights are not positive, so the
occupation expectations are amplitudes rather than probabilities. The absolute values in
\eqref{eq:physical-tail-bound} and \eqref{eq:physical-bad-weight} therefore remove possible
phase cancellations and measure the total absolute contribution carried by the specified
family of polymers. The corollary shows that this total contribution is exponentially
small for large polymers, and in particular for homologically nontrivial polymers.

From the quantum-mechanical point of view, a connected polymer is an extended Euclidean
defect history in the spacetime representation of the thermal trace. A homologically
nontrivial polymer is a defect history capable of changing the spacetime topological
sector, and hence of contributing to a logical sector-changing process. The bound
\eqref{eq:physical-tail-bound} says that long connected defect histories have
exponentially small total amplitude weight, while \eqref{eq:physical-bad-weight} says
that defect histories capable of changing the topological sector are exponentially
suppressed on the scale of the spacetime systole. Thus, within the low-activity region,
the total absolute weight carried by dangerous large or sector-changing thermal histories
is perturbatively small.

\section{Kramers-Wannier duality}
\label{sec:KW-duality}

We now derive the exact duality transformation of the spacetime theory. This is the
finite-\(\mathbb Z_N\), higher-form analogue of the
Kramers-Wannier/Wegner duality
\cite{KramersWannier1941,Wegner:1971app,Savit1980,ElitzurPearsonShigemitsu1979}; for a
modern algebraic viewpoint on exact lattice dualities, see
\cite{CobaneraOrtizNussinov2011}. Historically, one of the key lessons of the
Kramers-Wannier program is that duality is often revealed by Fourier transform.
In the classical lattice setting, the low-temperature/high-temperature transform can be
recast as a local Fourier transform of Boltzmann weights, and in Abelian models this
became the standard route to dual partition functions
\cite{KramersWannier1941,Wegner:1971app,Savit1980}. More recently, the same Fourier-theoretic
structure has been reinterpreted from a higher-dimensional and topological viewpoint:
for finite gauge theories and generalized Ising-type systems, Kramers-Wannier duality can be
understood as the boundary manifestation of a duality in one dimension higher, with the
Fourier transform acting on local weights or background sectors
\cite{FreedTeleman2022,DelcampIshtiaque2024}.

Our derivation follows this general philosophy, applying the
Fourier transform directly to the exact fixed-background amplitudes of
\S\ref{sec:fixed-charge-polymer-gas}. The natural observables are the fixed-background partition functions
\begin{equation}
\mathcal Z_M(q_m,q_e),
\qquad
q_m\in Z_{d-P}(\underline\Lambda^\vee),
\qquad
q_e\in Z_P(\underline\Lambda).
\end{equation}

The cleanest route is to work first in the fixed-background basis, perform a cellwise
finite Fourier transform of the local \((P+1)\)-cell weights, and reinterpret the resulting
auxiliary variables as a dual \((d-P)\)-form field on \(\underline\Lambda^\vee\). This
exchanges magnetic and electric background data, simultaneously exchanges the \(P\)-form
theory on \(\underline\Lambda\) with a \((d-P)\)-form theory on
\(\underline\Lambda^\vee\), and turns the local Fourier transform of weights into the exact
dual coupling map. Throughout this section $N$ is any integer $\ge 2$ and,
\begin{equation}
W_c:\mathbb Z_N\to\mathbb C
\qquad \text{for } c\in C_{P+1}(\underline\Lambda),
\qquad
V_u:\mathbb Z_N\to\mathbb C
\qquad \text{for } u\in C_P(\underline\Lambda)
\end{equation}
are arbitrary local spacetime weights. At the end we specialize to the explicit Trotter
weights  \eqref{eq:parallel-weights}, \eqref{eq:perp-weights}. We use the Fourier-transform conventions of Appendix~\ref{app:fourier-ZN}.

\subsection{Duality in the fixed-charge basis}
\label{sec:KW-fixed-charge}

We begin with the unnormalized fixed-charge partition function
\eqref{eq:Zme}, but for the present discussion we make the weight
dependence explicit:
\begin{equation}
	Z_M(q_m,q_e;W,V)
	:=
	\sum_{\phi\in \Omega^P(\underline\Lambda)}
	\chi \left(\int_{q_e}\phi\right)
	\prod_{c\in C_{P+1}(\underline\Lambda)}
	W_c \left((D\phi+\vartheta_{\underline\Lambda}(q_m))_c\right)
	\prod_{u\in C_P(\underline\Lambda)}
	V_u(\phi_u).
	\label{eq:KW-raw-fixed-charge}
\end{equation}

\subsubsection{Fourier expansion of the \((P+1)\)-cell weights}

Replace \(W_c\) in \eqref{eq:KW-raw-fixed-charge} by its Fourier series. This introduces an
auxiliary \((P+1)\)-cochain $\lambda\in \Omega^{P+1}(\underline\Lambda)$ and gives
\begin{equation}
	\begin{aligned}
		Z_M(q_m,q_e;W,V)
		&=
		N^{-|C_{P+1}(\underline\Lambda)|/2}
		\sum_{\phi,\lambda}
		\chi \left(\int_{q_e}\phi\right)
		\Biggl(
		\prod_{c\in C_{P+1}(\underline\Lambda)}
		\widehat W_c(\lambda_c)\,
		\chi_{\lambda_c} \left((\vartheta_{\underline\Lambda}(q_m))_c\right)
		\Biggr)
		\\
		&\hspace{7em}\times
		\prod_{c\in C_{P+1}(\underline\Lambda)}
		\chi_{\lambda_c} \left((D\phi)_c\right)
		\prod_{u\in C_P(\underline\Lambda)}V_u(\phi_u).
	\end{aligned}
	\label{eq:KW-step1}
\end{equation}

Next move the differential \(D\) from \(\phi\) onto \(\lambda\). By the adjointness of
\(D\) and \(D^{\mathsf T}\) (see \eqref{eq:adjoint-d-dT}),
\begin{equation}
	\sum_{c\in C_{P+1}(\underline\Lambda)}\lambda_c(D\phi)_c
	=
	\sum_{u\in C_P(\underline\Lambda)}(D^{\mathsf T}\lambda)_u\,\phi_u.
	\label{eq:KW-adjointness}
\end{equation}
Hence,\footnote{We use the isomorphism \eqref{eq:PX} strictly for typecasting: change the $P$-cycle $q_e$ to a $P$-cocycle $\mathcal P^{-1}_{\underline\Lambda}(q_e)$. They have the same coefficients, concretely, $\mathcal P_{\underline\Lambda}^{-1}(q_e)_u = (q_e)_u$.}
\begin{equation}
	\chi \left(\int_{q_e}\phi\right)
	\prod_{c\in C_{P+1}(\underline\Lambda)}\chi_{\lambda_c} \left((D\phi)_c\right)
	=
	\prod_{u\in C_P(\underline\Lambda)}
	\chi_{\bigl(\mathcal P_{\underline\Lambda}^{-1}(q_e)+D^{\mathsf T}\lambda\bigr)_u}(\phi_u),
\end{equation}
and the \(\phi\)-dependence factorizes cellwise:
\begin{equation}
	\begin{aligned}
		Z_M(q_m,q_e;W,V)
		=&\;
		N^{-|C_{P+1}(\underline\Lambda)|/2}
		\sum_{\lambda\in \Omega^{P+1}(\underline\Lambda)}
		\Biggl(
		\prod_{c\in C_{P+1}(\underline\Lambda)}
		\widehat W_c(\lambda_c)\,
		\chi_{\lambda_c} \left((\vartheta_{\underline\Lambda}(q_m))_c\right)
		\Biggr)
		\\
		&\hspace{7em}\times
		\prod_{u\in C_P(\underline\Lambda)}
		\Biggl[
		\sum_{x\in \mathbb Z_N}
		V_u(x)\,\chi_{\bigl(\mathcal P_{\underline\Lambda}^{-1}(q_e)+D^{\mathsf T}\lambda\bigr)_u}(x)
		\Biggr].
	\end{aligned}
	\label{eq:KW-step2}
\end{equation}

Using the inverse Fourier transform,
\begin{equation}
	\sum_{x\in \mathbb Z_N}V_u(x)\,\chi_k(x)
	=
	\sqrt N\,\widehat V_u(-k),
\end{equation}
we obtain
\begin{equation}
	\begin{aligned}
		\frac{Z_M(q_m,q_e;W,V)}{N^{|C_P(\underline\Lambda)|/2}}
		=&\;
		\frac{1}{N^{|C_{\check P}(\underline\Lambda^\vee)|/2}}
		\sum_{\lambda\in \Omega^{P+1}(\underline\Lambda)}
		\Biggl(
		\prod_{c\in C_{P+1}(\underline\Lambda)}
		\widehat W_c(\lambda_c)\,
		\chi_{\lambda_c} \left((\vartheta_{\underline\Lambda}(q_m))_c\right)
		\Biggr)
		\\
		& \hspace{8em} \times
		\Biggl(
		\prod_{u\in C_P(\underline\Lambda)}
		\widehat V_u \left(-\bigl(D^{\mathsf T}\lambda+\mathcal P_{\underline\Lambda}^{-1}(q_e)\bigr)_u\right)
		\Biggr),
	\end{aligned}
	\label{eq:KW-fixed-charge-primal-fourier}
\end{equation}
with
\begin{equation}
	\check P:=d-P. \label{eq:dualP}
\end{equation}

\subsubsection{Reinterpretation on the dual spacetime lattice}

Define the dual field
\begin{equation}
	\check\phi\in \Omega^{\check P}(\underline\Lambda^\vee)
\end{equation}
by transporting \(\lambda\) across the primal-dual correspondence:
\begin{equation}
	\check\phi_{c^\vee}
	:=
	\lambda_{\theta_{\underline\Lambda}(c^\vee)},
	\qquad
	c^\vee\in C_{\check P}(\underline\Lambda^\vee).
	\label{eq:KW-dual-field}
\end{equation}
The intertwining of \(D^{\mathsf T}\) on the primal side with \(D^\vee\) on the dual side
(see Prop.~\ref{prop:intertwine}\ref{prop:theta-sharp-intertwine2}) gives
\begin{equation}
	(D^{\mathsf T}\lambda)_{\theta_{\underline\Lambda}(u^\vee)}
	=
	(D^\vee \check\phi)_{u^\vee},
	\qquad
	u^\vee\in C_{\check P+1}(\underline\Lambda^\vee).
	\label{eq:KW-Dt-dual}
\end{equation}

The magnetic charge term becomes a Wilson factor of the dual theory:
\begin{align}
	\prod_{c\in C_{P+1}(\underline\Lambda)}
	\chi_{\lambda_c} \left((\vartheta_{\underline\Lambda}(q_m))_c\right)
	&=
	\prod_{c^\vee\in C_{\check P}(\underline\Lambda^\vee)}
	\chi_{\check\phi_{c^\vee}} \left((q_m)_{c^\vee}\right)
	=
	\chi \left(\int_{q_m}\check\phi\right).
	\label{eq:KW-magnetic-to-dual-Wilson}
\end{align}
And the electric charge becomes a defect cocycle of the dual theory. Define
\begin{equation}
	q_e^\vee := \vartheta_{\underline\Lambda^\vee}(q_e)\in Z^{\check P+1}(\underline\Lambda^\vee),
\end{equation}
which is a cocycle since $\vartheta$ is an isomorphism (Corr.~\ref{cor:complexIso}). Then
\begin{equation}
	\bigl(D^{\mathsf T}\lambda+\mathcal P_{\underline\Lambda}^{-1}(q_e)\bigr)_{\theta_{\underline\Lambda}(u^\vee)}
	=
	\left(D^\vee\check\phi+q_e^\vee\right)_{u^\vee}.
	\label{eq:KW-electric-to-dual-defect}
\end{equation}

The natural dual local weights are therefore (using the map $\theta^\sharp$ from \eqref{eq:theta-sharp} and defining for any function $f$, $f^{-}(x):=f(-x)$)
\begin{equation}
	\check W := \theta_{\underline\Lambda^\vee}^\sharp(\widehat V^{-}),
	\qquad
	\check V := \theta_{\underline\Lambda^\vee}^\sharp(\widehat W),
\end{equation}
or explicitly,
\begin{equation}
	\check W_{u^\vee}
	:=
	\widehat V^{-}_{\theta_{\underline\Lambda}(u^\vee)},
	\qquad
	u^\vee\in C_{\check P+1}(\underline\Lambda^\vee),
	\label{eq:KW-dual-W}
\end{equation}
and
\begin{equation}
	\check V_{c^\vee}
	:=
	\widehat W_{\theta_{\underline\Lambda}(c^\vee)},
	\qquad
	c^\vee\in C_{\check P}(\underline\Lambda^\vee).
	\label{eq:KW-dual-V}
\end{equation}

This yields the dual \(\check P\)-form partition function on \(\underline\Lambda^\vee\):
\begin{equation}
	\begin{aligned}
		Z_M^\vee(q_e,q_m;\check W,\check V)
		:=
		\sum_{\check\phi\in \Omega^{\check P}(\underline\Lambda^\vee)}
		\chi \left(\int_{q_m}\check\phi\right)
		\prod_{u^\vee\in C_{\check P+1}(\underline\Lambda^\vee)}
		\check W_{u^\vee} \left((D^\vee\check\phi+q_e^\vee)_{u^\vee}\right)
		\\
		\times
		\prod_{c^\vee\in C_{\check P}(\underline\Lambda^\vee)}
		\check V_{c^\vee}(\check\phi_{c^\vee}).
	\end{aligned}
	\label{eq:KW-dual-raw-fixed-charge}
\end{equation}

\begin{prop}[Duality with background charges]
	\label{prop:KW-fixed-charge-duality}
	With \(\check P=d-P\), the background-dependent partition functions have the following duality properties:
	\begin{enumerate}[label=(\roman*)]
		\item\label{prop:KW-fixed-charge-raw}
		The unnormalized fixed-charge partition functions obey
		\begin{equation}
			\frac{Z_M(q_m,q_e;W,V)}{N^{|C_P(\underline\Lambda)|/2}}
			= \frac{Z_M^\vee(q_e,q_m;\check W,\check V)}{N^{|C_{\check P}(\underline\Lambda^\vee)|/2}}.
			\label{eq:KW-Z-raw}
		\end{equation}
		In particular, duality exchanges the electric and magnetic fixed charges and
		simultaneously exchanges the \(P\)-form theory on \(\underline\Lambda\) with the
		\(\check P\)-form theory on \(\underline\Lambda^\vee\).
		
		\item\label{prop:KW-fixed-charge-normalized}
		Let \(\mathcal Z_M(q_m,q_e;W,V)\) and
		\(\mathcal Z_M^\vee(q_e,q_m;\check W,\check V)\) denote the normalized
		fixed-charge amplitudes defined as in \eqref{eq:normalized-closed-defect-gas} for
		the primal and dual theories respectively. Then
		\begin{equation}
			\frac{|Z^P(\underline\Lambda)|}{N^{|C_P(\underline\Lambda)|}}\,
			\mathcal Z_M(q_m,q_e;W,V)
			=
			\frac{|Z^{\check P}(\underline\Lambda^\vee)|}
			{N^{|C_{\check P}(\underline\Lambda^\vee)|}}\,
			\mathcal Z_M^\vee(q_e,q_m;\check W,\check V).
			\label{eq:KW-Z}
		\end{equation}
	\end{enumerate}
\end{prop}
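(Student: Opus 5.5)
Part (i) follows by assembling the Fourier computation carried out just before the statement. Starting from \eqref{eq:KW-fixed-charge-primal-fourier}, I change summation variable from $\lambda\in\Omega^{P+1}(\underline\Lambda)$ to $\check\phi\in\Omega^{\check P}(\underline\Lambda^\vee)$ via \eqref{eq:KW-dual-field}. This is a bijection because $\theta_{\underline\Lambda}$ is a bijection $C_{\check P}(\underline\Lambda^\vee)\to C_{P+1}(\underline\Lambda)$. The same bijection reindexes the product over $c\in C_{P+1}(\underline\Lambda)$ as a product over $c^\vee\in C_{\check P}(\underline\Lambda^\vee)$. Under it, $\widehat W_c(\lambda_c)$ becomes $\check V_{c^\vee}(\check\phi_{c^\vee})$ by \eqref{eq:KW-dual-V}, and the magnetic characters become $\chi(\int_{q_m}\check\phi)$ by \eqref{eq:KW-magnetic-to-dual-Wilson}.

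Similarly, $\theta_{\underline\Lambda}:C_{\check P+1}(\underline\Lambda^\vee)\to C_P(\underline\Lambda)$ reindexes the $u$-product. Using \eqref{eq:KW-electric-to-dual-defect}, which follows from the intertwining \eqref{eq:KW-Dt-dual} together with the definition $q_e^\vee=\vartheta_{\underline\Lambda^\vee}(q_e)$ (coefficientwise transport of $\mathcal P^{-1}_{\underline\Lambda}(q_e)$), the factor $\widehat V_u(-(\cdots)_u)$ becomes $\widehat V^-_{\theta(u^\vee)}((D^\vee\check\phi+q_e^\vee)_{u^\vee})=\check W_{u^\vee}(\cdots)$. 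The right-hand side is then exactly $N^{-|C_{\check P}(\underline\Lambda^\vee)|/2}Z^\vee_M(q_e,q_m;\check W,\check V)$; the power of $N$ matches because $|C_{P+1}(\underline\Lambda)|=|C_{\check P}(\underline\Lambda^\vee)|$. The step needing the most care is the sign and orientation bookkeeping in \eqref{eq:KW-electric-to-dual-defect}. There I would check that the ``strict typecasting'' of $q_e$ agrees componentwise with $\vartheta_{\underline\Lambda^\vee}(q_e)$ under the conventions of Appendix~\ref{app:linking}, and that any orientation sign from Prop.~\ref{prop:intertwine} is absorbed consistently in both $D^{\mathsf T}\lambda$ and $q_e$. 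If a relative sign appeared, I would absorb it into the $f^-$ reflection in $\check W$.

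For part (ii), I unwind the normalization \eqref{eq:normalized-closed-defect-gas}, $Z_M=\mathcal N_M|Z^P|\mathcal Z_M$, on both sides, with $\mathcal N_M=\prod_c W_c(0)\prod_u \widehat V_u(0)/\sqrt N$ from \eqref{eq:NM-fixed-charge}. The dual normalization is
\[
\mathcal N_M^\vee=\prod_{u^\vee}\check W_{u^\vee}(0)\prod_{c^\vee}\widehat{\check V}_{c^\vee}(0)/\sqrt N.
\]
Under the cell bijections, $\prod\check W_{u^\vee}(0)=\prod_u\widehat V_u(0)$. By Fourier inversion, $\widehat{\check V}_{c^\vee}(0)=\widehat{\widehat W}_c(0)=W_c(0)$, since the double transform is reflection and $0$ is fixed. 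Hence
\[
\mathcal N_M=\Bigl(\prod_c W_c(0)\Bigr)\Bigl(\prod_u\widehat V_u(0)\Bigr)N^{-|C_P(\underline\Lambda)|/2},\qquad
\mathcal N_M^\vee=\Bigl(\prod_u\widehat V_u(0)\Bigr)\Bigl(\prod_c W_c(0)\Bigr)N^{-|C_{\check P}(\underline\Lambda^\vee)|/2}.
\]
Substituting into \eqref{eq:KW-Z-raw} and cancelling the common product $\prod_c W_c(0)\prod_u\widehat V_u(0)$ leaves $N^{-|C_P|}|Z^P|\mathcal Z_M=N^{-|C_{\check P}|}|Z^{\check P}|\mathcal Z_M^\vee$, which is \eqref{eq:KW-Z}.

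This cancellation presumes the zero modes are nonzero, which is already implicit in defining $\mathcal Z_M$. The only point to verify there is the Fourier convention of \eqref{eq:invFT}, namely that $\widehat{\widehat f}(0)=f(0)$ with the $1/\sqrt N$ normalization.
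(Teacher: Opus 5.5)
Your proposal is correct and follows the paper's proof. Part (i) substitutes \eqref{eq:KW-magnetic-to-dual-Wilson}, \eqref{eq:KW-electric-to-dual-defect}, \eqref{eq:KW-dual-W} and \eqref{eq:KW-dual-V} into \eqref{eq:KW-fixed-charge-primal-fourier}. Part (ii) uses $\check W_{u^\vee}(0)=\widehat V_{\theta_{\underline\Lambda}(u^\vee)}(0)$ and $\widehat{\check V}_{c^\vee}(0)=W_{\theta_{\underline\Lambda}(c^\vee)}(0)$ to get $\mathcal N_M=N^{(|C_{\check P}(\underline\Lambda^\vee)|-|C_P(\underline\Lambda)|)/2}\,\mathcal N_M^\vee$ and then divides \eqref{eq:KW-Z-raw} by $\mathcal N_M$, exactly as you do.

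Your fallback of absorbing a possible relative sign into the reflection $f^-$ in $\check W$ would not work, because the reflection flips the whole argument rather than one term. This is moot, however: the intertwining and the coefficientwise definition of $q_e^\vee$ produce no such sign.
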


\begin{proof}
	Item~\ref{prop:KW-fixed-charge-raw} follows by substituting
	\eqref{eq:KW-magnetic-to-dual-Wilson},
	\eqref{eq:KW-electric-to-dual-defect}, \eqref{eq:KW-dual-W}, and
	\eqref{eq:KW-dual-V} into \eqref{eq:KW-fixed-charge-primal-fourier}.
	
	For item~\ref{prop:KW-fixed-charge-normalized}, let \(\mathcal N_M^\vee\) be the
	dual analogue of \eqref{eq:NM-fixed-charge}, namely the normalization obtained by
	replacing \((\underline\Lambda,P,W,V)\) with
	\((\underline\Lambda^\vee,\check P,\check W,\check V)\). Using
	\eqref{eq:KW-dual-W}, \eqref{eq:KW-dual-V}, and the symmetric Fourier convention,
	we have
	\begin{equation}
		\check W_{u^\vee}(0)=\widehat V_{\theta_{\underline\Lambda}(u^\vee)}(0),
		\qquad
		\widehat{\check V}_{c^\vee}(0)=W_{\theta_{\underline\Lambda}(c^\vee)}(0),
	\end{equation}
	so that
	\begin{equation}
		\mathcal N_M
		=
		N^{\frac{-|C_P(\underline\Lambda)|+|C_{\check P}(\underline\Lambda^\vee)|}{2}}
		\mathcal N_M^\vee.
		\label{eq:Ndual}
	\end{equation}
	Now dividing both sides of \eqref{eq:KW-Z-raw} by \(\mathcal N_M\) gives \eqref{eq:KW-Z}.
\end{proof}

\subsection{Specialization to the Trotter weights}
\label{sec:KW-Trotter}

We now specialize \eqref{eq:KW-dual-W} and \eqref{eq:KW-dual-V} to the explicit local
weights \eqref{eq:perp-weights} and \eqref{eq:parallel-weights}. The key point is that,
with the present Fourier conventions, the transformed weights again have the same functional
form as the original anisotropic Trotter weights, up to local field-independent factors.
This makes the dual coupling map transparent.

\paragraph{Dual horizontal \((\check P+1)\)-cell weights.}
Let
\begin{equation}
c^\vee\in C_{\check P+1}(\Lambda^\vee),
\qquad
a:=\theta_\Lambda(c^\vee)\in C_{P-1}(\Lambda).
\end{equation}
Then the horizontal dual cell \(c^\vee(i)\) corresponds to the vertical primal cell
\(\underline a(i)\), so
\begin{equation}
	\check W_{c^\vee(i)}^{\parallel}(m)
	=
	\widehat V_{\underline a(i)}^{\perp}(-m).
\end{equation}
Using \eqref{eq:V-perp},
\begin{align}
	\widehat V_{\underline a(i)}^{\perp}(-m)
	&=
	\frac{1}{\sqrt N}\sum_{x\in \mathbb Z_N}
	\left(
	\delta_N(x)+\frac{e^{\beta J_a/M}-1}{N}
	\right)\chi_{-m}(-x)
	\nonumber\\
	&=
	\frac{1}{\sqrt N}
	\Bigl(
	1+\left(e^{\beta J_a/M}-1\right)\delta_N(m)
	\Bigr)
	\nonumber\\
	&=
	\frac{1}{\sqrt N}\exp \left(\frac{\beta J_a}{M}\,\delta_N(m)\right).
	\label{eq:KW-dual-horizontal-W-explicit}
\end{align}
Thus, up to the field-independent factor \(N^{-1/2}\), the dual horizontal weight is again
of the form \eqref{eq:W-parallel}. We therefore define
\begin{equation}
	\check K_{c^\vee}:=J_{\theta_\Lambda(c^\vee)}.
	\label{eq:KW-dual-K-def}
\end{equation}

\paragraph{Dual vertical \(\check P\)-cell weights.}
Let
\begin{equation}
a^\vee\in C_{\check P-1}(\Lambda^\vee),
\qquad
c:=\theta_\Lambda(a^\vee)\in C_{P+1}(\Lambda).
\end{equation}
Then the vertical dual cell \(\underline{a^\vee}(i)\) corresponds to the horizontal primal
cell \(c(i)\), so
\begin{equation}
	\check V_{\underline{a^\vee}(i)}^{\perp}(x)
	=
	\widehat W_{c(i)}^{\parallel}(x).
\end{equation}
Using \eqref{eq:W-parallel},
\begin{align}
	\widehat W_{c(i)}^{\parallel}(x)
	&=
	\frac{1}{\sqrt N}\sum_{m\in \mathbb Z_N}
	\left(
	1+\left(e^{\beta K_c/M}-1\right)\delta_N(m)
	\right)\chi_x(-m)
	\nonumber\\
	&=
	\sqrt N
	\left(
	\delta_N(x)+\frac{e^{\beta K_c/M}-1}{N}
	\right).
	\label{eq:KW-dual-vertical-V-explicit}
\end{align}
Thus, up to the factor \(\sqrt N\), the dual vertical weight is again of the form
\eqref{eq:V-perp}, and we define
\begin{equation}
	\check J_{a^\vee}:=K_{\theta_\Lambda(a^\vee)}.
	\label{eq:KW-dual-J-def}
\end{equation}

\paragraph{Dual vertical \((\check P+1)\)-cell source weights.}
Let
\begin{equation}
u^\vee\in C_{\check P}(\Lambda^\vee),
\qquad
u:=\theta_\Lambda(u^\vee)\in C_P(\Lambda).
\end{equation}
Then the vertical dual \((\check P+1)\)-cell \(\underline{u^\vee}(i)\) corresponds to the
horizontal primal \(P\)-cell \(u(i)\), so
\begin{equation}
	\check W_{\underline{u^\vee}(i)}^{\perp}(m)
	=
	\widehat V_{u(i)}^{\parallel}(-m).
\end{equation}
Using \eqref{eq:V-parallel},
\begin{align}
	\widehat V_{u(i)}^{\parallel}(-m)
	&=
	\frac{1}{\sqrt N}\sum_{x\in \mathbb Z_N}
	\exp \left(\frac{\beta}{M}\sum_{n\in\mathbb Z_N}h_u^{(n)}\omega^{nx}\right)\omega^{mx}
	\nonumber\\
	&=
	\frac{\sqrt N}{N}\sum_{x\in \mathbb Z_N}
	\exp \left(\frac{\beta}{M}\sum_{n\in\mathbb Z_N}h_u^{(n)}\omega^{nx}\right)\omega^{mx}.
	\label{eq:KW-dual-vertical-W-first}
\end{align}
Comparing with the standard vertical weight for a \(\check P\)-form theory on
\(\Lambda^\vee\), one finds, after the change of variable \(x=(-1)^{\check P}j\), that the
functional form agrees up to the field-independent factor \(\sqrt N\), provided
\begin{equation}
	\check g_{u^\vee}^{(n)}
	=
	h_{u}^{((-1)^{\check P}n)}
	=
	h_{u}^{((-1)^{d-P}n)}.
	\label{eq:KW-dual-g-def}
\end{equation}

\paragraph{Dual horizontal \(\check P\)-cell source weights.}
Let
\begin{equation}
u^\vee\in C_{\check P}(\Lambda^\vee),
\qquad
u:=\theta_\Lambda(u^\vee)\in C_P(\Lambda).
\end{equation}
Then the horizontal dual \(\check P\)-cell \(u^\vee(i)\) corresponds to the vertical primal
\((P+1)\)-cell \(\underline u(i)\), so
\begin{equation}
	\check V_{u^\vee(i)}^{\parallel}(x)
	=
	\widehat W_{\underline u(i)}^{\perp}(x).
\end{equation}
Using \eqref{eq:W-perp},
\begin{align}
	\widehat W_{\underline u(i)}^{\perp}(x)
	&=
	\frac{1}{\sqrt N}\sum_{m\in \mathbb Z_N}
	\left[
	\frac{1}{N}\sum_{j\in\mathbb Z_N}
	\exp \left(\frac{\beta}{M}\sum_{n\in\mathbb Z_N}g_u^{(n)}\omega^{nj}\right)
	\omega^{(-1)^Pjm}
	\right]\omega^{-xm}
	\nonumber\\
	&=
	\frac{1}{\sqrt N}
	\exp \left(\frac{\beta}{M}\sum_{n\in\mathbb Z_N}g_u^{(n)}\omega^{n((-1)^P x)}\right).
	\label{eq:KW-dual-horizontal-V-explicit}
\end{align}
Thus, up to the factor \(N^{-1/2}\), the dual horizontal weight is again of the form
\eqref{eq:V-parallel}, with
\begin{equation}
	\check h_{u^\vee}^{(n)}
	=
	g_{u}^{((-1)^Pn)}.
	\label{eq:KW-dual-h-def}
\end{equation}

Collecting \eqref{eq:KW-dual-K-def}, \eqref{eq:KW-dual-J-def},
\eqref{eq:KW-dual-g-def}, and \eqref{eq:KW-dual-h-def}, we obtain the dual coupling map
\begin{equation}
	\check K_{c^\vee}=J_{\theta_\Lambda(c^\vee)},
	\qquad
	\check J_{a^\vee}=K_{\theta_\Lambda(a^\vee)},
	\label{eq:KW-dual-JK-map}
\end{equation}
and
\begin{equation}
	\check g_{u^\vee}^{(n)}
	=
	h_{\theta_\Lambda(u^\vee)}^{((-1)^{d-P}n)},
	\qquad
	\check h_{u^\vee}^{(n)}
	=
	g_{\theta_\Lambda(u^\vee)}^{((-1)^P n)}.
	\label{eq:KW-dual-gh-map}
\end{equation}

\subsubsection{Cancellation of the local factors}

The transformed local weights \eqref{eq:KW-dual-horizontal-W-explicit},
\eqref{eq:KW-dual-vertical-V-explicit}, \eqref{eq:KW-dual-vertical-W-first}, and
\eqref{eq:KW-dual-horizontal-V-explicit} differ from the canonical Trotter weights only by
local field-independent factors \(N^{\pm 1/2}\). These local factors exactly cancel the factors of $N$ in \eqref{eq:KW-Z-raw}, as we now show. Comparing the
Fourier transformed local factors with the original local factors
\eqref{eq:parallel-weights} and \eqref{eq:perp-weights} we obtain one factor \(N^{-1/2}\)
for each dual horizontal \((\check P+1)\)-cell, one factor \(N^{1/2}\) for each dual
vertical \((\check P+1)\)-cell, one factor \(N^{1/2}\) for each dual vertical
\(\check P\)-cell, and one factor \(N^{-1/2}\) for each dual horizontal \(\check P\)-cell.
Their total product is therefore
\begin{equation}
	\begin{aligned}
		\mathfrak F_{P,\underline\Lambda}
		&:=
		\prod_{i\in \mathbb Z/M\mathbb Z}
		\prod_{a\in C_{P-1}(\Lambda)}N^{-1/2}
		\prod_{c\in C_{P+1}(\Lambda)}N^{1/2}
		\prod_{b\in C_P(\Lambda)}N^{1/2}
		\prod_{b\in C_P(\Lambda)}N^{-1/2}
		\\
		&=
		N^{\frac{M}{2}\left(-|C_{P-1}(\Lambda)|+|C_{P+1}(\Lambda)|\right)}
		=
		N^{\frac{|C_{P+1}(\underline\Lambda)|-|C_P(\underline\Lambda)|}{2}}
		=
		N^{\frac{|C_{\check P}(\underline\Lambda^\vee)|-|C_P(\underline\Lambda)|}{2}}
	\end{aligned}
	\label{eq:KW-local-factor-product}
\end{equation}
Thus the transformed dual partition function differs from the canonical dual Trotter
partition function only by the factor \(\mathfrak F_{P,\underline\Lambda}\), and this
cancels exactly against the normalization factors in
\eqref{eq:KW-Z-raw}. It is convenient to record the Trotter amplitudes explicitly:
\begin{equation}
	Z_{P,M}(q_m,q_e;J,K,g,h)
	:=
	Z_M(q_m,q_e;W[J,K,g,h],V[J,K,g,h]).
	\label{eq:KW-Trotter-fixed-charge-def}
\end{equation}

\begin{theorem}[Exact Kramers-Wannier duality for the Trotter spacetime theory]
	\label{thm:KW-Trotter-duality}
	For the explicit Trotter weights \eqref{eq:parallel-weights}, \eqref{eq:perp-weights}, the
	fixed-charge duality is normalization-free:
	\begin{equation}
		Z_{P,M}(q_m,q_e;J,K,g,h)
		=
		Z_{\check P,M}^\vee(q_e,q_m;\check J,\check K,\check g,\check h),
		\label{eq:KW-Trotter-fixed-charge-duality}
	\end{equation}
	where the dual couplings are given by \eqref{eq:KW-dual-JK-map} and
	\eqref{eq:KW-dual-gh-map}.
\end{theorem}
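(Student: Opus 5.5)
The plan is to start from Prop.~\ref{prop:KW-fixed-charge-duality}\ref{prop:KW-fixed-charge-raw} and carry out the cellwise bookkeeping of the preceding subsection. Concretely, \eqref{eq:KW-Z-raw} says
\begin{equation*}
Z_{P,M}(q_m,q_e;J,K,g,h)=N^{\frac{|C_P(\underline\Lambda)|-|C_{\check P}(\underline\Lambda^\vee)|}{2}}\,Z_M^\vee(q_e,q_m;\check W,\check V),
\end{equation*}
where $\check W,\check V$ are the Fourier-transported weights \eqref{eq:KW-dual-W}, \eqref{eq:KW-dual-V}. The goal is to show that $\check W,\check V$ coincide, cell by cell, with the canonical Trotter weights of the $\check P$-form theory on $\underline\Lambda^\vee=\Lambda^\vee\times S^1_M$ at couplings $(\check J,\check K,\check g,\check h)$, up to constant factors $N^{\pm1/2}$. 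Their product should then exactly cancel the prefactor.

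First I would fix the cell correspondence under $\theta_{\underline\Lambda}$. Dual horizontal $(\check P+1)$-cells $c^\vee(i)$ correspond to primal vertical $P$-cells $\underline a(i)$. Dual vertical $(\check P+1)$-cells $\underline{u^\vee}(i)$ correspond to primal horizontal $P$-cells $u(i)$. Dual vertical $\check P$-cells $\underline{a^\vee}(i)$ correspond to primal horizontal $(P+1)$-cells $c(i)$. Dual horizontal $\check P$-cells $u^\vee(i)$ correspond to primal vertical $(P+1)$-cells $\underline u(i)$. These four classes exhaust $C_{\check P+1}(\underline\Lambda^\vee)\sqcup C_{\check P}(\underline\Lambda^\vee)$.

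For each class I would invoke the explicit computations \eqref{eq:KW-dual-horizontal-W-explicit}, \eqref{eq:KW-dual-vertical-V-explicit}, \eqref{eq:KW-dual-vertical-W-first}, \eqref{eq:KW-dual-horizontal-V-explicit}. These identify the transported weights with $N^{-1/2}W^\parallel[\check K]$, $N^{1/2}V^\perp[\check J]$, $N^{1/2}W^\perp[\check g]$ and $N^{-1/2}V^\parallel[\check h]$ respectively, with the couplings given by \eqref{eq:KW-dual-JK-map}, \eqref{eq:KW-dual-gh-map}. Pulling these constants out of the products in \eqref{eq:KW-dual-raw-fixed-charge} gives $Z_M^\vee(q_e,q_m;\check W,\check V)=\mathfrak F_{P,\underline\Lambda}\,Z^\vee_{\check P,M}(q_e,q_m;\check J,\check K,\check g,\check h)$. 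By \eqref{eq:KW-local-factor-product}, $\mathfrak F_{P,\underline\Lambda}=N^{(|C_{\check P}(\underline\Lambda^\vee)|-|C_P(\underline\Lambda)|)/2}$, which is exactly the inverse of the prefactor above, and \eqref{eq:KW-Trotter-fixed-charge-duality} follows.

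The main obstacle is the dual vertical source weight, i.e.\ matching \eqref{eq:KW-dual-vertical-W-first} with \eqref{eq:W-perp} written for the $\check P$-form theory. The latter has the phase $\omega^{(-1)^{\check P}jm}$ and the $-m$ argument, and these must be reconciled. I would substitute $x=(-1)^{\check P}j$ and check that $\sum_n h_u^{(n)}\omega^{nx}=\sum_n h_u^{((-1)^{\check P}n)}\omega^{nj}$, which is a relabeling of $n$. One must also check that the orientation and sign conventions of $\theta_{\underline\Lambda}$ on vertical cells produce no extra sign in the argument beyond the $\widehat V^-$ reflection already built into \eqref{eq:KW-dual-W}; I would verify this against Prop.~\ref{prop:intertwine}. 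A secondary check is that the cell counts satisfy $|C_{\check P}(\underline\Lambda^\vee)|=|C_{P+1}(\underline\Lambda)|=M(|C_{P+1}(\Lambda)|+|C_P(\Lambda)|)$ and $|C_P(\underline\Lambda)|=M(|C_P(\Lambda)|+|C_{P-1}(\Lambda)|)$, so the exponents agree.
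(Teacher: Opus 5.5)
Your proposal is correct and follows essentially the same route as the paper. You start from Prop.~\ref{prop:KW-fixed-charge-duality}\ref{prop:KW-fixed-charge-raw}, identify the Fourier-transported weights cellwise with the canonical dual Trotter weights up to factors $N^{\pm 1/2}$, and observe that their product $\mathfrak F_{P,\underline\Lambda}$ exactly cancels the normalization prefactor. The orientation check you flag is not an additional obligation: $\theta^\sharp_{\underline\Lambda^\vee}$ merely relabels cell values, all signs from the differentials were already absorbed into $D^\vee$ in that proposition, and both sides of the final identity are evaluated with the same $D^\vee$.
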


\begin{proof}
	Equation \eqref{eq:KW-local-factor-product} shows that the local Fourier prefactors
	produced in the specialization to the Trotter weights cancel exactly against the normalization factors in \eqref{eq:KW-Z-raw}, yielding
	\eqref{eq:KW-Trotter-fixed-charge-duality}.
\end{proof}

\subsection{Duality of the quantum code}
\label{sec:KW-quantum-code}

The couplings \(J,K,g,h\) are precisely the couplings entering the quantum Hamiltonian of
\S\ref{sec:quantum-classical}: \(J\) and $K$ multiply the \(X\)- and $Z$-type stabilizer projectors
\(\hat{\mathcal A}_a\) and \(\hat{\mathcal B}_c\) respectively;  \(g\) and $h$ couple to local \(X\)- and $Z$-type source terms respectively. The dual coupling map
\eqref{eq:KW-dual-JK-map}-\eqref{eq:KW-dual-gh-map} therefore exchanges \(X\)-type and
\(Z\)-type operators under the primal-dual correspondence.

On the source-free stabilizer part of the Hamiltonian, this is already visible from
\eqref{eq:KW-dual-JK-map}: the coefficient of a dual \(X\)-type stabilizer comes from a
primal \(Z\)-type stabilizer, and the coefficient of a dual \(Z\)-type stabilizer comes
from a primal \(X\)-type stabilizer. Thus the \(P\)-form code on \(\Lambda\) is dual to
the \((d-P)\)-form code on \(\Lambda^\vee\), with electric and magnetic stabilizers
interchanged.

The same statement extends to the local Weyl operators themselves. If
\(u^\vee\in C_{\check P}(\Lambda^\vee)\) and \(u=\theta_\Lambda(u^\vee)\in C_P(\Lambda)\),
then the source-mode map is
\begin{equation}
	\hat Z_u^{\,n}
	\quad\longleftrightarrow\quad
	\check X_{u^\vee}^{\,(-1)^{d-P}n},
	\qquad
	\hat X_u^{\,n}
	\quad\longleftrightarrow\quad
	\check Z_{u^\vee}^{\,(-1)^P n}.
	\label{eq:KW-local-XZ-swap}
\end{equation}
Apart from these orientation signs, duality simply exchanges \(X\) and \(Z\).

At the level of logical operators, the exchange is equally concrete. Let
\begin{equation}
\nu\in Z_P(\Lambda),
\qquad
\mu^\vee\in Z_{d-P}(\Lambda^\vee)
\end{equation}
label the spatial Wilson and 't Hooft operators
\(\hat{\mathcal W}_\nu\) and \(\hat{\mathcal T}_{\mu^\vee}\). Under the primal-dual
correspondence,
\begin{equation}
	\hat{\mathcal W}_\nu
	\quad\longleftrightarrow\quad
	\check{\mathcal T}_{\nu},
	\qquad
	\hat{\mathcal T}_{\mu^\vee}
	\quad\longleftrightarrow\quad
	\check{\mathcal W}_{\mu^\vee}.
	\label{eq:KW-logical-operator-swap}
\end{equation}
Thus Kramers-Wannier duality exchanges the electric and magnetic logical operator algebras
of the code.

If the spatial lattice is self-dual, so that \(\Lambda\) is identified with \(\Lambda^\vee\),
the duality acts internally on the family of quantum codes. The self-dual locus is then the
fixed-point set of the coupling map:
\begin{equation}
	P=d-P,
	\qquad
	J_a=K_{\theta_\Lambda^{-1}(a)},
	\qquad
	g_u^{(n)}=h_{\theta_\Lambda^{-1}(u)}^{((-1)^{d-P}n)}.
	\label{eq:KW-self-dual-general}
\end{equation}
In the homogeneous case this reduces to
\begin{equation}
	P=\frac d2,
	\qquad
	J=K,
	\qquad
	g^{(n)}=h^{((-1)^P n)}.
	\label{eq:KW-self-dual-homogeneous}
\end{equation}
If, in addition, the source coefficients are parity-symmetric,
\begin{equation}
g^{(n)}=g^{(-n)},
\qquad
h^{(n)}=h^{(-n)},
\end{equation}
the sign in \eqref{eq:KW-self-dual-homogeneous} becomes irrelevant, and the self-dual locus
takes the simpler form
\begin{equation}
	P=\frac d2,
	\qquad
	J=K,
	\qquad
	g=h.
	\label{eq:KW-self-dual-simple}
\end{equation}
On this locus the quantum Hamiltonian is invariant under the combined operation of
primal-dual identification and exchange of \(X\)-type and \(Z\)-type operators. In
particular, the electric and magnetic logical sectors are related by an exact symmetry of
the code.

This completes the derivation of the exact higher-form Kramers-Wannier symmetry of the
spacetime theory. In the fixed-charge basis, duality exchanges electric and magnetic
backgrounds. For the explicit Trotter spacetime theory, the fixed-charge duality acts without any residual
normalization factor. At the level of the quantum code, it interchanges \(X\)-type and
\(Z\)-type stabilizers, local Weyl operators, and Wilson/'t Hooft logical observables under
the primal-dual correspondence. The importance of this symmetry is twofold. First, it
shows that the full spacetime theory space carries an exact \(\mathbb Z_2\)-involution.
Second, it identifies special slices on which one expects sharper structure than is visible
in the generic perturbative analysis. The next section studies precisely such slices: the
exact source-free gauge theory-subspace and its plaquette random-cluster companion, where stronger
nonperturbative results can be imported.

\section{Gauge/PRCM specializations}
\label{sec:specializations}

The background-dependent sectorwise polymer gas of
Sections~\ref{sec:fixed-charge-polymer-gas}-\ref{sec:low-activity} allows
arbitrary local weights, complex electric-magnetic linking phases, and no
positivity or monotonicity assumptions. Its rigorous control is therefore
perturbative. There is, however, a distinguished exact positive subspace of the
full spacetime weight space on which the theory reduces to a source-free
\(\mathbb Z_N\) lattice gauge model. This places the present construction in the
broader landscape of lattice gauge theory, duality, and phase structure
initiated by Wegner and developed further in
\cite{Wegner:1971app,FradkinShenker1979,BanksRabinovici1979}.

For prime \(N\), this gauge theory is exactly coupled to the plaquette
random-cluster model (PRCM). On closed lattices with nontrivial ambient
homology, the most relevant sharp result presently available is the
middle-dimensional self-dual toric transition of
Duncan-Schweinhart \cite{DuncanSchweinhartTopological}.

\subsection{The exact gauge/PRCM specialization}
\label{sec:specializations-gauge-prcm}

Let \(X\) be a finite oriented cell complex of dimension \(n\). For arbitrary
local weights \(W_c:\mathbb Z_N\to \mathbb C\) on \((P+1)\)-cells and
\(V_u:\mathbb Z_N\to \mathbb C\) on \(P\)-cells, define
\begin{equation}
	\mathcal Z_X(W,V)
	:=
	\sum_{\phi\in \Omega^P(X)}
	\prod_{c\in C_{P+1}(X)} W_c\left((d_X\phi)_c\right)
	\prod_{u\in C_P(X)} V_u(\phi_u).
	\label{eq:general-X}
\end{equation}
In particular, for \(X=\underline\Lambda\) and \(q_m=q_e=0\), this is the
neutral-background spacetime partition function \eqref{eq:Z-0}.

The exact source-free gauge theory-subspace is obtained by removing the \(P\)-cell
weights and retaining only a flatness weight on \((P+1)\)-cells. Writing
\(\beta_{\mathrm g}\ge 0\) for the gauge coupling, set
\begin{equation}
	V_u(x)=1
	\quad (u\in C_P(X)),
	\qquad
	W_c(x)=1+\left(e^{\beta_{\mathrm g}}-1\right)\delta_N(x)
	\quad (c\in C_{P+1}(X)).
	\label{eq:gauge-weights}
\end{equation}
Then \eqref{eq:general-X} becomes
\begin{align}
	\mathcal Z_X^{\mathrm{gauge}}(\beta_{\mathrm g})
	&:=
	\sum_{\phi\in \Omega^P(X)}
	\prod_{c\in C_{P+1}(X)}
	\Bigl[
	1+\left(e^{\beta_{\mathrm g}}-1\right)\delta_N\left((d_X\phi)_c\right)
	\Bigr]
	\nonumber\\
	&=
	\sum_{\phi\in \Omega^P(X)}
	\exp\Biggl(
	\beta_{\mathrm g}
	\sum_{c\in C_{P+1}(X)}
	\delta_N\left((d_X\phi)_c\right)
	\Biggr).
	\label{eq:gauge-partition}
\end{align}
This is the \(P\)-form \(N\)-state Potts lattice gauge theory on \(X\), written
in additive \(\mathbb Z_N\) notation
\cite{Wegner:1971app,FradkinShenker1979,BanksRabinovici1979,
	HiraokaShirai2016,DuncanSchweinhartTopological,
	DuncanSchweinhartDeconfinement}. In the absence of sources the action is manifestly invariant under a gauge transformation: $\phi \mapsto \phi + d_X\psi$ for $\psi \in \Omega^{P-1}(X)$.

\subsubsection{Potts lattice gauge theory from the Trotter weights}
It is useful to locate this gauge theory-subspace precisely inside the anisotropic
Trotter family of \S\ref{sec:quantum-classical}. Fix the Trotter number \(M\)
and take \(X=\underline\Lambda\). The horizontal \(P\)-cell weights
\eqref{eq:V-parallel} become trivial exactly when
\begin{equation}
	h_b^{(n)}=0
	\qquad
	\text{for all }b\in C_P(\Lambda),\ n\in\mathbb Z_N,
\end{equation}
so that \(V_{b(i)}^{\parallel}(x)\equiv 1\). The vertical \(P\)-cell weights are
\eqref{eq:V-perp},
\begin{equation}
	V_{\underline a(i)}^{\perp}(x)
	=
	\delta_N(x)+\frac{e^{\beta J_a/M}-1}{N}.
\end{equation}
After dividing by the field-independent zero-mode factor
\(V_{\underline a(i)}^{\perp}(0)\), one has
\begin{equation}
	\lim_{J_a\to\infty}
	\frac{V_{\underline a(i)}^{\perp}(x)}{V_{\underline a(i)}^{\perp}(0)}
	\to 1
	\qquad
	\text{uniformly in }x\in\mathbb Z_N.
\end{equation}
Thus the \(P\)-cell weights are removed by the exact condition \(h=0\) together
with the strong-coupling limit \(J\to\infty\).

For the \((P+1)\)-cell weights, the horizontal part \eqref{eq:W-parallel}
already has gauge form:
\begin{equation}
	W_{c(i)}^{\parallel}(x)
	=
	1+\left(e^{\beta K_c/M}-1\right)\delta_N(x).
\end{equation}
Hence a prescribed horizontal gauge coupling
\(\beta_{c(i)}^{\parallel}\) is realized by choosing
\begin{equation}
	K_c=\frac{M}{\beta}\,\beta_{c(i)}^{\parallel}.
	\label{eq:K-choice}
\end{equation}
The vertical \((P+1)\)-cell weights are \eqref{eq:W-perp},
\begin{equation}
	W_{\underline b(i)}^{\perp}(x)
	=
	\frac1N
	\sum_{j\in\mathbb Z_N}
	\exp\left(
	\frac{\beta}{M}\sum_{n\in\mathbb Z_N}g_b^{(n)}\omega^{nj}
	\right)
	\omega^{(-1)^P jx}.
\end{equation}
To realize a prescribed vertical gauge coupling
\(\beta_{\underline b(i)}^{\perp}\), define
\begin{equation}
	F_{\underline b(i)}(j)
	:=
	\left(e^{\beta_{\underline b(i)}^{\perp}}-1\right)+N\delta_N(j),
	\qquad
	j\in\mathbb Z_N,
\end{equation}
and
\begin{equation}
	\Gamma_{\underline b(i)}(j)
	:=
	\log F_{\underline b(i)}(j)
	-
	\frac1N\sum_{k\in\mathbb Z_N}\log F_{\underline b(i)}(k).
	\label{eq:Gamma-def}
\end{equation}
Now define $g$ to essentially be the Fourier transform of $\Gamma$:
\begin{equation}
	g_b^{(n)}
	:=
	\frac{M}{\beta N}
	\sum_{j\in\mathbb Z_N}
	\Gamma_{\underline b(i)}(j)\,\omega^{-nj}.
	\label{eq:g-choice}
\end{equation}
Then \(g_b^{(0)}=0\), and discrete Fourier inversion gives
\begin{equation}
	\exp\left(
	\frac{\beta}{M}\sum_{n\in\mathbb Z_N}g_b^{(n)}\omega^{nj}
	\right)
	\propto
	F_{\underline b(i)}(j),
\end{equation}
with proportionality factor independent of \(j\). Consequently,
\begin{align}
	W_{\underline b(i)}^{\perp}(x)
	&\propto
	\frac1N
	\sum_{j\in\mathbb Z_N}
	F_{\underline b(i)}(j)\,\omega^{(-1)^P jx}
	=
	1+\left(e^{\beta_{\underline b(i)}^{\perp}}-1\right)\delta_N(x).
	\label{eq:Wperp-gauge-form}
\end{align}
Therefore, after the zero-mode normalization of the \(P\)-cell weights and the
limit \(J\to\infty\), the Trotter family reduces to an anisotropic source-free
gauge theory on \(\underline\Lambda\) with horizontal couplings
\(\beta_{c(i)}^{\parallel}\) and vertical couplings
\(\beta_{\underline b(i)}^{\perp}\). The isotropic gauge theory-subspace is the further
specialization
\begin{equation}
	\beta_{c(i)}^{\parallel}
	=
	\beta_{\underline b(i)}^{\perp}
	=
	\beta_{\mathrm g}.
	\label{eq:isotropic-gauge}
\end{equation}

\paragraph{Strong coupling limit of the gauge theory as the code limit.}
This also clarifies the relation with the code limit
\eqref{eq:codeLimit}. The source-free gauge theory is an exact finite-coupling
subspace of the general spacetime weight space \eqref{eq:general-X}. Inside
the explicit anisotropic Trotter family, the same gauge theory is reached after
the zero-mode normalization of the \(P\)-cell weights, the exact condition
\(h=0\), the limiting condition \(J\to\infty\), and the choice of \(K\) and
\(g\) in \eqref{eq:K-choice} and \eqref{eq:g-choice}. One then reaches the
code limit by sending the resulting gauge couplings to infinity:
\begin{equation}
	\beta_{c(i)}^{\parallel}\to\infty,
	\qquad
	\beta_{\underline b(i)}^{\perp}\to\infty.
\end{equation}
In that further limit,
\begin{equation}
	\frac{W_c(x)}{W_c(0)}\to \delta_N(x),
	\qquad
	\frac{V_u(x)}{V_u(0)}\to 1,
\end{equation}
which is precisely the flat-field regime of
\S\ref{sec:fixed-charge-code-limit}. 

\subsubsection{Gauge theory coupled to PRCM}
The source-free Potts gauge theory \eqref{eq:gauge-partition} can be coupled to
the plaquette random-cluster model (PRCM) in the standard Edwards-Sokal sense \cite{EdwardsSokal1988}: both
measures arise as marginals of a single joint measure. This is exactly the coupling derived in
\cite[Prop.~20]{DuncanSchweinhartTopological}, we record the joint measure
and its two marginals in the present notation.

Assume \(N\) is prime, so that \(\mathbb Z_N=\mathbb F_N\), and remain on the
exact gauge theory-subspace \eqref{eq:gauge-weights}. Introduce the PRCM probability parameter
\begin{equation}
	p:=1-e^{-\beta_{\mathrm g}}.
	\label{eq:p-from-beta}
\end{equation}
Let \(\omega(c)\in\{0,1\}\) be an auxiliary occupation variable on each
\((P+1)\)-cell \(c\in C_{P+1}(X)\). We then define a joint measure on
\begin{equation}
	(\phi,\omega)\in \Omega^P(X)\times \{0,1\}^{C_{P+1}(X)}
\end{equation}
by
\begin{equation}
	\kappa(\phi,\omega)
	\propto
	\prod_{c\in C_{P+1}(X)}
	\Bigl[
	(1-p)\,\delta_N(\omega(c))
	+
	p\,\delta_N(\omega(c)-1)\,\delta_N\left((d_X\phi)_c\right)
	\Bigr].
	\label{eq:joint-gauge-prcm}
\end{equation}

Summing over the occupation variables gives
\begin{equation}
\begin{aligned}
	\sum_{\omega}\kappa(\phi,\omega)
	\propto&\;
	\prod_{c\in C_{P+1}(X)}
	\Bigl[(1-p)+p\,\delta_N\left((d_X\phi)_c\right)\Bigr]
	\\=&\;
	(1-p)^{|C_{P+1}(X)|}
	\prod_{c\in C_{P+1}(X)}
	\Bigl[
	1+\frac{p}{1-p}\,\delta_N\left((d_X\phi)_c\right)
	\Bigr].
\end{aligned}
\end{equation}
Since \(p/(1-p)=e^{\beta_{\mathrm g}}-1\), the \(\phi\)-marginal is, up to the
field-independent prefactor \((1-p)^{|C_{P+1}(X)|}\), exactly the source-free
Potts gauge measure \eqref{eq:gauge-partition}.

Conversely, for fixed \(\omega\), let \(Y=Y(\omega)\) be the subcomplex of \(X\)
obtained by taking the full \(P\)-skeleton of \(X\) and adjoining precisely
those \((P+1)\)-cells \(c\) for which \(\omega(c)=1\). Then
\(C_{P+1}(Y)=\{c\in C_{P+1}(X):\omega(c)=1\}\), and summing over the gauge field
gives
\begin{equation}
	\sum_{\phi}\kappa(\phi,\omega)
	\propto
	p^{|C_{P+1}(Y)|}(1-p)^{|C_{P+1}(X)|-|C_{P+1}(Y)|}\,
	|Z^P(Y;\mathbb F_N)|.
\end{equation}
Because \(Y\) contains the full \(P\)-skeleton of \(X\), the group
\(B^P(Y;\mathbb F_N)\) is independent of the choice of open \((P+1)\)-cells, so
\begin{equation}
	|Z^P(Y;\mathbb F_N)|
	=
	|H^P(Y;\mathbb F_N)|\,|B^P(Y;\mathbb F_N)|
	\propto
	|H^P(Y;\mathbb F_N)|
	=
	N^{b_P(Y;\mathbb F_N)},
\end{equation}
where $b_P$ is the $P$th Betty number of $Y$:
\begin{equation}
	b_P(Y;\mathbb F_N):=\dim_{\mathbb F_N}H_P(Y;\mathbb F_N).
\end{equation}
Hence the \(\omega\)-marginal is, up to normalization,
\begin{equation}
	\mu_{X,p,N,P+1}(Y)
	\propto
	p^{|C_{P+1}(Y)|}(1-p)^{|C_{P+1}(X)|-|C_{P+1}(Y)|}\,
	N^{b_P(Y;\mathbb F_N)}.
	\label{eq:prcm-measure}
\end{equation}
This is precisely the \((P+1)\)-dimensional plaquette random-cluster measure on
\(X\) \cite{HiraokaShirai2016}.

Thus the source-free Potts gauge theory and the PRCM
are coupled through the joint measure \eqref{eq:joint-gauge-prcm}: the former
is the \(\phi\)-marginal and the latter is the \(Y\)-marginal. In particular,
for prime \(N\), the specialized positive slice of the present spacetime theory
inherits the probabilistic structure of the PRCM.

\subsection{Middle-dimensional toric specialization}
\label{sec:specializations-middle-torus}

Assume that \(N\) is an odd prime and specialize the geometry to the regular
self-dual cubical torus
\begin{equation}
	X=T_L^{\,2(P+1)}.
	\label{eq:middle-torus-geometry}
\end{equation}
Then the companion PRCM has plaquette dimension
\(P+1\), exactly half the ambient dimension.

\subsubsection{Sharp homological transition in the PRCM}

Let \(Y\) denote the random \((P+1)\)-dimensional subcomplex distributed by
the PRCM measure \eqref{eq:prcm-measure}, and let \(\iota:Y\hookrightarrow X\)
be the inclusion inducing the push-forward on homology $\iota_*:H_{P+1}(Y;\mathbb F_N)\to H_{P+1}(X;\mathbb F_N)$. Let $A_L$ and $S_L$ be the events that $\iota_*$ is nonzero and surjective respectively.

\begin{theorem}[Duncan-Schweinhart, {\cite[Thm.~8]{DuncanSchweinhartTopological}}]
	\label{thm:DS-middle-torus}
	If
	\begin{equation}
		p_{\mathrm{sd}}(N):=\frac{\sqrt N}{1+\sqrt N},
		\label{eq:psd}
	\end{equation}
	then, as $L \to \infty$:
	\begin{gather}
		\mu_{X,p,N,P+1}(A_L)\to 0
		\qquad
		\text{for }p<p_{\mathrm{sd}}(N),
		\label{eq:middle-below}
		\\
		\mu_{X,p,N,P+1}(S_L)\to 1
		\qquad
		\text{for }p>p_{\mathrm{sd}}(N),
		\label{eq:middle-above}
	\end{gather}
\end{theorem}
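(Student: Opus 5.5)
The plan is to reproduce the Duncan--Schweinhart argument in three parts. The parts are an exact duality for the PRCM on the self-dual torus, positive association together with a sharp-threshold theorem, and a topological complementarity between the events \(A_L\) and \(S_L\) for a configuration and its dual.

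\emph{Step 1 (duality).} For a subcomplex \(Y\) containing the full \(P\)-skeleton of \(X=T_L^{\,2(P+1)}\), define the dual subcomplex \(Y^\bullet\subset X^\vee\cong X\) (up to translation) to consist of the dual \((P+1)\)-cells \(\theta^{-1}(c)\) for \(c\notin Y\). The key identity is a Poincaré/Alexander-duality Euler-characteristic relation over \(\mathbb F_N\) of the form
\[
b_P(Y)=b_P(Y^\bullet)+|C_{P+1}(Y^\bullet)|-|C_{P+1}(X)|+\mathrm{const}+(\text{a term in }\operatorname{rank}\iota_*).
\]
Inserting this into \eqref{eq:prcm-measure} shows that \(Y^\bullet\) is distributed as a PRCM at the dual parameter \(p^*\), defined by
\[
\frac{p}{1-p}\cdot\frac{p^*}{1-p^*}=N,
\]
up to a bounded Radon--Nikodym factor \(N^{O(1)}\) coming from the finitely many toric homology classes. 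The fixed point of \(p\mapsto p^*\) is exactly \(p_{\mathrm{sd}}(N)=\sqrt N/(1+\sqrt N)\). This is the same Fourier/Kramers--Wannier mechanism as in \S\ref{sec:KW-duality}, since the gauge/PRCM coupling \eqref{eq:joint-gauge-prcm} intertwines it with Theorem~\ref{thm:KW-Trotter-duality}. I will also prove the complementarity statement
\[
\operatorname{rank}\iota_*^{Y}+\operatorname{rank}\iota_*^{Y^\bullet}=\dim H_{P+1}(X;\mathbb F_N).
\]
It follows from the nondegenerate intersection pairing in middle dimension: a class is hit by \(Y\) exactly when it is orthogonal to everything hit by \(Y^\bullet\). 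Consequently \(A_L(Y)\) fails if and only if \(S_L(Y^\bullet)\) holds.

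\emph{Step 2 (monotonicity and sharp threshold).} The weight \(N^{b_P(Y)}\) can be written as \(N^{\mathrm{const}-\operatorname{rank}}\) of a matroid, namely the column matroid of the \(\mathbb F_N\)-coboundary restricted to the open plaquettes. Because matroid rank is submodular and \(N\ge1\), the FKG lattice condition holds. Hence \(\mu_{X,p,N,P+1}\) is positively associated and stochastically increasing in \(p\). The events \(A_L\) and \(S_L\) are increasing and invariant under the transitive translation group of the torus. A Bourgain--Kahn--Kalai / Graham--Grimmett sharp-threshold theorem for monotone measures then gives a threshold window of width \(O(1/\log L)\), uniformly for \(p\) in compact subsets of \((0,1)\).

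\emph{Step 3 (conclusion).} At \(p=p_{\mathrm{sd}}\), Step 1 gives
\[
\mu(A_L)+\mu(S_L)\approx 1
\]
up to the bounded distortion factor. I also need the observation that \(A_L\) and \(S_L\) are related by symmetries of the torus: permuting coordinates acts transitively on the relevant coordinate \((P+1)\)-tori. Combining these should bound \(\mu(A_L)\) and \(\mu(S_L)\) away from \(0\) and \(1\) at the self-dual point. Applying the sharp threshold on either side then yields \(\mu(A_L)\to0\) for \(p<p_{\mathrm{sd}}\) and \(\mu(S_L)\to1\) for \(p>p_{\mathrm{sd}}\).

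The main obstacle is this last non-degeneracy step. Duality only ties \(A_L\) at \(p\) to \(S_L\) at \(p^*\), and \(A_L\neq S_L\). I therefore expect to need a separate argument, for example a square-root-trick style estimate using FKG over the finitely many coordinate classes of \(H_{P+1}(X)\), to show that nonzero and surjective have comparable thresholds. The bounded but non-constant correction factor from the homology term in the duality also has to be controlled in the same step.
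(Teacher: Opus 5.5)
The paper gives no proof of this theorem; it imports it from Duncan--Schweinhart. Judged on its own, your Steps~1--2 are sound in outline: the dual parameter with fixed point $p_{\mathrm{sd}}(N)$, the complementarity $\operatorname{rank}\iota^Y_*+\operatorname{rank}\iota^{Y^\bullet}_*=\dim H_{P+1}(X;\mathbb F_N)$, FKG from submodularity of matroid rank, and a Graham--Grimmett sharp threshold. One small correction: translations are not transitive on the $(P+1)$-cells of $T_L^{\,2(P+1)}$, so you must use the full cubical automorphism group (translations, coordinate permutations, reflections), which also preserves $A_L$ and $S_L$.

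The genuine gap is Step~3. It carries the actual content of the theorem, and you leave it open. Write $\mu$ for $\mu_{X,p_{\mathrm{sd}},N,P+1}$. At $p=p_{\mathrm{sd}}$, duality gives only the multiplicative relation $D^{-1}\mu(S_L)\le 1-\mu(A_L)\le D\,\mu(S_L)$ with $D=N^{O(1)}$; it does not give $\mu(A_L)+\mu(S_L)\approx 1$. Together with $S_L\subset A_L$ this yields $\mu(A_L)\ge 1/(1+D)$ and $\mu(S_L)\le D/(1+D)$. Through the sharp threshold these bounds produce only the converse statements: $\mu_p(A_L)\to 1$ for $p>p_{\mathrm{sd}}$ and $\mu_p(S_L)\to 0$ for $p<p_{\mathrm{sd}}$. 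They remain consistent with $\mu(A_L)\to 1$ and $\mu(S_L)\to 0$ at the self-dual point.

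Your suggested symmetry remedy fails as stated. Torus automorphisms preserve $\operatorname{rank}\iota_*$, so they cannot relate $A_L$ to $S_L$. What you need is $\inf_L\mu(S_L)>0$; by the relation above this also gives $\sup_L\mu(A_L)<1$.

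One way to supply this, fleshing out your square-root-trick idea, goes as follows.
\begin{enumerate}
\item Set up the events. For $x\in H:=H_{P+1}(X;\mathbb F_N)$ let $E_x=\{x\in\operatorname{im}\iota_*\}$, which is increasing. Then $A_L=\bigcup_{x\neq 0}E_x$ is a union of $K=N^{\dim H}-1$ events, and $K$ does not depend on $L$.
\item Apply Harris--FKG. It gives $1-\mu(A_L)\ge\prod_{x\neq0}(1-\mu(E_x))$, so some $x\neq 0$ has $1-\mu(E_x)\le(1-\mu(A_L))^{1/K}$.
\item Use the hyperoctahedral group $G$ of coordinate permutations and reflections. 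It preserves $\mu$ and sends $E_x$ to $E_{gx}$. Write $x=\sum_I a_I[T^I]$ in the basis of coordinate $(P+1)$-subtori $T^I$, and pick $I$ with $a_I\neq 0$. The reflection $r_j$ of the $j$-th coordinate acts by $-1$ on $[T^J]$ if $j\in J$ and trivially otherwise. For odd $N$, composing the projectors $\tfrac12(1+r_j)$ over $j\notin I$ therefore sends $x$ to $a_I[T^I]$. Coordinate permutations then give $\operatorname{span}(Gx)=H$.
\item Close the loop. A union bound gives $1-\mu(S_L)\le |G|\,(1-\mu(E_x))\le |G|\,\bigl(D\,\mu(S_L)\bigr)^{1/K}$. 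This forces $\mu(S_L)\ge c(N,P)>0$ uniformly in $L$.
\end{enumerate}
Only with this bound does the sharp-threshold step yield \eqref{eq:middle-below} and \eqref{eq:middle-above}. Nothing in your outline uses that $N$ is odd; in this argument that hypothesis enters exactly at step~3.
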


Using \eqref{eq:p-from-beta}, the self-dual random-cluster parameter
\eqref{eq:psd} corresponds to the exact critical gauge coupling
\begin{equation}
	\beta_{\mathrm{sd}}(N)
	=
	-\log\left(1-p_{\mathrm{sd}}(N)\right)
	=
	\log(1+\sqrt N).
	\label{eq:betasd}
\end{equation}

\subsubsection{Transfer to the gauge theory through the joint measure}

The bridge from the PRCM to the source-free Potts gauge theory is furnished by
the joint measure \eqref{eq:joint-gauge-prcm} on the torus \(X\). We write
\(\nu^{\mathrm{gauge}}_{X,\beta_{\mathrm g}}\) for the normalized gauge measure
associated with \eqref{eq:gauge-partition}, and
\(\langle-\rangle^{\mathrm{gauge}}_{X,\beta_{\mathrm g}}\) for expectation with
respect to it. Under the joint measure, one may first sample the PRCM
subcomplex \(Y\) and then sample the gauge field \(\phi\) conditionally on that
choice of \(Y\). The conditional law is exactly what transfers homological
information about \(Y\) into identities for gauge-theoretic Wilson observables.

For a \(P\)-cycle \(\gamma\in Z_P(X)\), define the associated classical Wilson observable by (the same cahracter as in \eqref{eq:Zme})
\begin{equation}
	\mathcal W_\gamma[\phi]
	:=
	\chi\left(\int_\gamma \phi\right).
	\label{eq:classical-Wilson}
\end{equation}
By the exact coupling via the joint measure \(\kappa\) in
\eqref{eq:joint-gauge-prcm}, the conditional law of \(\phi\) given \(Y\) is
uniform on the cocycle space \(Z^P(Y;\mathbb F_N)\); see
\cite[Prop.~21]{DuncanSchweinhartTopological}. Consequently, the finite-volume
Wilson/null-homology correspondence of
\cite[Cor.~29 and Cor.~30]{DuncanSchweinhartTopological}, translated into the
present notation, gives for every \(\gamma,\gamma_1,\gamma_2\in Z_P(X)\),
\begin{align}
	\mathbb E_{\kappa}\!\left[\mathcal W_\gamma \,\middle|\, Y\right]
	&=
	\mathbf 1_{\{
		[\gamma]=0\text{ in }H_P(Y;\mathbb F_N)
		\}},
	\label{eq:toric-conditional-one-point}
	\\
	\mathbb E_{\kappa}\!\left[
	\mathcal W_{\gamma_1}\,\mathcal W_{-\gamma_2}
	\,\middle|\, Y
	\right]
	&=
	\mathbf 1_{\{
		[\gamma_1]=[\gamma_2]\text{ in }H_P(Y;\mathbb F_N)
		\}}.
	\label{eq:toric-conditional-two-point}
\end{align}
Averaging over \(Y\) gives
\begin{align}
	\left\langle \mathcal W_\gamma \right\rangle^{\mathrm{gauge}}_{X,\beta_{\mathrm g}}
	&=
	\mu_{X,p,N,P+1}
	\Bigl(
	[\gamma]=0\text{ in }H_P(Y;\mathbb F_N)
	\Bigr),
	\label{eq:toric-one-point}
	\\
	\left\langle
	\mathcal W_{\gamma_1}\,\mathcal W_{-\gamma_2}
	\right\rangle^{\mathrm{gauge}}_{X,\beta_{\mathrm g}}
	&=
	\mu_{X,p,N,P+1}
	\Bigl(
	[\gamma_1]=[\gamma_2]\text{ in }H_P(Y;\mathbb F_N)
	\Bigr).
	\label{eq:toric-two-point}
\end{align}
These identities are the precise mechanism by which the PRCM transition is
transferred to the gauge theory.

\subsubsection{Gauge-theoretic interpretations of the critical coupling}

For subsets \(S_1,S_2\subset C_\bullet(X)\), write
\begin{equation}
	\mathrm{dist}_X(S_1,S_2)
	:=
	\min\left\{
	\mathrm{dist}_X(v_1,v_2):
	v_i\in C_0(X)\text{ is incident to some cell of }S_i,\ i=1,2
	\right\},
	\label{eq:dist-subsets}
\end{equation}
where \(\mathrm{dist}_X(v_1,v_2)\) is the graph distance in the \(1\)-skeleton of
\(X\). Let \(\mathscr P_L\subset Z_P(X)\) be the finite set of nontrivial toric
\(P\)-cycles supported on codimension-one subtori of \(X\). Define the gauge
event
\begin{equation}
	\mathcal E_L
	:=
	\Bigl\{
	\phi\in\Omega^P(X):
	\exists\,\gamma_1,\gamma_2\in\mathscr P_L
	\text{ with }
	\mathrm{dist}_X(\Supp(\gamma_1),\Supp(\gamma_2))\ge L/2-1
	\text{ and }
	\mathcal W_{\gamma_1}[\phi]=\mathcal W_{\gamma_2}[\phi]
	\Bigr\}.
	\label{eq:EL}
\end{equation}

\begin{prop}[Gauge-theoretic consequences of the toric transition]
	\label{prop:toric-gauge-consequences}
	For \(X=T_L^{\,2(P+1)}\) and \(N\) an odd prime:
	\begin{enumerate}
		\item
		If \(\Gamma\in Z_P(X)\) represents a nonzero class in
		\(H_P(X;\mathbb F_N)\), then
		\begin{equation}
			\left\langle \mathcal W_\Gamma \right\rangle^{\mathrm{gauge}}_{X,\beta_{\mathrm g}}
			=
			0
			\qquad
			\text{for every }L\text{ and every }\beta_{\mathrm g}\ge 0.
			\label{eq:single-toric-loop-zero}
		\end{equation}
		
		\item
		For the event \(\mathcal E_L\) defined in \eqref{eq:EL},
		\begin{equation}
			\nu^{\mathrm{gauge}}_{X,\beta_{\mathrm g}}(\mathcal E_L)\to 1
			\qquad
			\text{for }\beta_{\mathrm g}>\beta_{\mathrm{sd}}(N).
			\label{eq:EL-high}
		\end{equation}
	\end{enumerate}
\end{prop}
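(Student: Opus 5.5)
Part 1 I would prove directly from the gauge symmetry of \eqref{eq:gauge-partition}, without the PRCM. The weights depend on \(\phi\) only through \(d_X\phi\). Hence the measure \(\nu^{\mathrm{gauge}}_{X,\beta_{\mathrm g}}\) is invariant under \(\phi\mapsto\phi+\psi\) for every cocycle \(\psi\in Z^P(X;\mathbb F_N)\), not merely for coboundaries. Since \([\Gamma]\neq 0\) in \(H_P(X;\mathbb F_N)\) and \(N\) is prime, universal coefficients give a class \([\psi]\in H^P(X;\mathbb F_N)\cong \mathrm{Hom}(H_P(X;\mathbb F_N),\mathbb F_N)\) with \(\int_\Gamma\psi=1\). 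Shifting the summation variable then gives \(\langle\mathcal W_\Gamma\rangle=\omega\,\langle\mathcal W_\Gamma\rangle\), and \(\omega\neq 1\) forces \(\langle\mathcal W_\Gamma\rangle=0\). This holds for every \(L\) and every \(\beta_{\mathrm g}\ge 0\). Equivalently, \eqref{eq:toric-one-point} applies: \([\Gamma]=0\) in \(H_P(Y)\) would push forward to \([\Gamma]=0\) in \(H_P(X)\), which is false.

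For part 2 I would use the joint measure \(\kappa\) and the conditional law of \(\phi\) given \(Y\), which is uniform on \(Z^P(Y;\mathbb F_N)\). Fix two specific toric cycles \(\gamma_1,\gamma_2\in\mathscr P_L\) supported on parallel codimension-one subtori at distance about \(L/2\); the translate of a coordinate \(P\)-subtorus by \(\lfloor L/2\rfloor\) in a transverse direction meets the distance requirement in \eqref{eq:EL}. Both represent the same class in \(H_P(X;\mathbb F_N)\).

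The key topological step is to show that on the event \(S_L\) of Thm.~\ref{thm:DS-middle-torus}, the cycles become homologous inside \(Y\). The region between the two subtori is a \((P+1)\)-chain \(\Sigma\) in \(X\) with \(\partial\Sigma=\gamma_1-\gamma_2\). This is where I would use the duality behind Duncan–Schweinhart in the middle dimension \(2(P+1)\). Surjectivity of \(\iota_*\) on \(H_{P+1}\) for \(Y\) is paired, via Alexander/Lefschetz-type duality on the torus, with the dual complex failing to carry nontrivial \((P+1)\)-classes. This in turn means \(H_P(Y)\to H_P(X)\) is injective; this is the statement in \cite{DuncanSchweinhartTopological} relating \(S_L\) to the vanishing of the kernel on \(H_P\), e.g.\ their Cor.~30 framework. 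Injectivity gives \([\gamma_1-\gamma_2]=0\) in \(H_P(Y)\), since it is zero in \(H_P(X)\).

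Given this, \eqref{eq:toric-conditional-two-point} yields \(\mathcal W_{\gamma_1}=\mathcal W_{\gamma_2}\) almost surely conditional on \(Y\in S_L\). Here I use that a uniformly chosen cocycle of \(Y\) integrates to the same value on \(Y\)-homologous cycles, which is deterministic rather than merely in expectation. Hence \(\nu^{\mathrm{gauge}}(\mathcal E_L)\ge\mu_{X,p,N,P+1}(S_L)\to 1\) for \(p>p_{\mathrm{sd}}(N)\), that is, for \(\beta_{\mathrm g}>\beta_{\mathrm{sd}}(N)\) by \eqref{eq:betasd}.

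The main obstacle is the topological step: extracting from the event \(S_L\) (surjectivity on \(H_{P+1}\)) that \(\gamma_1,\gamma_2\) are homologous in \(Y\). If the duality argument does not directly give injectivity on \(H_P\), I would fall back on the duality coupling of the PRCM with its dual complex on \(T_L^{2(P+1)}\) used in the proof of Thm.~\ref{thm:DS-middle-torus}. There the event that the dual complex has \(\iota_*=0\) on \(H_{P+1}\), which has probability tending to \(1\) above \(p_{\mathrm{sd}}\) by self-duality, is equivalent to \(H_P(Y)\to H_P(X)\) being injective.
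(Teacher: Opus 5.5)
Your Part 1 is correct, and it takes a different, more elementary route than the paper. You use invariance of the gauge weights under $\phi\mapsto\phi+\psi$ for every $\psi\in Z^P(X;\mathbb F_N)$, together with the perfect pairing of $H^P$ with $H_P$ over $\mathbb F_N$. The paper instead reads the vanishing off \eqref{eq:toric-one-point}, since a class nonzero in $H_P(X)$ cannot die in any subcomplex $Y$. Your argument does not need the PRCM coupling at all.

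Part 2 has a genuine gap: the key topological step is false. Surjectivity of $\iota_*$ on $H_{P+1}$ does not imply that $H_P(Y)\to H_P(X)$ is injective. In fact injectivity fails with probability tending to one for every $p<1$.

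\begin{itemize}
\item Suppose some $P$-cell $f$ of $X$ has none of its $(P+1)$-cofaces in $Y$. Then for any coface $c$ of $f$, the cycle $\partial c$ bounds in $X$ but not in $Y$, because any filling inside $Y$ would need a cell containing $f$.
\item Under the PRCM, each $(P+1)$-cell is closed with conditional probability at least $1-p$ given all other cells, since closing a cell changes $b_P$ by $0$ or $+1$. There are of order $L^{2(P+1)}$ $P$-cells with pairwise disjoint coface sets, so such an isolated $P$-cell exists with probability tending to one. For $P=0$ this is just the fact that the supercritical random-cluster graph has isolated vertices.
\end{itemize}

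The same observation shows that no deterministic choice of $\gamma_1,\gamma_2$ can work. A cycle on a flat $P$-subtorus has of order $L^P$ cells. Hence the probability that $[\gamma_1]=[\gamma_2]$ in $H_P(Y)$ stays bounded away from $1$, and tends to $0$ when $P\ge 1$. Conditionally on $Y$ with $[\gamma_1-\gamma_2]\neq 0$ in $H_P(Y)$, the value $\int_{\gamma_1-\gamma_2}\phi$ is uniform on $\mathbb F_N$. So $\nu^{\mathrm{gauge}}_{X,\beta_{\mathrm g}}\bigl(\mathcal W_{\gamma_1}=\mathcal W_{\gamma_2}\bigr)$ tends to $1/N$ (for $P\ge1$), not to $1$.

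Your fallback is also incorrect. The duality behind Thm.~\ref{thm:DS-middle-torus} relates $\iota_*$ on $H_{P+1}$ for $Y$ and for its dual complex: surjective for one iff zero for the other. It says nothing about $\ker(H_P(Y)\to H_P(X))$. By the long exact sequence of $(X,Y)$, that kernel is the cokernel of $H_{P+1}(X)\to H_{P+1}(X,Y)$. By the isolated-cell argument, its dimension is typically of order the volume.

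The missing idea is that the pair must be chosen depending on $Y$, which is exactly why $\mathcal E_L$ in \eqref{eq:EL} is an existential event. The paper uses \cite[Prop.~36]{DuncanSchweinhartTopological}. On $A_L$ it supplies random $\gamma_1,\gamma_2\in\mathscr P_L$, built from $Y$, at distance at least $L/2-1$, with $[\gamma_1]=[\gamma_2]$ in $H_P(Y;\mathbb F_N)$. Heuristically, these are two parallel slices of a homologically nontrivial $(P+1)$-cycle of $Y$; the part of that cycle between the slices is a filling inside $Y$.

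With this replacement, the rest of your argument goes through exactly as in the paper:
\begin{itemize}
\item locking given $Y$-homology is deterministic, from the uniform law on $Z^P(Y;\mathbb F_N)$;
\item $\nu^{\mathrm{gauge}}_{X,\beta_{\mathrm g}}(\mathcal E_L)\ge\mu_{X,p,N,P+1}(A_L)\ge\mu_{X,p,N,P+1}(S_L)\to 1$ for $\beta_{\mathrm g}>\beta_{\mathrm{sd}}(N)$.
\end{itemize}
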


\begin{proof}
	Equation \eqref{eq:single-toric-loop-zero} follows immediately from
	\eqref{eq:toric-one-point}. If \([\Gamma]\neq 0\) in \(H_P(X;\mathbb F_N)\),
	then \([\Gamma]\) cannot vanish in \(H_P(Y;\mathbb F_N)\) for any subcomplex
	\(Y\subset X\): if \(\Gamma=\partial \rho\) in \(Y\), then the same equality
	holds in \(X\). Hence the event on the right-hand side of
	\eqref{eq:toric-one-point} is impossible.
	
	For \eqref{eq:EL-high}, \eqref{eq:middle-above} implies
	\(\mu_{X,p,N,P+1}(A_L)\to 1\), since \(S_L\subset A_L\). On the event \(A_L\),
	\cite[Prop.~36]{DuncanSchweinhartTopological} produces two cycles
	\(\gamma_1,\gamma_2\in\mathscr P_L\) such that
	\begin{equation}
		\mathrm{dist}_X(\Supp(\gamma_1),\Supp(\gamma_2))\ge L/2-1,
		\qquad
		[\gamma_1]=[\gamma_2]\text{ in }H_P(Y;\mathbb F_N).
		\label{eq:toric-pair-from-giant-cycle}
	\end{equation}
	By \eqref{eq:toric-conditional-two-point},
	\begin{equation}
		\mathbb E_{\kappa}\!\left[
		\mathcal W_{\gamma_1}\,\mathcal W_{-\gamma_2}
		\,\middle|\, Y
		\right]
		=1.
	\end{equation}
	Since \(\mathcal W_{\gamma_1}\mathcal W_{-\gamma_2}\) takes values in the
	group of \(N\)-th roots of unity, expectation \(1\) forces
	\begin{equation}
		\mathcal W_{\gamma_1}[\phi]=\mathcal W_{\gamma_2}[\phi]
		\qquad
		\text{for \(\kappa\)-almost every \(\phi\), conditioned on }Y.
		\label{eq:toric-conditional-locking}
	\end{equation}
	Hence \(\kappa(\mathcal E_L\mid Y)=1\) on \(A_L\), and therefore
	\begin{equation}
		\nu^{\mathrm{gauge}}_{X,\beta_{\mathrm g}}(\mathcal E_L)
		=
		\kappa(\mathcal E_L)
		\ge
		\mu_{X,p,N,P+1}(A_L).
		\label{eq:EL-lower-bound}
	\end{equation}
	Using \(p=1-e^{-\beta_{\mathrm g}}\), the convergence
	\eqref{eq:EL-high} follows from \eqref{eq:middle-above} and
	\eqref{eq:betasd}.
\end{proof}

\paragraph{Wilson-operator interpretation.}
Proposition~\ref{prop:toric-gauge-consequences} shows that the toric transition
is not detected by a single nontrivial Wilson one-point function: every fixed
nontrivial toric Wilson operator has zero expectation at all couplings. The
critical coupling \(\beta_{\mathrm{sd}}(N)\) instead marks the onset of
macroscopic value-locking. Above \(\beta_{\mathrm{sd}}(N)\), the sampled gauge
field exhibits with probability tending to one a pair of macroscopically
separated nontrivial toric \(P\)-cycles whose Wilson evaluations coincide.
Below \(\beta_{\mathrm{sd}}(N)\), Theorem~\ref{thm:DS-middle-torus} gives
\eqref{eq:middle-below}, so the companion PRCM almost surely carries no
nontrivial ambient middle-dimensional homology, and the specific PRCM
mechanism producing the locking event \(\mathcal E_L\) is absent with high
probability.

\paragraph{Sampling interpretation.}
The same critical coupling also governs the geometry of the finite-volume
plaquette Swendsen-Wang update \cite{SwendsenWang87} associated with the joint measure
\eqref{eq:joint-gauge-prcm}. Starting from a gauge field \(\phi\), one first
samples \(Y\) from the conditional law \(\kappa(\,\cdot\,\mid \phi)\), and then
samples a new gauge field \(\phi'\) from \(\kappa(\,\cdot\,\mid Y)\). In the
present toric middle-dimensional setting, Theorem~\ref{thm:DS-middle-torus}
therefore yields a sharp crossover in the sampler. Below
\(\beta_{\mathrm{sd}}(N)\), the sampled subcomplex \(Y\) almost surely has
trivial ambient middle-dimensional homology, so the resampling step remains
confined to homologically trivial toric data. Above
\(\beta_{\mathrm{sd}}(N)\), the resampling step accesses nontrivial toric
sectors with asymptotically positive probability; in the formulation of
\cite[\S5.1]{DuncanSchweinhartTopological}, that probability tends to
\begin{equation}
	1-N^{-\binom{2(P+1)}{P+1}}.
	\label{eq:nonlocal-move-probability}
\end{equation}

Because the middle-dimensional toric slice is self-dual, the
Kramers-Wannier operator map \eqref{eq:KW-logical-operator-swap} exchanges
electric and magnetic toric operators. The same critical coupling may therefore
be read, after duality, as the corresponding sharp statement for toric 't Hooft
operators and for the appearance of nontrivial toric sectors in the dual
sampling dynamics.

\section{Conclusion and outlook}
\label{sec:conclusion}

We have developed an exact spacetime framework for finite-temperature
\(\mathbb Z_N\) homological codes.  The thermal trace, decorated by Wilson and
't Hooft insertions and Euclidean twists, is rewritten at fixed Trotter number
as a family of spacetime partition functions labeled by electric and magnetic
backgrounds.  These partition functions admit an exact reformulation as a gas of closed
electric and magnetic defects, or equivalently as a two-species polymer gas of
connected closed polymers subject to homological constraints and linking
interactions.  The same framework also carries an exact higher-form
Kramers-Wannier duality exchanging electric and magnetic data, and it contains
as a positive specialization the source-free Potts lattice gauge theory and its
exact coupling to the PRCM.

The main point is not only that these reformulations exist, but that they
provide a workable platform for analysis.  In the present paper, we used this
platform in two ways.  First, by comparing the complex polymer gas to positive
same-species hard-core majorants, we obtained a rigorous low-activity regime in
which large connected polymers, and in particular homologically nontrivial
ones, are exponentially suppressed.  Second, on the gauge/PRCM slice, we
imported sharp nonperturbative information from the PRCM literature, including
the middle-dimensional toric transition and its gauge-theoretic interpretation.

These results leave several natural directions open. One is to sharpen the polymer analysis on the analytic side, for example by exploiting cancellations in the electric--magnetic linking phases or by developing methods adapted directly to the complex two-species gas. A second direction is combinatorial: to improve the counting estimates for rooted support-connected closed polymers entering the low-activity criterion. Since the present constants \(C_s\) do not fully exploit closedness, sharper closed-polymer counting bounds would directly sharpen the rigorous low-activity region. Another natural direction is to extend the framework from cell decompositions of closed oriented manifolds to cell decompositions of manifolds with boundary, where one expects boundary conditions and possibly additional boundary degrees of freedom to play an essential role. Another direction is to analyze further special slices and observables within the same spacetime framework, beyond the gauge/PRCM example treated here. A further direction is to extend beyond prime \(N\) the genuinely prime-dependent parts of the construction: the generalized linking pairing on arbitrary cycles, the Fourier-resolved sector amplitudes \(\widehat{\mathcal Z}_M(\ell_m,\ell_e;q_m,q_e)\), and the gauge/PRCM coupling, where coefficient dependence and torsion phenomena should play a more substantial role.

More broadly, the framework developed here suggests that finite-temperature
homological codes should be studied not only through Hamiltonians or decoder
models, but also through their exact spacetime sector decomposition.  In that
description, thermal excitations are organized as closed extended defects,
topological sectors are encoded by spacetime homology, and duality acts as an
exact structural symmetry.  From this perspective, questions from lattice gauge
theory, homological statistical mechanics, and the combinatorics of closed
cycles become directly relevant to quantitative questions about thermal
stability.  We expect this viewpoint to provide a useful basis for further work
at the interface of topological quantum memory, lattice gauge theory, polymer
expansions, homological statistical mechanics, and the combinatorics of
high-dimensional closed defects.

\appendix

\section{Fourier conventions on \texorpdfstring{$\mathbb Z_N$}{Z_N}}
\label{app:fourier-ZN}

We fix once and for all the primitive \(N\)-th root of unity
\begin{equation}
	\omega := e^{2\pi i/N}.
\end{equation}
The Pontryagin dual \(\widehat{\mathbb Z}_N := \Hom(\mathbb Z_N, U(1))\) is canonically isomorphic to \(\mathbb Z_N\) after this choice of \(\omega\). For $n=0,\cdots,N-1$, define the elements of $\widehat{\mathbb Z}_N$, the characters
\begin{equation}
	\chi_n: \mathbb Z_N \to U(1), \qquad \chi_n(x) := \omega^{xn}. \label{eq:ZNcharacter}
\end{equation}
We also define the periodic Kronecker delta
\begin{equation}
	\delta_N(x) := \frac{1}{N}\sum_{n=0}^{N-1}\omega^{nx} =
	\begin{cases}
		1, & x\equiv 0 \pmod N,\\
		0, & x\not\equiv 0 \pmod N.
	\end{cases}
	\label{eq:deltaN}
\end{equation}
From this one immediately obtains the character orthogonality relations
\begin{align}
	\frac{1}{N}\sum_{n\in \widehat{\mathbb Z}_N}\chi_n(x)\chi_{-n}(y)
	&=
	\delta_N(x-y),
	\label{eq:character-orthogonality-x}
	\\
	\frac{1}{N}\sum_{x\in \mathbb Z_N}\chi_m(x)\chi_n(-x)
	&=
	\delta_N(m-n).
	\label{eq:character-orthogonality-n}
\end{align}

The characters \(\{\chi_n\}_{n\in \widehat{\mathbb Z}_N}\) form a basis of the vector space
\(\mathrm{Fun}(\mathbb Z_N,\mathbb C)\) of complex-valued functions on \(\mathbb Z_N\).
We use the following symmetric normalization for the discrete Fourier transform:
\begin{subequations}
	\begin{align}
		f(x)
		&=
		\frac{1}{\sqrt N}\sum_{n\in \widehat{\mathbb Z}_N}\hat f(n)\,\chi_n(x),
		\label{eq:FT}
		\\
		\hat f(n)
		&=
		\frac{1}{\sqrt N}\sum_{x\in \mathbb Z_N}f(x)\,\chi_n(-x).
		\label{eq:invFT}
	\end{align}
\end{subequations}
Orthogonality of the characters implies that these formulas define mutually inverse linear isomorphisms between $\mathrm{Fun}(\mathbb Z_N,\mathbb C)$ and $\mathrm{Fun}(\widehat{\mathbb Z}_N,\mathbb C)$.

\section{Lattice geometry, duality, and linking}
\label{app:geometry}

This appendix fixes the geometric and algebraic-topological conventions used throughout the
paper. All chain groups, cochain groups, homology groups, and cohomology groups are taken
with coefficients in $\mathbb Z_N$. In \S\ref{app:linking}, in order to define generalized linking numbers for arbitrary cycles, we shall assume $N$ to be prime so that all these groups are $\mathbb F_N$-vector spaces.

We recall that the dimension of the spatial lattice $\Lambda$ is $d$, that the code degrees of freedom live on the \(P\)-cells of \(\Lambda\), and that the spacetime complex is
\begin{equation}
	\underline\Lambda:=\Lambda\times S^1_M .
	\label{eq:spacetime}
\end{equation}
Here \(S^1_M=\mathbb Z/M\mathbb Z\) is the cubical decomposition of the circle with
vertices \(i=0,\dots,M-1\) and oriented edges \([i,i+1]\).

\subsection{Chains, cochains, and the spacetime complex}\label{app:geometryBasics}

Let \(\Lambda\) be a finite cell decomposition of a closed oriented \(d\)-manifold.
For each \(p\), let \(C_p(\Lambda)\) denote the set of oriented \(p\)-cells of \(\Lambda\),
with the convention
\begin{equation}
	C_{-1}(\Lambda)=C_{d+1}(\Lambda)=\varnothing,
	\qquad
	\Omega^{-1}(\Lambda)=\Omega^{d+1}(\Lambda)=0.
\end{equation}

For an oriented \(p\)-cell \(b\) and an oriented \((p-1)\)-cell \(a\), define the
incidence number
\begin{equation}
	\varepsilon(b,a):=
	\begin{cases}
		+1, & a\subset \partial b \text{ and the orientation of } a
		\text{ agrees with that induced from } \partial b,\\
		-1, & a\subset \partial b \text{ and the orientation of } a
		\text{ is opposite to the induced one},\\
		0, & a\not\subset \partial b.
	\end{cases}
	\label{eq:incidence}
\end{equation}

The group of \(p\)-chains is free $\mathbb Z_N$-module generated by the $p$-cells:
\begin{equation}
	\mathcal C_p(\Lambda):=\mathbb Z_N[C_p(\Lambda)],
	\label{eq:chain-group}
\end{equation}
and the boundary operator is defined as
\begin{equation}
	\partial:\mathcal C_{p+1}(\Lambda)\to \mathcal C_p(\Lambda),
	\qquad
	\partial c:=\sum_{b\in C_p(\Lambda)}\varepsilon(c,b)\,b.
	\label{eq:boundary}
\end{equation}
The incidence numbers satisfy
\begin{equation}
	\sum_{b\in C_p(\Lambda)}\varepsilon(c,b)\,\varepsilon(b,a)=0
	\qquad
	\text{for all } c\in C_{p+1}(\Lambda),\ a\in C_{p-1}(\Lambda),
	\label{eq:incidence-identity}
\end{equation}
hence \(\partial^2=0\).

The space of \(p\)-cochains is
\begin{equation}
	\Omega^p(\Lambda):=\Hom(\mathcal C_p(\Lambda),\mathbb Z_N)
	\cong \mathbb Z_N^{C_p(\Lambda)}.
	\label{eq:cochain-group}
\end{equation}
If \(f\in \Omega^p(\Lambda)\) and $\nu=\sum_{b\in C_p(\Lambda)}\nu_b\,b\in \mathcal C_p(\Lambda)$,
we write
\begin{equation}
	\int_\nu f:=f(\nu)=\sum_{b\in C_p(\Lambda)}\nu_b\,f_b,
	\qquad
	f_b:=f(b)\in \mathbb Z_N.
	\label{eq:pairing}
\end{equation}

The coboundary operator
\begin{equation}
	d:\Omega^p(\Lambda)\to \Omega^{p+1}(\Lambda)
\end{equation}
is defined to satisfy the Stokes formula with respect to \(\partial\):
\begin{equation}
	\int_c df=\int_{\partial c} f
	\qquad
	\text{for every } c\in C_{p+1}(\Lambda).
	\label{eq:d-def}
\end{equation}
In components,
\begin{equation}
	(df)_c=\sum_{b\in C_p(\Lambda)}\varepsilon(c,b)\,f_b.
	\label{eq:d-components}
\end{equation}
Since \(\partial^2=0\), we also have \(d^2=0\).

A \(p\)-cell of the spacetime lattice \(\underline\Lambda\) is either horizontal,
\begin{equation}
	b(i):=b\times\{i\},
	\qquad
	b\in C_p(\Lambda),
	\label{eq:horizontalCell}
\end{equation}
or vertical,
\begin{equation}
	\underline a(i):=a\times[i,i+1],
	\qquad
	a\in C_{p-1}(\Lambda).
	\label{eq:verticalCell}
\end{equation}
Thus the cells of the spacetime lattice decomposes disjointly into horizontal and vertical subsets:
\begin{equation}
	C_p(\underline\Lambda)
	=
	C_p^{\parallel}(\underline\Lambda) \sqcup C_p^{\perp}(\underline\Lambda).
\end{equation}

With the product orientation,
\begin{equation}
	\partial\left(\underline b(i)\right)
	=
	(\partial b)\times[i,i+1]
	+(-1)^p\,b(i+1)-(-1)^p\,b(i),
	\label{eq:product-boundary}
\end{equation}
for every \(b\in C_p(\Lambda)\). Accordingly, if
\begin{equation}
	\phi\in \Omega^p(\underline\Lambda),
	\qquad
	\phi^\parallel(i)\in \Omega^p(\Lambda),
	\qquad
	\phi^\perp(i)\in \Omega^{p-1}(\Lambda)
\end{equation}
denote its horizontal and vertical components, then the spacetime coboundary
\begin{equation}
	D:\Omega^p(\underline\Lambda)\to \Omega^{p+1}(\underline\Lambda)
\end{equation}
is given by
\begin{subequations}
	\begin{align}
		(D\phi)_{c(i)}
		&=
		\left(d\phi^\parallel(i)\right)_c,
		\label{eq:D-horizontal-components}
		\\
		(D\phi)_{\underline b(i)}
		&=
		\left(d\phi^\perp(i)\right)_b
		+
		(-1)^p\left(\phi^\parallel(i+1)_b-\phi^\parallel(i)_b\right).
		\label{eq:D-vertical-components}
	\end{align}
	\label{eq:D-components}
\end{subequations}

A convenient way to write this is
\begin{equation}
	(D\phi)^\parallel(i)=d\phi^\parallel(i),
	\qquad
	(D\phi)^\perp(i)=d\phi^\perp(i)+(-1)^p\Delta_t\phi^\parallel(i),
\end{equation}
where \((\Delta_t\psi)(i):=\psi(i+1)-\psi(i)\). Then
\begin{equation}
	(D^2\phi)^\parallel(i)=d^2\phi^\parallel(i)=0,
\end{equation}
and, since \(D\phi\) has degree \(p+1\),
\begin{align*}
	(D^2\phi)^\perp(i)
	&=
	d\left((D\phi)^\perp(i)\right)+(-1)^{p+1}\Delta_t\left((D\phi)^\parallel(i)\right)\\
	&=
	d\left(d\phi^\perp(i)+(-1)^p\Delta_t\phi^\parallel(i)\right)
	+(-1)^{p+1}\Delta_t\left(d\phi^\parallel(i)\right)\\
	&=
	(-1)^p d\,\Delta_t\phi^\parallel(i)+(-1)^{p+1}\Delta_t\,d\phi^\parallel(i)=0,
\end{align*}
because \(d^2=0\) and \(d\) commutes with \(\Delta_t\). Hence
\begin{equation}
	D^2=0.
	\label{eq:D2-zero}
\end{equation}

For any finite oriented cell complex \(X\), the abelian group of $p$-cycles $Z_p$, $p$-cocycles $Z^p$, $p$-boundaries $B_p$, and $p$-coboundaries $B^p$ are defined as usual:
\begin{align}
	Z_p(X)
	&:=
	\ker \left(\partial:\mathcal C_p(X)\to \mathcal C_{p-1}(X)\right),
	&
	Z^p(X)
	&:=
	\ker \left(d:\Omega^p(X)\to \Omega^{p+1}(X)\right),
	\nonumber\\
	B_p(X)
	&:=
	\operatorname{im} \left(\partial:\mathcal C_{p+1}(X)\to \mathcal C_p(X)\right),
	&
	B^p(X)
	&:=
	\operatorname{im} \left(d:\Omega^{p-1}(X)\to \Omega^p(X)\right).
	\label{eq:ZB}
\end{align}
The $p$th homology and cohomology groups are defined respectively as
\begin{equation}
	H_p(X):=Z_p(X)/B_p(X),
	\qquad
	H^p(X):=Z^p(X)/B^p(X).
	\label{eq:cohomology}
\end{equation}

\subsection{Chain-cochain duality}

The set \(C_p(X)\) is a finite basis of
\(\mathcal C_p(X)\), hence there is a module isomorphism
\begin{equation}
	\mathcal P_X:\Omega^p(X)\xrightarrow{\sim}\mathcal C_p(X),
	\qquad
	\mathcal P_X(f)=\sum_{b\in C_p(X)} f_b\,b.
	\label{eq:PX}
\end{equation}
Its inverse sends a chain \(\sum_{b\in C_p(X)}\nu_b\,b\) to the cochain with components
\(\{\nu_b\}_{b\in C_p(X)}\).

Using \(\mathcal P_X\), define the transpose differential
\begin{equation}
	d_X^{\mathsf T}
	:=
	\mathcal P_X^{-1}\,\partial_X\,\mathcal P_X
	:
	\Omega^p(X)\to \Omega^{p-1}(X).
	\label{eq:dT-def}
\end{equation}
If \(f\in \Omega^p(X)\), then
\begin{equation}
	(d_X^{\mathsf T}f)_a
	=
	\sum_{b\in C_p(X)} f_b\,\varepsilon_X(b,a),
	\qquad
	a\in C_{p-1}(X).
	\label{eq:dT-components}
\end{equation}

We equip \(\Omega^p(X)\) with the bilinear pairing
\begin{equation}
	\langle f,g\rangle_X
	:=
	\int_{\mathcal P_X(f)} g
	=
	\sum_{b\in C_p(X)} f_b\,g_b,
	\qquad
	f,g\in \Omega^p(X).
	\label{eq:bilinear-pairing}
\end{equation}
With respect to this pairing, \(d_X\) and \(d_X^{\mathsf T}\) are adjoint:
\begin{equation}
	\langle d_X^{\mathsf T}f,g\rangle_X
	=
	\langle f,d_X g\rangle_X,
	\qquad
	f\in \Omega^p(X),\ g\in \Omega^{p-1}(X).
	\label{eq:adjoint-d-dT}
\end{equation}

Now given a finite oriented cell decomposition  \(X\) of a closed oriented
\(n\)-manifold, let \(X^\vee\) denote its dual cell decomposition.
For each \(0\le r\le n\), there is a canonical bijection
\begin{equation}
	\theta_X:C_r(X^\vee)\xrightarrow{\sim} C_{n-r}(X).
	\label{eq:theta}
\end{equation}
We choose orientations on the dual cells so that
\begin{equation}
	\varepsilon_X(c,b)
	=
	\varepsilon_{X^\vee}\left(\theta_X^{-1}(b),\theta_X^{-1}(c)\right),
	\label{eq:dual-incidence}
\end{equation}
whenever \(b\in C_r(X)\) and \(c\in C_{r+1}(X)\). Extending \(\theta_X\) linearly gives
\begin{equation}
	\theta_X:\mathcal C_{n-r}(X^\vee)\xrightarrow{\sim}\mathcal C_r(X).
	\label{eq:theta-linear}
\end{equation}

We shall use two induced maps. First, the cochain-level duality map
\begin{equation}
	\theta_X^\sharp
	:=
	\mathcal P_X^{-1}\,\theta_X\,\mathcal P_{X^\vee}
	:
	\Omega^{n-r}(X^\vee)\xrightarrow{\sim}\Omega^r(X),
	\label{eq:theta-sharp}
\end{equation}
and second, the map from dual chains to primal cochains
\begin{equation}
	\vartheta_X
	:=
	\mathcal P_X^{-1}\circ \theta_X
	:
	\mathcal C_{n-r}(X^\vee)\to \Omega^r(X).
	\label{eq:vartheta}
\end{equation}
These fit into the commutative diagram
\begin{equation}
	\begin{tikzcd}[column sep=large,row sep=large]
		\Omega^{n-r}(X^\vee)
		\arrow[r,"\theta_X^\sharp"]
		\arrow[d,"\mathcal P_{X^\vee}"']
		&
		\Omega^r(X)
		\arrow[d,"\mathcal P_X"]
		\\
		\mathcal C_{n-r}(X^\vee)
		\arrow[r,"\theta_X"']
		\arrow[ur,"\vartheta_X"]
		&
		\mathcal C_r(X).
	\end{tikzcd}
	\label{eq:structural-diagram}
\end{equation}
Equivalently,
\begin{equation}
	\theta_X^\sharp=\vartheta_X\circ \mathcal P_{X^\vee},
	\qquad
	\vartheta_X=\mathcal P_X^{-1}\circ \theta_X.
\end{equation}

The key point is that these maps intertwine the relevant differentials.
\begin{prop}[Intertwining differentials]\label{prop:intertwine}
	Degreewise, one has
	\begin{enumerate}[label=(\roman*)]
		\item \label{prop:vartheta-intertwines} $d_X\circ \vartheta_X = \vartheta_X\circ \partial_{X^\vee}$,
		\item \label{prop:theta-sharp-intertwine1} $\theta_X^\sharp\circ d_{X^\vee}
		=
		d_X^{\mathsf T}\circ \theta_X^\sharp$,
		\item \label{prop:theta-sharp-intertwine2} $d_X\circ \theta_X^\sharp
		=
		\theta_X^\sharp\circ d_{X^\vee}^{\mathsf T}$.
	\end{enumerate}
\end{prop}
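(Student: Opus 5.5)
The plan is to reduce all three identities to the single incidence relation \eqref{eq:dual-incidence}, checking each on basis elements, i.e.\ componentwise on cells, since every map involved is $\mathbb Z_N$-linear. Throughout, fix degrees so that an $r$-cell of $X$ corresponds under $\theta_X$ to an $(n-r)$-cell of $X^\vee$. In this correspondence, raising the primal degree by one lowers the dual degree by one. This is exactly why a coboundary on one side matches a boundary or transpose-coboundary on the other.

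For \ref{prop:vartheta-intertwines}, take a dual chain $\xi=\sum_{e}\xi_e\,e\in\mathcal C_{n-r}(X^\vee)$, so that $(\vartheta_X\xi)_b=\xi_{\theta_X^{-1}(b)}$ for $b\in C_r(X)$. For $c\in C_{r+1}(X)$, \eqref{eq:d-components} gives
\[
(d_X\vartheta_X\xi)_c=\sum_b\varepsilon_X(c,b)\,\xi_{\theta_X^{-1}(b)}.
\]
On the other side, the definition \eqref{eq:boundary} gives
\[
(\partial_{X^\vee}\xi)_{\theta_X^{-1}(c)}=\sum_e\xi_e\,\varepsilon_{X^\vee}\bigl(e,\theta_X^{-1}(c)\bigr).
\]
Substituting $e=\theta_X^{-1}(b)$ and applying \eqref{eq:dual-incidence} shows that the two coincide. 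This is the core computation.

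Items \ref{prop:theta-sharp-intertwine1} and \ref{prop:theta-sharp-intertwine2} then follow formally. For \ref{prop:theta-sharp-intertwine2}, write $\theta_X^\sharp=\vartheta_X\circ\mathcal P_{X^\vee}$ and $d^{\mathsf T}_{X^\vee}=\mathcal P_{X^\vee}^{-1}\partial_{X^\vee}\mathcal P_{X^\vee}$ from \eqref{eq:dT-def}. Then
\[
d_X\theta_X^\sharp=d_X\vartheta_X\mathcal P_{X^\vee}=\vartheta_X\partial_{X^\vee}\mathcal P_{X^\vee}=\vartheta_X\mathcal P_{X^\vee}d^{\mathsf T}_{X^\vee}=\theta_X^\sharp d^{\mathsf T}_{X^\vee},
\]
where the second equality uses \ref{prop:vartheta-intertwines}.

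For \ref{prop:theta-sharp-intertwine1}, I would instead do the direct component check, since the roles are now reversed. For $f\in\Omega^{n-r-1}(X^\vee)$ and $b\in C_{r+1}(X)$, with $a$ running over $C_r(X)$ and $\theta_X^{-1}(a)\in C_{n-r}(X^\vee)$, we have
\[
(\theta_X^\sharp d_{X^\vee}f)_b=(d_{X^\vee}f)_{\theta_X^{-1}(b)}=\sum_{e}\varepsilon_{X^\vee}\bigl(\theta_X^{-1}(b),e\bigr)f_e .
\]
Similarly, \eqref{eq:dT-components} gives
\[
(d_X^{\mathsf T}\theta_X^\sharp f)_b=\sum_{c}f_{\theta_X^{-1}(c)}\,\varepsilon_X(c,b),
\]
with $c\in C_{r+2}(X)$. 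Setting $e=\theta_X^{-1}(c)$, \eqref{eq:dual-incidence} applied to the pair $(c,b)$ gives $\varepsilon_X(c,b)=\varepsilon_{X^\vee}(\theta_X^{-1}(b),\theta_X^{-1}(c))$, which matches term by term.

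Alternatively, \ref{prop:theta-sharp-intertwine1} follows from \ref{prop:theta-sharp-intertwine2} by adjointness. Note that $\theta_X^\sharp$ is an isometry for the pairings \eqref{eq:bilinear-pairing}, because it just relabels components along a bijection. Transposing \ref{prop:theta-sharp-intertwine2} with \eqref{eq:adjoint-d-dT} then gives $(\theta_X^\sharp)^{-1}d_X^{\mathsf T}=d_{X^\vee}(\theta_X^\sharp)^{-1}$, which is equivalent to \ref{prop:theta-sharp-intertwine1}.

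The only real obstacle is bookkeeping: in each step one must confirm that the degrees match so that \eqref{eq:dual-incidence} applies with the correct argument order, and that no extra sign such as $(-1)^r$ appears. With the orientation convention \eqref{eq:dual-incidence} as stated, no sign arises. I would double-check degree ranges at the ends, $r=-1$ and $r=n$, where both sides vanish by the empty-cell conventions.
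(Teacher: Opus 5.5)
Your proof is correct, and at its core it matches the paper's: everything is reduced componentwise to the orientation convention \eqref{eq:dual-incidence}. Your computation for (i), and your direct check for (ii), are exactly the paper's computations.

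The difference lies in how you organize (iii), and optionally (ii). The paper checks all three identities by independent component computations. You instead obtain (iii) formally from (i), using $\theta_X^\sharp=\vartheta_X\circ\mathcal P_{X^\vee}$ from \eqref{eq:structural-diagram} and $d^{\mathsf T}_{X^\vee}=\mathcal P_{X^\vee}^{-1}\partial_{X^\vee}\mathcal P_{X^\vee}$ from \eqref{eq:dT-def}. You also observe that (ii) is the transpose of (iii): $\theta_X^\sharp$ just relabels components along a bijection, so it is an isometry for \eqref{eq:bilinear-pairing}, and $d$, $d^{\mathsf T}$ are adjoint by \eqref{eq:adjoint-d-dT}.

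Your route shows there is really only one combinatorial input, identity (i); (iii) is (i) transported along $\mathcal P_{X^\vee}$, and (ii) is its transpose. The paper's route is more pedestrian, but each item is a self-contained two-line check that needs neither the structural diagram nor a transposition argument.

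There is one bookkeeping slip in your direct check of (ii). With $b\in C_{r+1}(X)$ and $c\in C_{r+2}(X)$, the cochain must be $f\in\Omega^{n-r-2}(X^\vee)$, not $\Omega^{n-r-1}(X^\vee)$, since $\theta_X^{-1}(b)\in C_{n-r-1}(X^\vee)$ must carry the degree of $d_{X^\vee}f$. Equivalently, you could keep $f\in\Omega^{n-r-1}(X^\vee)$ and shift $b,c$ down by one. The clause about $a$ running over $C_r(X)$ is never used and should be dropped. After this relabeling, the term-by-term match via \eqref{eq:dual-incidence} applied to the pair $(c,b)$ is exactly right.
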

\begin{proof}
	\begin{enumerate}
		\item Let
		\(c^\vee\in C_{n-r}(X^\vee)\), so that
		\(\vartheta_X(c^\vee)\in \Omega^r(X)\), and let \(a\in C_{r+1}(X)\). Then
		\begin{align*}
			\left(d_X\vartheta_X(c^\vee)\right)_a
			&=
			\sum_{b\in C_r(X)}
			\varepsilon_X(a,b)\,\vartheta_X(c^\vee)_b
			=
			\varepsilon_X\left(a,\theta_X(c^\vee)\right),
			\\
			\text{and,} \qquad \left(\vartheta_X(\partial c^\vee)\right)_a
			&=
			\varepsilon_{X^\vee}\left(c^\vee,\theta_X^{-1}(a)\right)
			=
			\varepsilon_X\left(a,\theta_X(c^\vee)\right),
		\end{align*}
		where the last equality is precisely the dual-incidence convention
		\eqref{eq:dual-incidence}. This proves Prop.~\ref{prop:intertwine}\ref{prop:vartheta-intertwines}.
		
		\item Let \(\alpha\in \Omega^{n-r}(X^\vee)\). Then $\theta_X^\sharp\alpha\in \Omega^r(X)$.
		For the first identity, evaluate both sides on a cell \(a\in C_{r-1}(X)\). We have
		\begin{align*}
			\left(\theta_X^\sharp d_{X^\vee}\alpha\right)_a
			&=
			\left(d_{X^\vee}\alpha\right)_{\theta_X^{-1}(a)}
			=
			\sum_{c^\vee\in C_{n-r}(X^\vee)}
			\varepsilon_{X^\vee}\left(\theta_X^{-1}(a),c^\vee\right)\,
			\alpha_{c^\vee}.
		\end{align*}
		Writing \(c=\theta_X(c^\vee)\in C_r(X)\), the dual-incidence convention \eqref{eq:dual-incidence} gives
		$
			\varepsilon_{X^\vee}\left(\theta_X^{-1}(a),c^\vee\right)
			=
			\varepsilon_X(c,a)
		$.
		Therefore
		\begin{align}
			\left(\theta_X^\sharp d_{X^\vee}\alpha\right)_a
			&=
			\sum_{c\in C_r(X)}
			\varepsilon_X(c,a)\,
			(\theta_X^\sharp\alpha)_c
			=
			\left(d_X^{\mathsf T}\theta_X^\sharp\alpha\right)_a.
		\end{align}
		This proves Prop.~\ref{prop:intertwine}\ref{prop:theta-sharp-intertwine1}.
		
		\item For the second identity, evaluate both sides on a cell \(a\in C_{r+1}(X)\). We get
		\begin{align}
			\left(d_X\theta_X^\sharp\alpha\right)_a
			&=
			\sum_{b\in C_r(X)}
			\varepsilon_X(a,b)\,
			(\theta_X^\sharp\alpha)_b
			=
			\sum_{b\in C_r(X)}
			\varepsilon_X(a,b)\,
			\alpha_{\theta_X^{-1}(b)}.
		\end{align}
		Using the dual-incidence convention again, $\varepsilon_X(a,b) = \varepsilon_{X^\vee}\left(\theta_X^{-1}(b),\theta_X^{-1}(a)\right)$.
		Thus
		\begin{align*}
			\left(d_X\theta_X^\sharp\alpha\right)_a
			&=
			\sum_{b\in C_r(X)}
			\varepsilon_{X^\vee}\left(\theta_X^{-1}(b),\theta_X^{-1}(a)\right)\,
			\alpha_{\theta_X^{-1}(b)}
			=
			\left(d_{X^\vee}^{\mathsf T}\alpha\right)_{\theta_X^{-1}(a)}
			=
			\left(\theta_X^\sharp d_{X^\vee}^{\mathsf T}\alpha\right)_a.
		\end{align*}
		This proves Prop.~\ref{prop:intertwine}\ref{prop:theta-sharp-intertwine2}.
	\end{enumerate}
\end{proof}

\begin{cor}\label{cor:complexIso}
	$\vartheta$ is an isomorphism of complexes.
\end{cor}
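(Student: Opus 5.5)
The plan is to verify the two defining properties of an isomorphism of complexes for the graded map
\[
\vartheta_X:\mathcal C_{n-r}(X^\vee)\to\Omega^r(X),\qquad r=0,\dots,n .
\]
We regard the source as the chain complex $(\mathcal C_\bullet(X^\vee),\partial_{X^\vee})$ and the target as the cochain complex $(\Omega^\bullet(X),d_X)$. With the degree reindexing $k=n-r$, both differentials raise $r$ by one: $\partial_{X^\vee}$ sends $\mathcal C_{n-r}(X^\vee)$ to $\mathcal C_{n-r-1}(X^\vee)$, and $d_X$ sends $\Omega^r(X)$ to $\Omega^{r+1}(X)$. So $\vartheta_X$ is a degree-preserving map between two complexes indexed by $r$.

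First, we show that $\vartheta_X$ is a chain map. This is exactly Prop.~\ref{prop:intertwine}\ref{prop:vartheta-intertwines}, which gives $d_X\circ\vartheta_X=\vartheta_X\circ\partial_{X^\vee}$ in every degree. One should check the edge degrees $r=n$ and $r=0$ against the conventions $C_{-1}=C_{n+1}=\varnothing$. There both sides vanish, or the identity holds trivially, because the relevant cell sets are empty.

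Second, we show that $\vartheta_X$ is bijective in each degree. By \eqref{eq:vartheta}, $\vartheta_X=\mathcal P_X^{-1}\circ\theta_X$ is a composite of two maps. The map $\theta_X$ is the linear extension of the cell bijection \eqref{eq:theta}, so it sends a basis of the free $\mathbb Z_N$-module $\mathcal C_{n-r}(X^\vee)$ bijectively onto a basis of $\mathcal C_r(X)$ and is therefore a module isomorphism. The map $\mathcal P_X$ is the module isomorphism \eqref{eq:PX}. Hence $\vartheta_X$ is a degreewise isomorphism of $\mathbb Z_N$-modules.

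A degreewise bijective chain map has an inverse that is automatically a chain map. Indeed, from $d\vartheta=\vartheta\partial$ we get $\vartheta^{-1}d=\partial\vartheta^{-1}$ by composing with $\vartheta^{-1}$ on both sides. This completes the argument.

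As consequences, $\vartheta_X$ maps cycles to cocycles and boundaries to coboundaries bijectively, and it induces $H_{n-r}(X^\vee)\cong H^r(X)$. These are the facts used in \eqref{eq:magnetic-homology-class} and in defining $q_e^\vee$.

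The only step needing care is the degree bookkeeping. One has to confirm that the reversed grading makes $\partial_{X^\vee}$ and $d_X$ go in the same direction, and that the dual-incidence convention \eqref{eq:dual-incidence} introduces no extra signs. Prop.~\ref{prop:intertwine} has already absorbed any such signs, so I expect no genuine obstacle.
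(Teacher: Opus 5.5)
Your proposal is correct and follows essentially the same route as the paper. In both, $\vartheta_X=\mathcal P_X^{-1}\circ\theta_X$ is a degreewise module isomorphism as a composite of the isomorphisms \eqref{eq:theta-linear} and \eqref{eq:PX}, and compatibility with the differentials comes from Prop.~\ref{prop:intertwine}\ref{prop:vartheta-intertwines}; your remark that the inverse is automatically a chain map just makes explicit what the paper leaves implicit.
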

\begin{proof}
	It is a module isomorphism by definition \eqref{eq:vartheta}, since it is a composition of two isomorphisms \eqref{eq:theta-linear} and \eqref{eq:PX}. By Prop.~\ref{prop:intertwine}\ref{prop:vartheta-intertwines} it also preserves the differentials.
\end{proof}

\subsection{Intersection pairing}
\label{sec:intersection}

For each \(0\le r\le n\), define
\begin{equation}
	I_X^{(r)}:
	\mathcal C_{n-r}(X^\vee)\times \mathcal C_r(X)\to \mathbb Z_N
	\label{eq:intersection}
\end{equation}
by
\begin{equation}
	I_X^{(r)}(\Xi^\vee,\Sigma)
	:=
	\int_\Sigma \vartheta_X(\Xi^\vee)
	=
	\int_\Sigma \mathcal P_X^{-1} \left(\theta_X(\Xi^\vee)\right).
	\label{eq:intersection-def}
\end{equation}
On basis elements,
\begin{equation}
	I_X^{(r)}(c^\vee,b)=\delta_{\theta_X(c^\vee),\,b},
	\qquad
	c^\vee\in C_{n-r}(X^\vee),\quad b\in C_r(X).
	\label{eq:intersection-basis}
\end{equation}
Thus the pairing is nonzero precisely when the cells of the primal and dual lattice correspond to each other under dualization of cell decomposition.

The pairing is compatible with the boundary operators. If
\begin{equation}
	\Xi^\vee\in \mathcal C_{n-r}(X^\vee),
	\qquad
	\Sigma\in \mathcal C_{r+1}(X),
\end{equation}
then
\begin{equation}
	I_X^{(r+1)}(\partial \Xi^\vee,\Sigma)
	=
	I_X^{(r)}(\Xi^\vee,\partial \Sigma).
	\label{eq:intersection-Stokes}
\end{equation}
Indeed,
\begin{equation}
	I_X^{(r+1)}(\partial \Xi^\vee,\Sigma)
	=
	\int_\Sigma \vartheta_X(\partial \Xi^\vee)
	=
	\int_\Sigma d_X\vartheta_X(\Xi^\vee)
	=
	\int_{\partial\Sigma}\vartheta_X(\Xi^\vee)
	=
	I_X^{(r)}(\Xi^\vee,\partial\Sigma),
\end{equation}
where we used Prop.~\ref{prop:intertwine}\ref{prop:vartheta-intertwines} and the defining property of \(d_X\) \eqref{eq:d-def}.

\subsection{Linking pairing}
\label{app:linking}

For two cycles $\nu\in Z_p(X)$ and $\mu^\vee\in Z_{n-p-1}(X^\vee)$,
the pair of homology classes
\begin{equation}
	([\mu^\vee],[\nu])\in H_{n-p-1}(X^\vee)\times H_p(X)
\end{equation}
is canonical, but it contains no cycle-level geometric linking information. The purpose of
the present appendix is to fix a convenient bilinear pairing on cycles such that the pairing reduces to the standard filling-intersection expression as soon as one
of the two cycles bounds.

We assume in this section that $N$ is prime and all chain, cochain, homology, and cohomology groups are finite dimensional vector spaces over 
\begin{equation} 
	\mathbb F_N=\mathbb Z_N.
\end{equation}

\subsubsection{Definition and reduction to ordinary linking}

Since all homology groups are finite-dimensional \(\mathbb F_N\)-vector spaces, the short
exact sequences
\begin{equation}\begin{gathered}
		0\to B_p(X)\to Z_p(X)\to H_p(X)\to 0,
		\\
		0\to B_{n-p-1}(X^\vee)\to Z_{n-p-1}(X^\vee)\to H_{n-p-1}(X^\vee)\to 0
\end{gathered}\end{equation}
split. Likewise, because
\begin{equation}\begin{gathered}
		0\to Z_{p+1}(X)\to \mathcal C_{p+1}(X)\xrightarrow{\partial} B_p(X)\to 0,
		\\
		0\to Z_{n-p}(X^\vee)\to \mathcal C_{n-p}(X^\vee)\xrightarrow{\partial} B_{n-p-1}(X^\vee)\to 0
\end{gathered}\end{equation}
also split, we may choose once and for all:
\begin{enumerate}
	\item linear sections
	\begin{equation}
		s_p:H_p(X)\rightarrow Z_p(X),
		\qquad
		s^\vee_{n-p-1}:H_{n-p-1}(X^\vee)\rightarrow Z_{n-p-1}(X^\vee),
		\label{eq:sections}
	\end{equation}
	\item linear filling operators
	\begin{align}
		K_p:B_p(X)\rightarrow&\; \mathcal C_{p+1}(X),
		&\qquad
		\partial K_p=&\;\mathrm{id}_{B_p(X)},
		\label{eq:Kp}
	\\
		K^\vee_{n-p-1}:B_{n-p-1}(X^\vee)\rightarrow&\; \mathcal C_{n-p}(X^\vee),
		&
		\partial K^\vee_{n-p-1}=&\;\mathrm{id}_{B_{n-p-1}(X^\vee)},
		\label{eq:Kdual}
	\end{align}
	\item a bilinear pairing
	\begin{equation}
		B_H:H_{n-p-1}(X^\vee)\times H_p(X)\rightarrow \mathbb Z_N.
		\label{eq:BH}
	\end{equation}
\end{enumerate}

Using the chosen sections, decompose the cycles into homological and exact parts:
\begin{equation}
\begin{aligned}
	\nu_{\mathrm h}:=&\;s_p([\nu]),
	&\qquad
	\nu_{\mathrm ex}:=&\;\nu-\nu_{\mathrm h}\in B_p(X),
	\\
	\mu^\vee_{\mathrm h}:=&\;s^\vee_{n-p-1}([\mu^\vee]),
	&
	\mu^\vee_{\mathrm ex}:=&\;\mu^\vee-\mu^\vee_{\mathrm h}\in B_{n-p-1}(X^\vee).
\end{aligned}
\end{equation}
Thus
\begin{equation}
	\nu=\nu_{\mathrm h}+\nu_{\mathrm ex},
	\qquad
	\mu^\vee=\mu^\vee_{\mathrm h}+\mu^\vee_{\mathrm ex}.
	\label{eq:total-decomp}
\end{equation}

We then define the generalized linking pairing
\begin{equation}
	\Lk_X:
	Z_{n-p-1}(X^\vee)\times Z_p(X)\rightarrow \mathbb Z_N
	\label{eq:L-map}
\end{equation}
by
\begin{equation}
	\Lk_X(\mu^\vee,\nu)
	:=
	I_X^{(p+1)}\left(\mu^\vee,\,K_p(\nu_{\mathrm ex})\right)
	+
	I_X^{(p)}\left(K^\vee_{n-p-1}(\mu^\vee_{\mathrm ex}),\,\nu_{\mathrm h}\right)
	+
	B_H\left([\mu^\vee],[\nu]\right).
	\label{eq:generalized-linking}
\end{equation}
By construction, \(\Lk_X\) is bilinear.

The three terms in \eqref{eq:generalized-linking} have distinct roles. The first fills
the exact part of the primal cycle and intersects that filling with the full dual cycle.
The second fills the exact part of the dual cycle and intersects that filling with the
chosen homology representative of the primal cycle. The third depends only on the homology
classes and prescribes the residual purely homological coupling. One may rewrite
\eqref{eq:generalized-linking} in the equivalent form
\begin{equation}
	\Lk_X(\mu^\vee,\nu)
	=
	I_X^{(p+1)}\left(\mu^\vee_{\mathrm h},K_p(\nu_{\mathrm ex})\right)
	+
	I_X^{(p)}\left(K^\vee_{n-p-1}(\mu^\vee_{\mathrm ex}),\nu\right)
	+
	B_H\left([\mu^\vee],[\nu]\right).
	\label{eq:generalized-linking-equivalent}
\end{equation}
Indeed,
\begin{equation}
\begin{aligned}
	I_X^{(p+1)}\left(\mu^\vee_{\mathrm ex},K_p(\nu_{\mathrm ex})\right)
	&=
	I_X^{(p+1)}\left(\partial K^\vee_{n-p-1}(\mu^\vee_{\mathrm ex}),K_p(\nu_{\mathrm ex})\right)
	\qquad \text{using \eqref{eq:Kdual}}
	\\
	&=
	I_X^{(p)}\left(K^\vee_{n-p-1}(\mu^\vee_{\mathrm ex}),\partial K_p(\nu_{\mathrm ex})\right)
	\qquad \text{using \eqref{eq:intersection-Stokes}}
	\\
	&=
	I_X^{(p)}\left(K^\vee_{n-p-1}(\mu^\vee_{\mathrm ex}),\nu_{\mathrm ex}\right).
	\qquad \text{using \eqref{eq:Kp}}
\end{aligned}
\end{equation}
\eqref{eq:generalized-linking-equivalent} then follows from \eqref{eq:generalized-linking} by expanding $\mu^\vee$ and $\nu$ as in \eqref{eq:total-decomp}.

The importance of \eqref{eq:generalized-linking} is that it reduces to ordinary linking
as soon as one of the two cycles bounds. If \(\nu\in B_p(X)\), then \([\nu]=0\), hence \(\nu_{\mathrm h}=0\) and
\(\nu_{\mathrm ex}=\nu\), so
\begin{equation}
	\Lk_X(\mu^\vee,\nu)
	=
	I_X^{(p+1)}\left(\mu^\vee,K_p(\nu)\right).
	\label{eq:primal-exact}
\end{equation}
Thus one simply fills \(\nu\) and intersects that filling with \(\mu^\vee\). Similarly, if \(\mu^\vee\in B_{n-p-1}(X^\vee)\), then \([\mu^\vee]=0\), hence
\(\mu^\vee_{\mathrm h}=0\) and \(\mu^\vee_{\mathrm ex}=\mu^\vee\), so
\begin{equation}
	\Lk_X(\mu^\vee,\nu)
	=
	I_X^{(p)}\left(K^\vee_{n-p-1}(\mu^\vee),\nu\right).
	\label{eq:dual-exact}
\end{equation}
Thus one may instead fill \(\mu^\vee\) and intersect that filling with \(\nu\). In particular, if both \(\nu\in B_p(X)\) and \(\mu^\vee\in B_{n-p-1}(X^\vee)\), then
\begin{equation}
	\Lk_X(\mu^\vee,\nu)
	=
	I_X^{(p+1)}\left(\mu^\vee,K_p(\nu)\right)
	=
	I_X^{(p)}\left(K^\vee_{n-p-1}(\mu^\vee),\nu\right),
	\label{eq:boundary-linking}
\end{equation}
which is exactly the ordinary linking/intersection number obtained by filling either cycle
and intersecting with the other.

It is also useful to isolate the genuinely boundary-linking part:
\begin{equation}
	\Lk_X^{\mathrm{ex}}(\mu^\vee,\nu)
	:=
	I_X^{(p+1)}\left(\mu^\vee_{\mathrm ex},K_p(\nu_{\mathrm ex})\right)
	=
	I_X^{(p)}\left(K^\vee_{n-p-1}(\mu^\vee_{\mathrm ex}),\nu_{\mathrm ex}\right).
	\label{eq:exact-linking}
\end{equation}
Then
\begin{equation}
	\Lk_X(\mu^\vee,\nu)
	=
	\Lk_X^{\mathrm{ex}}(\mu^\vee,\nu)
	+
	I_X^{(p+1)}\left(\mu^\vee_{\mathrm h},K_p(\nu_{\mathrm ex})\right)
	+
	I_X^{(p)}\left(K^\vee_{n-p-1}(\mu^\vee_{\mathrm ex}),\nu_{\mathrm h}\right)
	+
	B_H\left([\mu^\vee],[\nu]\right).
	\label{eq:linking-decomposition}
\end{equation}
The first term depends only on the exact parts, the last only on the homology classes, and
the middle two terms measure the coupling between exact data and the chosen homology
representatives.

\subsubsection{Dependence on the auxiliary choices}

The pairing \(\Lk_X\) is not canonical. It depends on the chosen sections,
filling operators, and bilinear homological correction. For the applications in the main
text, the point is that once the filling operators \(K_p\) and \(K^\vee_{n-p-1}\) are
fixed, changing the other conventions changes \(\Lk_X\) only by a bilinear
function of the homology classes.

\begin{prop}
	\label{prop:choice-dependence}
	Fix a single choice of the filling operators \eqref{eq:Kp} and \eqref{eq:Kdual}. Let
	\begin{equation}
		(s_p,s^\vee_{n-p-1},B_H)
		\qquad\text{and}\qquad
		(\widetilde s_p,\widetilde s^\vee_{n-p-1},\widetilde B_H)
	\end{equation}
	be two choices of sections \eqref{eq:sections} and homological correction term
	\eqref{eq:BH}, and let \(\Lk_X\) and \(\widetilde{\Lk}_X\) be the corresponding
	generalized linking pairings \eqref{eq:generalized-linking}. Then:
	\begin{enumerate}[label=(\roman*)]
		\item\label{prop:choice-dependence-item1}
		There exists a bilinear map
		\begin{equation}
			C_H:H_{n-p-1}(X^\vee)\times H_p(X)\to \mathbb Z_N
			\label{eq:DeltaH}
		\end{equation}
		such that
		\begin{equation}
			\widetilde{\Lk}_X(\mu^\vee,\nu)-\Lk_X(\mu^\vee,\nu)
			=
			C_H\left([\mu^\vee],[\nu]\right)
			\label{eq:choice-dependence}
		\end{equation}
		for all
		\(\mu^\vee\in Z_{n-p-1}(X^\vee)\) and \(\nu\in Z_p(X)\). In particular, the two
		pairings agree whenever either \(\mu^\vee\) or \(\nu\) is a boundary.
		
		\item\label{propItem:bdryLink}
		If \(\mu^\vee\in B_{n-p-1}(X^\vee)\) and \(\nu\in B_p(X)\), then
		\begin{equation}
			\Lk_X(\mu^\vee,\nu)
			=
			I_X^{(p+1)}\left(\mu^\vee,\Sigma\right)
			=
			I_X^{(p)}\left(T^\vee,\nu\right),
			\qquad
			\partial\Sigma=\nu,\ \partial T^\vee=\mu^\vee.
			\label{eq:boundary-linking-canonical-merged}
		\end{equation}
		This value is independent of all auxiliary choices used in the definition of
		\(\Lk_X\), including the sections, the homological correction term, and the
		filling chains \(\Sigma\), \(T^\vee\). In particular, the linking number of two
		boundaries is canonical and does not require the assumption that \(N\) be
		prime.
	\end{enumerate}
\end{prop}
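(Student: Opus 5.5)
For item (i) we would compare the two pairings term by term in the form \eqref{eq:generalized-linking-equivalent}, keeping $K_p$ and $K^\vee_{n-p-1}$ fixed. Write $\widetilde\nu_{\mathrm h}=\widetilde s_p([\nu])$ and $\widetilde\nu_{\mathrm{ex}}=\nu-\widetilde\nu_{\mathrm h}$. Then
\[
\widetilde\nu_{\mathrm{ex}}-\nu_{\mathrm{ex}}=\nu_{\mathrm h}-\widetilde\nu_{\mathrm h}=(s_p-\widetilde s_p)([\nu]).
\]
This lies in $B_p(X)$, since both sections lift the same class, and it depends linearly on $[\nu]$ alone; the analogous statement holds on the dual side. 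By linearity of $K_p$, the first term of \eqref{eq:generalized-linking} changes by
\[
I_X^{(p+1)}\bigl(\mu^\vee,\,K_p\bigl((s_p-\widetilde s_p)[\nu]\bigr)\bigr).
\]
This is linear in $\mu^\vee$ but a priori depends on the cycle $\mu^\vee$ rather than on its class, so the difference has to be handled jointly with the second term.

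We therefore expand the full difference $\widetilde{\Lk}_X-\Lk_X$ using \eqref{eq:total-decomp} for both choices. We group the result as a function of $(\mu^\vee_{\mathrm{ex}},\nu_{\mathrm{ex}})$ together with data depending only on the classes. The expected mechanism is that every cross term involving an exact part and a class-dependent boundary $\delta=(s-\widetilde s)([\cdot])$ can be rewritten through \eqref{eq:intersection-Stokes} and the identities $\partial K=\mathrm{id}$. The typical term is
\[
I^{(p+1)}\bigl(\mu^\vee_{\mathrm{ex}},K_p\delta\bigr)=I^{(p)}\bigl(K^\vee\mu^\vee_{\mathrm{ex}},\delta\bigr),
\]
obtained exactly as in the derivation of \eqref{eq:generalized-linking-equivalent}. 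Such terms should cancel against the corresponding change in the second term. Once these cancellations are done, the surviving terms are bilinear expressions in $[\mu^\vee]$ and $[\nu]$ through $s,\widetilde s,s^\vee,\widetilde s^\vee$, plus $\widetilde B_H-B_H$. We would collect them into $C_H$. Its bilinearity follows because each ingredient is linear.

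This cancellation bookkeeping is the main obstacle. The first term pairs $K_p(\nu_{\mathrm{ex}})$ with the full $\mu^\vee$, while the second pairs $K^\vee(\mu^\vee_{\mathrm{ex}})$ with $\nu_{\mathrm h}$ only. The asymmetry must be tracked so that the terms with $\mu^\vee_{\mathrm{ex}}$ paired against $\delta$ really do cancel. The cleanest route is to start from the decomposition \eqref{eq:linking-decomposition}. There $\Lk^{\mathrm{ex}}$ changes by $I^{(p)}\bigl(K^\vee\mu^\vee_{\mathrm{ex}},\delta_\nu\bigr)+\ldots$, and the middle terms absorb these changes. The claim that the two pairings agree when one argument is a boundary then follows at once: $C_H$ vanishes when either class is $0$. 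Alternatively, it follows directly from \eqref{eq:primal-exact} and \eqref{eq:dual-exact}, whose right-hand sides involve no sections and no $B_H$.

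For item (ii), \eqref{eq:boundary-linking} already gives $\Lk_X(\mu^\vee,\nu)=I^{(p+1)}(\mu^\vee,K_p\nu)=I^{(p)}(K^\vee\mu^\vee,\nu)$ with no dependence on sections or on $B_H$. Independence of the filling follows from a Stokes argument. If $\partial\Sigma=\partial\Sigma'=\nu$, then $\Sigma-\Sigma'$ is a cycle. Writing $\mu^\vee=\partial T^\vee$, \eqref{eq:intersection-Stokes} gives
\[
I^{(p+1)}(\mu^\vee,\Sigma-\Sigma')=I^{(p)}\bigl(T^\vee,\partial(\Sigma-\Sigma')\bigr)=0.
\]
The same chain $I^{(p+1)}(\partial T^\vee,\Sigma)=I^{(p)}(T^\vee,\partial\Sigma)$ shows that the two expressions agree, and symmetrically that the result is independent of $T^\vee$. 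None of these steps uses splittings or division, so the argument works over $\mathbb Z_N$ for every $N\ge2$.
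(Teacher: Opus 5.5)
Your proposal is correct and follows essentially the paper's own argument. For (i), the only exact-part cross terms in $\widetilde{\Lk}_X-\Lk_X$ are $-I_X^{(p+1)}(\mu^\vee_{\mathrm{ex}},K_p a)$ from the first term and $+I_X^{(p)}(K^\vee_{n-p-1}\mu^\vee_{\mathrm{ex}},a)$ from the second, with $a=(\widetilde s_p-s_p)([\nu])$; they cancel by \eqref{eq:intersection-Stokes} as you anticipate, leaving the class-only expression \eqref{eq:choice-proof}, and item (ii) is the same filling-independence Stokes argument as in the paper.
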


\begin{proof}
	Set
	\begin{equation}
		\begin{aligned}
			\alpha&:=\widetilde s_p-s_p:H_p(X)\to Z_p(X),
			\\
			\beta&:=\widetilde s^\vee_{n-p-1}-s^\vee_{n-p-1}:H_{n-p-1}(X^\vee)\to Z_{n-p-1}(X^\vee),
		\end{aligned}
		\label{eq:secDiffs}
	\end{equation}
	and
	\begin{equation}
		C_H^{0}:=\widetilde B_H-B_H:
		H_{n-p-1}(X^\vee)\times H_p(X)\to \mathbb Z_N.
	\end{equation}
	Since both \(s_p\) and \(\widetilde s_p\) are sections of
	\(Z_p(X)\twoheadrightarrow H_p(X)\), the image of \(\alpha\) lies in \(B_p(X)\).
	Likewise, the image of \(\beta\) lies in \(B_{n-p-1}(X^\vee)\).
	
	Writing the homological and exact parts defined by the two choices of sections, we have
	\begin{equation}
		\begin{aligned}
			\widetilde\nu_{\mathrm h}&=\nu_{\mathrm h}+\alpha([\nu]),
			&\qquad
			\widetilde\nu_{\mathrm ex}&=\nu_{\mathrm ex}-\alpha([\nu]),
			\\
			\widetilde\mu^\vee_{\mathrm h}&=\mu^\vee_{\mathrm h}+\beta([\mu^\vee]),
			&
			\widetilde\mu^\vee_{\mathrm ex}&=\mu^\vee_{\mathrm ex}-\beta([\mu^\vee]).
		\end{aligned}
	\end{equation}
	Substituting these into the definition of \(\widetilde{\Lk}_X\), expanding, and using
	\eqref{eq:intersection-Stokes} to move fillings of exact terms from one side to the
	other, we obtain
	\begin{equation}
		\begin{aligned}
			\widetilde{\Lk}_X(\mu^\vee,\nu)-\Lk_X(\mu^\vee,\nu)
			=&\;
			-I_X^{(p+1)}\left(\mu^\vee_{\mathrm h},K_p(\alpha([\nu]))\right)
			-I_X^{(p)}\left(K^\vee_{n-p-1}(\beta([\mu^\vee])),\nu_{\mathrm h}\right)
			\\
			&\;
			-I_X^{(p)}\left(K^\vee_{n-p-1}(\beta([\mu^\vee])),\alpha([\nu])\right)
			+C_H^{0}([\mu^\vee],[\nu]).
		\end{aligned}
		\label{eq:choice-proof}
	\end{equation}
	Now \(\mu^\vee_{\mathrm h}=s^\vee_{n-p-1}([\mu^\vee])\) and
	\(\nu_{\mathrm h}=s_p([\nu])\), so the right-hand side depends only on
	\([\mu^\vee]\) and \([\nu]\), and it is bilinear because all maps appearing in
	\eqref{eq:choice-proof} are linear or bilinear. This proves
	\eqref{eq:choice-dependence} and therefore
	\ref{prop:choice-dependence-item1}. If either \([\mu^\vee]=0\) or \([\nu]=0\),
	then all terms on the right-hand side of \eqref{eq:choice-proof} vanish, so the two
	pairings agree whenever either argument is a boundary.
	
	For \ref{propItem:bdryLink}, if
	\(\mu^\vee\in B_{n-p-1}(X^\vee)\) and \(\nu\in B_p(X)\), then
	\eqref{eq:boundary-linking} gives
	\begin{equation}
		\Lk_X(\mu^\vee,\nu)
		=
		I_X^{(p+1)}\left(\mu^\vee,\Sigma\right)
		=
		I_X^{(p)}\left(T^\vee,\nu\right),
		\qquad
		\partial\Sigma=\nu,\ \partial T^\vee=\mu^\vee.
	\end{equation}
	By \ref{prop:choice-dependence-item1}, this value is independent of the chosen
	sections and homological correction term, because both homology classes vanish.
	It remains only to check independence of the fillings. If
	\(\Sigma,\Sigma'\in\mathcal C_{p+1}(X)\) both satisfy
	\(\partial\Sigma=\partial\Sigma'=\nu\), then \(\Sigma-\Sigma'\in Z_{p+1}(X)\), so
	using \(\mu^\vee=\partial T^\vee\) and \eqref{eq:intersection-Stokes},
	\begin{equation}
		I_X^{(p+1)}\left(\mu^\vee,\Sigma-\Sigma'\right)
		=
		I_X^{(p+1)}\left(\partial T^\vee,\Sigma-\Sigma'\right)
		=
		I_X^{(p)}\left(T^\vee,\partial(\Sigma-\Sigma')\right)
		=
		0.
	\end{equation}
	Hence \(I_X^{(p+1)}(\mu^\vee,\Sigma)\) is independent of \(\Sigma\). The argument for
	\(I_X^{(p)}(T^\vee,\nu)\) is identical. Therefore the common value
	\eqref{eq:boundary-linking-canonical-merged} is canonical. Since its definition uses
	only ordinary filling-intersection of two boundaries, it does not depend on the
	prime-\(N\) hypothesis introduced to construct a generalized pairing on arbitrary
	cycles.
\end{proof}

Proposition~\ref{prop:choice-dependence} identifies the precise source of the ambiguity in
the generalized linking pairing. Changing the cycle-level convention does not alter the
ordinary linking content of boundary interactions; it changes only a purely homological
bilinear correction.

\section{Supporting computations}
\label{app:computeTr}

This appendix gives the finite-\(M\) derivation of the spacetime representation used in
\S\ref{sec:quantum-classical}.  The calculation is completely finite-dimensional.
Its purpose is to justify the local kernels, the spacetime \(P\)-cochain packaging, and the
effect of the spatial and Euclidean twist insertions.

Throughout this appendix we use the \(\hat Z\)-eigenbasis of the Hilbert space
\begin{equation}
	\mathcal H=\bigotimes_{b\in C_P(\Lambda)}\mathcal H_b,
	\qquad
	\ket{s}
	=
	\bigotimes_{b\in C_P(\Lambda)}\ket{s_b},
	\qquad
	s\in \Omega^P(\Lambda)\cong \mathbb Z_N^{C_P(\Lambda)}.
\end{equation}
In this basis,
\begin{equation}
	\hat Z_b\ket{s}=\omega^{s_b}\ket{s},
	\qquad
	\hat X_b\ket{s}
	=
	\ket{s+\delta_b},
	\qquad
	\omega=e^{2\pi i/N},
	\label{eq:Z-basis}
\end{equation}
where \(\delta_b\in \Omega^P(\Lambda)\) is the cochain whose value on \(b\) is \(1\) and
whose value on every other \(P\)-cell is \(0\).

\subsection{Proof of Prop.~\ref{prop:spatial-WT}}\label{app:WTalgebra}
\begin{enumerate}
	\item It is enough to show that \(\hat{\mathcal W}_\nu\) and
	\(\hat{\mathcal T}_{\mu^\vee}\) commute with all stabilizer generators
	\(\hat A_a\) and \(\hat B_c\).
	
	First, \(\hat{\mathcal W}_\nu\) is a product of \(\hat Z\)-operators, so it commutes
	with every \(\hat B_c\). For \(a\in C_{P-1}(\Lambda)\), using the Weyl relations \eqref{eq:Weyl} we get
	\begin{equation}
		\hat A_a\hat{\mathcal W}_\nu
		=
		\omega^{-(-1)^P\sum_{b\in C_P(\Lambda)}
			\varepsilon(b,a)\nu_b}
		\hat{\mathcal W}_\nu\hat A_a .
	\end{equation}
	The exponent is the coefficient of \(a\) in \(\partial\nu\) (see \eqref{eq:boundary}). Since
	\(\nu\in Z_P(\Lambda)\), we have \(\partial\nu=0\), and therefore $[\hat A_a,\hat{\mathcal W}_\nu]=0$.
	
	Similarly, \(\hat{\mathcal T}_{\mu^\vee}\) is a product of \(\hat X\)-operators, so it
	commutes with every \(\hat A_a\). Set $\eta:=\vartheta_\Lambda(\mu^\vee)\in \Omega^P(\Lambda)$. For \(c\in C_{P+1}(\Lambda)\), using the Weyl relations again, we find
	\begin{equation}
		\hat B_c\hat{\mathcal T}_{\mu^\vee}
		=
		\omega^{(-1)^P\sum_{b\in C_P(\Lambda)}
			\varepsilon(c,b)\eta_b}
		\hat{\mathcal T}_{\mu^\vee}\hat B_c
		=
		\omega^{(-1)^P(d\eta)_c}
		\hat{\mathcal T}_{\mu^\vee}\hat B_c .
	\end{equation}
	By the primal-dual intertwining relation (Prop.~\ref{prop:intertwine}\ref{prop:vartheta-intertwines}),
	\begin{equation}
		d\eta
		=
		d\,\vartheta_\Lambda(\mu^\vee)
		=
		\vartheta_\Lambda(\partial\mu^\vee)
		=
		0,
	\end{equation}
	because \(\mu^\vee\in Z_{d-P}(\Lambda^\vee)\). Thus $[\hat B_c,\hat{\mathcal T}_{\mu^\vee}]=0$.
	
	Therefore both \(\hat{\mathcal W}_\nu\) and \(\hat{\mathcal T}_{\mu^\vee}\) preserve the
	common \(+1\)-eigenspace of the stabilizers, namely
	\(\mathcal H_{\mathrm{code}}\), proving Prop.~\ref{prop:WTpreserves}.
	
	\item If \(\nu=\partial C\), with \(C=\sum_c C_c\,c\), then
	$
	\nu_b
	=
	\sum_{c\in C_{P+1}(\Lambda)}C_c\varepsilon(c,b)
	$.
	Therefore
	\begin{equation}
		\hat{\mathcal W}_\nu
		=
		\prod_{b\in C_P(\Lambda)}\hat Z_b^{\,\nu_b}
		=
		\prod_{c\in C_{P+1}(\Lambda)}\hat B_c^{\,C_c}.
	\end{equation}
	This proves Prop.~\ref{prop:spatial-WT}\ref{prop:Wtrivial}.
	
	\item Similarly, if \(\mu^\vee=\partial\Xi^\vee\) for some
	\((d-P+1)\)-chain \(\Xi^\vee\), set $\eta:=\vartheta_\Lambda(\Xi^\vee)\in\Omega^{P-1}(\Lambda)$. Then
	by the primal-dual intertwining relation, $\vartheta_\Lambda(\mu^\vee)=d\eta$.
	Hence
	\begin{equation}
		\hat{\mathcal T}_{\mu^\vee}
		=
		\prod_{b\in C_P(\Lambda)}
		\hat X_b^{\,(-1)^P(d\eta)_b}
		=
		\prod_{a\in C_{P-1}(\Lambda)}
		\hat A_a^{\,\eta_a},
	\end{equation}
	which proves Prop.~\ref{prop:spatial-WT}\ref{prop:Ttrivial}.
	
	\item Finally, using the Weyl relation \eqref{eq:Weyl} we get
	\begin{equation}
		\hat X_b^{\,(-1)^P I_\Lambda(\mu^\vee,b)}
		\hat Z_b^{\,\nu_b}
		=
		\omega^{-(-1)^P I_\Lambda(\mu^\vee,b)\nu_b}
		\hat Z_b^{\,\nu_b}
		\hat X_b^{\,(-1)^P I_\Lambda(\mu^\vee,b)}.
	\end{equation}
	Multiplying over \(b\in C_P(\Lambda)\) gives Prop.~\ref{prop:spatial-WT}\ref{prop:WTcommutator}
	\begin{equation}
		\hat{\mathcal T}_{\mu^\vee}\hat{\mathcal W}_\nu
		=
		\omega^{-(-1)^P\sum_b I_\Lambda(\mu^\vee,b)\nu_b}
		\hat{\mathcal W}_\nu\hat{\mathcal T}_{\mu^\vee}
		=
		\omega^{-(-1)^P I_\Lambda(\mu^\vee,\nu)}
		\hat{\mathcal W}_\nu\hat{\mathcal T}_{\mu^\vee}.
	\end{equation}
\end{enumerate}

\subsection{Untwisted Trotter kernels}
\label{app:vanillaTr}

The fixed-\(M\) Trotterized trace is \eqref{eq:ZM-def}
\begin{equation}
	Z_M
	=
	\tr_{\mathcal H}
	\left[
	\left(
	e^{-\beta\hat H_z/M}e^{-\beta\hat H_x/M}
	\right)^M
	\right].
\end{equation}
Inserting \(M\) resolutions of the identity in the \(\hat Z\)-basis and using the
periodic convention \(s(M)=s(0)\), we obtain
\begin{equation}
	Z_M
	=
	\sum_{s(0),\dots,s(M-1)\in\Omega^P(\Lambda)}
	\prod_{i=0}^{M-1}
	\matrixel{s(i)}{e^{-\beta\hat H_z/M}}{s(i)}
	\matrixel{s(i)}{e^{-\beta\hat H_x/M}}{s(i+1)}.
	\label{eq:ZM-slices}
\end{equation}
We now compute the two one-slice kernels.

\subsubsection{The diagonal \(Z\)-kernel}

For \(c\in C_{P+1}(\Lambda)\), the $Z$-stabilizer is $\hat B_c=\prod_{b\subset c}\hat Z_b^{\,\varepsilon(c,b)}$ (see \eqref{eq:AB}).
Therefore
\begin{equation}
	\hat B_c\ket{s}
	=
	\omega^{(ds)_c}\ket{s},
	\qquad
	(ds)_c
	=
	\sum_{b\in C_P(\Lambda)}\varepsilon(c,b)s_b.
	\label{eq:Bc-eigenvalue}
\end{equation}
It follows that the $Z$-projector from \eqref{eq:projectors} acts on the $Z$-eigenbasis \eqref{eq:Z-basis} as
\begin{equation}
	\hat{\mathcal B}_c\ket{s}
	=
	\frac1N\sum_{m\in\mathbb Z_N}\omega^{m(ds)_c}\ket{s}
	=
	\delta_N\left((ds)_c\right)\ket{s}.
	\label{eq:Bproj-eigenvalue}
\end{equation}
The \(Z\)-source term (see \eqref{eq:H1}) is also diagonal:
\begin{equation}
	\exp\left(
	\frac{\beta}{M}\sum_{n\in\mathbb Z_N}h_b^{(n)}\hat Z_b^n
	\right)\ket{s}
	=
	\exp\left(
	\frac{\beta}{M}\sum_{n\in\mathbb Z_N}h_b^{(n)}\omega^{n s_b}
	\right)\ket{s}.
	\label{eq:h-source-diagonal}
\end{equation}
Using the definition of \(\hat H_z\) \eqref{eq:Hz}, we obtain
\begin{equation}
	\matrixel{s(i)}{e^{-\beta\hat H_z/M}}{s(i)}
	=
	\prod_{c\in C_{P+1}(\Lambda)}
	W_{c(i)}^{\parallel}\left((ds(i))_c\right)
	\prod_{b\in C_P(\Lambda)}
	V_{b(i)}^{\parallel}\left(s(i)_b\right),
	\label{eq:Hz-kernel}
\end{equation}
where the weight functions are defined in \eqref{eq:parallel-weights} as
\begin{subequations}
	\begin{align}
		W_{c(i)}^{\parallel}(x)
		&=
		\exp\left(\frac{\beta K_c}{M}\delta_N(x)\right)
		=
		1+\left(e^{\beta K_c/M}-1\right)\delta_N(x),
		\\
		V_{b(i)}^{\parallel}(x)
		&=
		\exp\left(
		\frac{\beta}{M}\sum_{n\in\mathbb Z_N}h_b^{(n)}\omega^{nx}
		\right).
	\end{align}
\end{subequations}

\subsubsection{The off-diagonal \(X\)-kernel}

We next compute the kernel of \(e^{-\beta\hat H_x/M}\). Since all terms in \(\hat H_x\)
commute, we may separate the \(\hat{\mathcal A}\)-projector part from the local
\(\hat X\)-source part.

For each \(a\in C_{P-1}(\Lambda)\), the operator \(\hat{\mathcal A}_a\) \eqref{eq:projectors} is a projector, so
\begin{equation}
	e^{(\beta J_a/M)\hat{\mathcal A}_a}
	=
	1+\left(e^{\beta J_a/M}-1\right)\hat{\mathcal A}_a
	=
	\sum_{n_a\in\mathbb Z_N}
	\left[
	\delta_N(n_a)+\frac{e^{\beta J_a/M}-1}{N}
	\right]
	\hat A_a^{\,n_a}.
	\label{eq:projector-exp-sum}
\end{equation}
Using $\hat A_a=\prod_{b\supset a}\hat X_b^{\,(-1)^P\varepsilon(b,a)}$,
we get, for \(n\in\Omega^{P-1}(\Lambda)\),
\begin{equation}
	\prod_{a\in C_{P-1}(\Lambda)}\hat A_a^{\,n_a}
	=
	\prod_{b\in C_P(\Lambda)}
	\hat X_b^{\,(-1)^P(dn)_b},
	\qquad
	\prod_{a\in C_{P-1}(\Lambda)}\hat A_a^{\,n_a}\ket{s}
	=
	\ket{s+(-1)^Pdn}.
	\label{eq:A-product-dn}
\end{equation}
Therefore
\begin{equation}
	\begin{aligned}
		\matrixel{t}{
			e^{\frac{\beta}{M}\sum_{a\in C_{P-1}(\Lambda)}J_a\hat{\mathcal A}_a}
		}{s}
		=
		\sum_{n\in\Omega^{P-1}(\Lambda)}
		\prod_{a\in C_{P-1}(\Lambda)}
		\left[
		\delta_N(n_a)+\frac{e^{\beta J_a/M}-1}{N}
		\right]
		\\
		\times
		\prod_{b\in C_P(\Lambda)}
		\delta_N\left(t_b-s_b-(-1)^P(dn)_b\right).
	\end{aligned}
	\label{eq:HA-kernel}
\end{equation}

For the local \(X\)-source, introduce the spectral projectors of \(\hat X\):
\begin{equation}
	\hat\Pi_j^{(X)}
	:=
	\frac1N\sum_{n=0}^{N-1}\omega^{-jn}\hat X^n,
	\qquad
	j\in\mathbb Z_N.
	\label{eq:X-projectors}
\end{equation}
They satisfy
\begin{equation}
	\hat X\hat\Pi_j^{(X)}=\omega^j\hat\Pi_j^{(X)},
	\qquad
	\sum_{j\in\mathbb Z_N}\hat\Pi_j^{(X)}=\hat 1,
	\qquad
	\matrixel{s_b}{\hat\Pi_j^{(X)}}{t_b}
	=
	\frac1N\omega^{j(t_b-s_b)}.
	\label{eq:PiX-proj}
\end{equation}
Consequently,
\begin{align}
	&\matrixel{s}{
		e^{\frac{\beta}{M}\sum_{b\in C_P(\Lambda)}
			\sum_{n\in\mathbb Z_N}g_b^{(n)}\hat X_b^n}
	}{t}
	=
	\prod_{b\in C_P(\Lambda)}
	\sum_{j\in\mathbb Z_N}
	\exp\left(
	\frac{\beta}{M}\sum_{n\in\mathbb Z_N}g_b^{(n)}\omega^{nj}
	\right)
	\frac{\omega^{j(t_b-s_b)}}{N}.
	\label{eq:HXsource-kernel}
\end{align}

Convolving \eqref{eq:HA-kernel} and \eqref{eq:HXsource-kernel}, and using the
delta-functions to eliminate the intermediate field \(t\), gives
\begin{equation}
	\begin{aligned}
		\matrixel{s(i)}{e^{-\beta\hat H_x/M}}{s(i+1)}
		=&\;
		\sum_{n(i)\in\Omega^{P-1}(\Lambda)}
		\prod_{b\in C_P(\Lambda)}
		W_{\underline b(i)}^{\perp}
		\left(
		(dn(i))_b+(-1)^P\left(s(i+1)_b-s(i)_b\right)
		\right)
		\\
		&\; \hspace{5em} \times 
		\prod_{a\in C_{P-1}(\Lambda)}
		V_{\underline a(i)}^{\perp}\left(n(i)_a\right),
	\end{aligned}
	\label{eq:Hx-kernel}
\end{equation}
where the weight functions are defined in \eqref{eq:perp-weights} as
\begin{subequations}
	\begin{align}
		V_{\underline a(i)}^{\perp}(x)
		&=
		\delta_N(x)+\frac{e^{\beta J_a/M}-1}{N},
		\label{eq:Vperp}
		\\
		W_{\underline b(i)}^{\perp}(x)
		&=
		\sum_{j\in\mathbb Z_N}
		\exp\left(
		\frac{\beta}{M}\sum_{n\in\mathbb Z_N}g_b^{(n)}\omega^{nj}
		\right)
		\frac{\omega^{(-1)^Pjx}}{N}.
		\label{eq:Wperp}
	\end{align}
\end{subequations}

\subsubsection{Packaging into a spacetime \(P\)-cochain}
\label{app:ns2phi}

Define a spacetime \(P\)-cochain \(\phi\in\Omega^P(\underline\Lambda)\) by
\begin{equation}
	\phi_{b(i)}:=s(i)_b,
	\qquad
	\phi_{\underline a(i)}:=n(i)_a,
	\label{eq:phi-def}
\end{equation}
where \(b(i)=b\times\{i\}\) is a horizontal \(P\)-cell and
\(\underline a(i)=a\times[i,i+1]\) is a vertical \(P\)-cell, as defined in \eqref{eq:horizontalCell} and \eqref{eq:verticalCell} respectively.  Substituting \eqref{eq:Hz-kernel} and \eqref{eq:Hx-kernel} into
\eqref{eq:ZM-slices} and identifying the spacetime differential \eqref{eq:D-components} in the arguments of $W$ we get the untwisted spacetime partition function
\begin{equation}
	Z_M
	=
	\sum_{\phi\in\Omega^P(\underline\Lambda)}
	\prod_{c\in C_{P+1}(\underline\Lambda)}
	W_c\left((D\phi)_c\right)
	\prod_{u\in C_P(\underline\Lambda)}
	V_u(\phi_u).
	\label{eq:untwisted-spacetime}
\end{equation}
Here \(W_c\) denotes \(W^\parallel\) on horizontal \((P+1)\)-cells and \(W^\perp\) on
vertical \((P+1)\)-cells; similarly \(V_u\) denotes \(V^\parallel\) on horizontal
\(P\)-cells and \(V^\perp\) on vertical \(P\)-cells.

\subsection{Spatial Wilson and 't Hooft insertions}
\label{app:TrSp}

We next insert the spatial Wilson operator and 't Hooft operators \eqref{eq:quantumWT}.  The
decorated trace with only these spatial insertions is
\begin{equation}
	Z_M^{\mathrm{sp}}(\mu^\vee,\nu)
	:=
	\tr_{\mathcal H}
	\left[
	e^{-\beta\hat H_z/M}\,
	\hat{\mathcal W}_\nu\,
	e^{-\beta\hat H_x/M}\,
	\hat{\mathcal T}_{\mu^\vee}\,
	\left(e^{-\beta\hat H_z/M}e^{-\beta\hat H_x/M}\right)^{M-1}
	\right].
	\label{eq:spatially-decorated}
\end{equation}

The Wilson operator is diagonal:
\begin{equation}
	\matrixel{s(0)}{\hat{\mathcal W}_\nu}{s(0)}
	=
	\prod_{b\in C_P(\Lambda)}\omega^{\nu_b s(0)_b}
	=
	\chi\left(\int_\nu s(0)\right).
	\label{eq:spatial-W-factor}
\end{equation}
Equivalently, after packaging into \(\phi\), this is $\chi\left(\int_{\nu(0)}\phi\right)$,
where \(\nu(0)\in Z_P(\underline\Lambda)\) is the horizontal lift of \(\nu\) to the initial
time slice.

Using the definition of the 't Hooft operator $\hat{\mathcal T}_{\mu^\vee}
=
\prod_{b\in C_P(\Lambda)}
\hat X_b^{\,(-1)^PI_\Lambda(\mu^\vee,b)}$, and noting that $I_\Lambda(\mu^\vee, b) = \mu^\vee_{\theta_{\Lambda}^{-1}(b)} = \vartheta_{\Lambda}(\mu^\vee)_b$ (see \eqref{eq:intersection-basis}, \eqref{eq:vartheta}, and \eqref{eq:PX}) we find
\begin{equation}
	\hat{\mathcal T}_{\mu^\vee}\ket{s}
	=
	\ket{s+(-1)^P\vartheta_\Lambda(\mu^\vee)}.
	\label{eq:T-shift}
\end{equation}
Since the \(\hat X\)-kernel depends on \(s(i)\) and \(s(i+1)\) through the combination
\begin{equation}
	(dn(i))_b+(-1)^P\left(s(i+1)_b-s(i)_b\right),
\end{equation}
the insertion of \(\hat{\mathcal T}_{\mu^\vee}\) on the first slab changes the vertical
argument by
\begin{equation}
	(dn(i))_b+(-1)^P\left(s(i+1)_b-s(i)_b\right)
	\mapsto
	(dn(i))_b+(-1)^P\left(s(i+1)_b-s(i)_b\right)
	+\delta_{i,0}(\vartheta_\Lambda(\mu^\vee))_b.
	\label{eq:T-vertical-shift}
\end{equation}

Let \(\mu^\vee(0)\in Z_{d-P}(\underline\Lambda^\vee)\) denote the dual spacetime cycle
supported on the first dual slab and characterized by
\begin{equation}
	\left(\vartheta_{\underline\Lambda}(\mu^\vee(0))\right)_{\underline b(0)}
	=
	(\vartheta_\Lambda(\mu^\vee))_b,
	\qquad
	\left(\vartheta_{\underline\Lambda}(\mu^\vee(0))\right)_{c(i)}=0,
	\qquad
	\left(\vartheta_{\underline\Lambda}(\mu^\vee(0))\right)_{\underline b(i)}=0
	\quad(i\neq 0).
	\label{eq:mu0-cocycle}
\end{equation}
Then \eqref{eq:T-vertical-shift} is exactly the replacement
\begin{equation}
	D\phi
	\mapsto
	D\phi+\vartheta_{\underline\Lambda}(\mu^\vee(0)).
\end{equation}
Hence
\begin{equation}
	Z_M^{\mathrm{sp}}(\mu^\vee,\nu)
	=
	\sum_{\phi\in\Omega^P(\underline\Lambda)}
	\chi\left(\int_{\nu(0)}\phi\right)
	\prod_{c\in C_{P+1}(\underline\Lambda)}
	W_c\left((D\phi+\vartheta_{\underline\Lambda}(\mu^\vee(0)))_c\right)
	\prod_{u\in C_P(\underline\Lambda)}
	V_u(\phi_u).
	\label{eq:spatial-insertions-spacetime}
\end{equation}

\subsection{Euclidean twists}
\label{app:TrTwists}
\subsubsection{Electric twist}
\label{app:electric-twist}

For $\alpha=\sum_{a \in C_{P-1}(\Lambda)} \alpha_a a \in Z_{P-1}(\Lambda)$, using the orthogonality relations \eqref{eq:Aproj-relations} of the spectral projectors $\hat\Pi_{a,j}^{(A)} = \frac1N\sum_{m=0}^{N-1}\omega^{-jm}\hat A_a^{\,m}$ and the definition of the Euclidean twist operator \eqref{eq:electric-twist-operator} we get
\begin{align}
	\hat U_{e,a}(\alpha_a)e^{(\beta J_a/M)\hat{\mathcal A}_a}
	&=
	\sum_{j\in\mathbb Z_N}\lambda_{a,j+\alpha_a}\hat\Pi_{a,j}
	\nonumber\\
	&=
	\sum_{n_a\in\mathbb Z_N}
	\left[
	\delta_N(n_a)+\frac{e^{\beta J_a/M}-1}{N}
	\right]
	\omega^{\alpha_a n_a}\hat A_a^{\,n_a}.
	\label{eq:Ue-shift}
\end{align}
The factor \(\omega^{\alpha_a n_a}\) is harmless on the \(\delta_N(n_a)\) term, since
\(\delta_N(n_a)\omega^{\alpha_a n_a}=\delta_N(n_a)\). Therefore, relative to the untwisted
\(X\)-kernel, inserting \(\hat U_e(\alpha)\) on the \(i\)-th slab contributes the factor
\begin{equation}
	\prod_{a\in C_{P-1}(\Lambda)}\omega^{\alpha_a n(i)_a}
	=
	\chi\Biggl(\sum_{a\in C_{P-1}(\Lambda)}\alpha_a n(i)_a\Biggr).
	\label{eq:electric-slab-factor}
\end{equation}

Under the spacetime packaging \eqref{eq:phi-def}, this becomes $\chi\left(\int_{\underline\alpha(i)}\phi\right)$.
Multiplying over all \(M\) slabs gives
\begin{equation}
	\prod_{i=0}^{M-1}
	\chi\left(\int_{\underline\alpha(i)}\phi\right)
	=
	\chi\left(\int_{\Sus(\alpha)}\phi\right),
	\label{eq:Sus-alpha-factor}
\end{equation}
where $\Sus(\alpha)=\sum_{i=0}^{M-1}\underline\alpha(i)\in Z_P(\underline\Lambda)$ as in \eqref{eq:electric-suspension}.

\subsubsection{Magnetic twist}
\label{app:magnetic-twist}

For $c\in C_{P+1}(\Lambda)$ the spectral projectors $\hat\Pi_{c,j}^{(B)} = \frac{1}{N} \sum_{m=0}^{N-1} \omega^{-jm} \hat B_c^m$ act on the $Z$-eigenbasis \eqref{eq:Z-basis} as
\begin{equation}
	\hat\Pi_{c,j}^{(B)}\ket{s}
	=
	\delta_N\left((ds)_c-j\right)\ket{s}.
\end{equation}
Therefore, for $\beta^\vee\in Z_{d-P-1}(\Lambda^\vee)$ with $\beta_c = (\vartheta_\Lambda(\beta^\vee))_c$, the  magnetic twist operator \eqref{eq:magnetic-twist-operator} acts on the $Z$-eigenbasis as
\begin{equation}
	\hat U_{m,c}(\beta_c)e^{(\beta K_c/M)\hat{\mathcal B}_c}\ket{s}
	=
	\rho_{c,(ds)_c+\beta_c}\ket{s}
	=
	\exp\left(
	\frac{\beta K_c}{M}\delta_N\left((ds)_c+\beta_c\right)
	\right)\ket{s}.
	\label{eq:Um-shift}
\end{equation}
Hence
\begin{equation}
	\matrixel{s(i)}{\hat U_m(\beta^\vee)e^{-\beta\hat H_z/M}}{s(i)}
	=
	\prod_{c\in C_{P+1}(\Lambda)}
	W_{c(i)}^{\parallel}
	\left((ds(i))_c+(\vartheta_\Lambda(\beta^\vee))_c\right)
	\prod_{b\in C_P(\Lambda)}
	V_{b(i)}^{\parallel}\left(s(i)_b\right).
	\label{eq:magnetic-twist-zkernel}
\end{equation}

Thus the magnetic twist is precisely the horizontal cocycle shift
\begin{equation}
	D\phi
	\mapsto
	D\phi+\vartheta_{\underline\Lambda}(\Sus(\beta^\vee)),
\end{equation}
where the suspended dual cycle is defined as in \eqref{eq:eta-hor-beta-components}.

\subsection{Fully decorated trace}
\label{app:TrFull}

Consider the fully decorated trace \eqref{eq:decoratedTr}. All \(X\)-type operators appearing in
\(\hat{\mathcal T}_{\mu^\vee}\hat U_e(\alpha)e^{-\beta\hat H_x/M}\) commute with one
another, and all \(Z\)-type operators appearing in
\(\hat{\mathcal W}_\nu\hat U_m(\beta^\vee)e^{-\beta\hat H_z/M}\) commute with one another.
Thus the preceding kernel calculations apply directly.

The decorated \(X\)-kernel is
\begin{align}
	&\matrixel{s(i)}
	{
		\hat{\mathcal T}_{\mu^\vee}^{\,\delta_{i,0}}
		\hat U_e(\alpha)e^{-\beta\hat H_x/M}
	}
	{s(i+1)}
	\nonumber\\
	&\quad=
	\sum_{n(i)\in\Omega^{P-1}(\Lambda)}
	\chi\left(\sum_{a\in C_{P-1}(\Lambda)}\alpha_a n(i)_a\right)
	\prod_{a\in C_{P-1}(\Lambda)}
	V_{\underline a(i)}^{\perp}\left(n(i)_a\right)
	\nonumber\\
	&\qquad\qquad\times
	\prod_{b\in C_P(\Lambda)}
	W_{\underline b(i)}^{\perp}
	\left(
	(dn(i))_b
	+(-1)^P\left(s(i+1)_b-s(i)_b\right)
	+\delta_{i,0}(\vartheta_\Lambda(\mu^\vee))_b
	\right).
	\label{eq:decorated-x-kernel}
\end{align}
The decorated \(Z\)-kernel is
\begin{align}
	&\matrixel{s(i)}
	{
		\hat{\mathcal W}_\nu^{\,\delta_{i,0}}
		\hat U_m(\beta^\vee)e^{-\beta\hat H_z/M}
	}
	{s(i)}
	\nonumber\\
	&\quad=
	\chi\left(\int_\nu s(i)\right)^{\delta_{i,0}}
	\prod_{c\in C_{P+1}(\Lambda)}
	W_{c(i)}^{\parallel}
	\left((ds(i))_c+(\vartheta_\Lambda(\beta^\vee))_c\right)
	\prod_{b\in C_P(\Lambda)}
	V_{b(i)}^{\parallel}\left(s(i)_b\right).
	\label{eq:decorated-z-kernel}
\end{align}

Define the total electric and magnetic spacetime cycles
\begin{equation}
	q_e(\nu,\alpha)
	:=
	\nu(0)+\Sus(\alpha)\in Z_P(\underline\Lambda),
	\label{eq:total-electric-cycle}
\end{equation}
and
\begin{equation}
	q_m(\mu^\vee,\beta^\vee)
	:=
	\mu^\vee(0)+\Sus(\beta^\vee)\in Z_{d-P}(\underline\Lambda^\vee).
	\label{eq:total-magnetic-cycle}
\end{equation}
The closure of these cycles follows from
\(\partial\nu=0\), \(\partial\alpha=0\), \(\partial\mu^\vee=0\), and
\(\partial\beta^\vee=0\), together with periodicity in the Euclidean-time direction.

Combining \eqref{eq:decorated-x-kernel} and \eqref{eq:decorated-z-kernel}, and then
using the spacetime packaging \eqref{eq:phi-def}, gives the exact background formula
\begin{equation}
	Z_M(\mu^\vee,\nu;\beta^\vee,\alpha)
	=
	\sum_{\phi\in\Omega^P(\underline\Lambda)}
	\chi\left(\int_{q_e(\nu,\alpha)}\phi\right)
	\prod_{c\in C_{P+1}(\underline\Lambda)}
	W_c\left(
	(D\phi+\vartheta_{\underline\Lambda}(q_m(\mu^\vee,\beta^\vee)))_c
	\right)
	\prod_{u\in C_P(\underline\Lambda)}
	V_u(\phi_u).
	\label{eq:decorated-spacetime}
\end{equation}
This is the desired fused form.  The spatial Wilson insertion and the electric Euclidean
twist combine into the single spacetime Wilson factor supported on
\(q_e(\nu,\alpha)\), while the spatial 't Hooft insertion and the magnetic Euclidean twist
combine into the single cocycle shift
\(\vartheta_{\underline\Lambda}(q_m(\mu^\vee,\beta^\vee))\). Setting $\mu^\vee=\nu=\beta^\vee=\alpha=0$ we recover the partition function \eqref{eq:untwisted-spacetime} in the trivial background.

\printbibliography

\end{document}